\newcommand{\eps}{\varepsilon}
\newcommand{\prob}{\mathbb{P}}
\newcommand{\E}{{\mathcal E}}
\newcommand{\C}{{\textsc C}}
\newcommand{\Bin}{\mathrm{Bin}}
\newcommand{\N}{{\mathcal N}}
\newcommand{\Suc}{{\mathrm{SUC}}}
\newcommand{\Tr}{{\mathrm{TRUE}}}
\newcommand{\Fa}{{\mathrm{FALSE}}}
\newcommand{\CP}{{\mathrm{C}}}
\renewcommand{\Pr}[1]{\mathbb{P}\left(#1\right)}
\renewcommand{\E}[1]{\mathbb{E}\left(#1\right)}
\newcommand{\Pcl}{\widetilde{H}_{n,p,k}}
\newtheorem{theorem}{Theorem}[section]
\newtheorem{corollary}[theorem]{Corollary}
\newtheorem{proposition}[theorem]{Proposition}
\newtheorem{lemma}[theorem]{Lemma}
\newtheorem{claim}[theorem]{Claim}
\numberwithin{equation}{section}
\numberwithin{firsttheorem}{section}
\title{On the Insertion Time of Cuckoo Hashing \footnote{This paper was written while the first two authors were affiliated with Max-Planck-Institute for Informatics, and while the third author was on sabbatical leave at the Max-Planck-Institute for Informatics. 
It has been accepted for publication in the \emph{SIAM Journal on Computing}. }}
\author{Nikolaos Fountoulakis \\
\small University of Birmingham, School of Mathematics  \\
\small Edgbaston, B15 2TT, United Kingdom \\
\small \texttt{n.fountoulakis@bham.ac.uk} \\
\and Konstantinos Panagiotou \\ 
\small Ludwig-Maximilians-Universit\"at, Mathematics Institute \\
\small Theresienstr.\ 39, 80333 M\"unchen, Germany \\
\small \texttt{kpanagio@math.lmu.de} \\
\and Angelika Steger \\
\small Institute of Theoretical Computer Science \\
\small ETH, 8092 Zurich, Switzerland \\
\small \texttt{steger@inf.ethz.ch}
}
\date{\today}
\begin{document}

\maketitle
\begin{abstract}
Cuckoo hashing is an efficient technique for creating large hash tables with high
space utilization and guaranteed constant access times. There, each item can be
placed in a location given by any one out of $k$ different hash functions. In this
paper we investigate the \emph{random walk} heuristic for inserting in an
online fashion new items into the hash table. Provided that~$k \ge 3$ and that the
number of items in the table is below (but arbitrarily close) to the theoretically
achievable load threshold, we show a polylogarithmic bound for the maximum insertion
time that holds with probability $1-o(1)$ as the size of the table grows large. 
\end{abstract}

\section{Introduction}

Hash tables are widely used in applications that need efficient data structures supporting the insertion, deletion, and lookup of elements. A key issue in the use of hash-tables is the handling of collisions. One particular technique that has attracted quite a bit of attention is the so-called \emph{cuckoo hashing}, which is based upon the paradigm of the power of many choices
\cite{abku99,mrs00}. 

The term cuckoo hashing was coined by Pagh and Rodler in~\cite{inp:pr01}.  In this work we consider a generalization of it, as it was described by Fotakis, Pagh, Sanders and Spirakis in~\cite{inc:fpss03}.
We are given a table $T$ with~$n$ locations, and we assume that each location can hold one item. Further settings where two or more items can be stored in each location have also been studied, see e.g.\ \cite{ar:csw07,ar:dw07,ar:fr07,inp:fkp11}, but we will not treat
those cases here. Moreover, we are given $k\ge 2$ hash functions~$h_1,\ldots, h_k$ that map any element from a universe~$U$ of items to $k$ (not necessarily distinct) positions in the table~$T$. Each given item can be stored at any position dictated by one out of those $k$ functions.
As in many previous works, we also assume that $h_1,\ldots, h_k$ are \emph{independent} and \emph{uniformly random} functions~$h_i:U\rightarrow T$; this assumption, though not being satisfactory from a practical point of view, is necessary in several of our arguments.

In his survey~\cite{inc:m09} Mitzenmacher outlined a number of open problems that remain to be solved in order to improve our understanding of the power of cuckoo hashing from a theoretical point of view. Among these is the issue of space utilization and the search for good upper bounds for the time needed to insert new elements. The first point was recently solved independently by the first two authors~\cite{fp10,fp10a} as well as Frieze and Melsted~\cite{inc:fp09}. The aim of this paper is to address the second point. Before we present our results we first need to outline the results from \cite{fp10,fp10a,inc:fp09} in more detail. 

\paragraph{Load Thresholds} A natural question in cuckoo hashing is the following. Let us denote by $I \subset U$ the set of available items. As $|I|$ increases, it becomes more and more unlikely that all items can be inserted into the table so that each item $i$ is assigned to one of the~$k$ locations $h_1(i), \dots, h_k(i)$. In other words, if~$|I|$ is ``small'' compared to $n=|T|$, then with high probability\footnote{i.e., with probability $1 - o(1)$ as $n\to \infty$}, there is an assignment to the locations of the table that respects the~$k$ choices of each item. On the other hand, if~$|I|$ becomes ``large'', then such an assignment does not exist with high probability (trivially, this happens at the latest when~$n+1$ items are available). The important question is whether there is a critical size for~$I$ where the probability for the existence of a valid assignment drops abruptly in the limit as $n\to \infty$ from 1 to 0, i.e., whether there is a \emph{load threshold} for cuckoo hashing. More precisely, we say that 
a value $c_k^*$ is the load threshold for cuckoo hashing with $k$ choices for each element if
\begin{equation}
\label{eq:threshold}
	\mathbb{P}
	\left(
	\begin{array}{c}
	\text{there is an assignment of $\lfloor cn\rfloor$ items to a table with} \\
	\text{$n$ locations that respects the choices of all items} \\
	\end{array}
	\right)
	\stackrel{n \to\infty}{\to}
	\begin{cases}
	1, \quad \text{if $c < c_k^*$}, \\
	0, \quad \text{if $c > c_k^*$}
	\end{cases}.
\end{equation}
In the case~$k=2$ there is a natural connection with random graphs. Indeed, we may think of the~$n$ locations of $T$ as the vertices of the graph, and of the items as edges that encode the two choices. If $|I|=m$ we obtain the classical Erd{\H{o}}s-R{\'e}nyi random (multi-)graph~$G_{n,m}^*$. 
The properties of this random graph are essentially those of the random graph $G_{n,m}$ on $n$ vertices and $m$ distinct uniform random edges.
Moreover, it easy to see by applying Hall's Theorem that~$G_{n,m}$ has no subgraph with more edges than vertices if and only if all items can be assigned to one of their preferred locations such that no location is assigned more than one item. It is well-known that the property ``$G_{n,m}$ has a subgraph with more edges than vertices'' coincides with the emergence of a \emph{giant connected component} that contains a linear fraction of the vertices, see e.g.~\cite{b:jlr00}. As the latter is known to happen when~$m$ crosses  $n/2$, we readily obtain that the load threshold for cuckoo hashing  and~$k = 2$ is at~$c_2^* = 1/2$. In other words, at most \emph{half} of the table can be filled in a way that respects the choices of all items.

The cases $k\ge 3$ are very different from $k = 2$. Tight results were obtained independently by the first two authors~\cite{fp10,fp10a}, as well as by Frieze and Melsted~\cite{inc:fp09}. Moreover, Dietzfelbinger et al.~\cite{unp:Dietz09} related the load threshold for cuckoo hashing to the satisfiability threshold of the $k$-XORSAT problem.  
\begin{theorem}
\label{thm:main}
For any integer $k \ge 3$ let $\xi^*$ be the unique solution of the equation
\begin{equation}
\label{eq:kxi}
	k = \frac{\xi^*(1-e^{-\xi^*})}{1-  e^{-\xi^*} - \xi^*e^{-\xi^*}}.
\end{equation}
Then $c_k^* = \frac{\xi^*}{k(1 - e^{-\xi^*})^{k-1}}$ is the load threshold for cuckoo hashing with $k$ choices per element. In particular, if there are $\lfloor cn \rfloor$ items, then the following hold with high probability.
\begin{enumerate}
\item If $c<c_k^*$, then there is an assignment of the items to a table with $n$ locations that respects the choices of all items.
\item If $c>c_k^*$, then such an assignment does not exist.
\end{enumerate}
\end{theorem}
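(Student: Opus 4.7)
The plan is to recast the question as the orientability of a random hypergraph. Identify each of the $m = \lfloor cn \rfloor$ items with the hyperedge consisting of its $k$ chosen locations, yielding the random $k$-uniform hypergraph $H = H(n,m,k)$ on vertex set $T$. An assignment respecting all choices exists iff $H$ admits an \emph{orientation}, i.e.\ an injective map sending each hyperedge to one of its $k$ vertices. Viewing this as a bipartite matching problem between edges and vertices and invoking Hall's theorem, orientability is equivalent to the density condition $e_H(S) \le |S|$ for every $S \subseteq T$, where $e_H(S)$ counts the hyperedges entirely contained in $S$.

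Any vertex of degree at most one can absorb its unique incident edge without affecting orientability, so $H$ is orientable iff its $2$-core $C$ is. The bulk of the proof is then a sharp asymptotic analysis of $C$: I would apply the differential-equation method to the peeling process that iteratively removes vertices of degree at most one (following Molloy--Reed, Kim, and related work), or equivalently perform a configuration-model branching-process analysis. Either route yields a parameter $\xi = \xi(c) > 0$ satisfying the fixed-point equation
\[
	\xi = kc\bigl(1 - e^{-\xi}\bigr)^{k-1},
\]
together with the convergences in probability $v(C)/n \to 1 - e^{-\xi} - \xi e^{-\xi}$ and $e(C)/n \to c(1 - e^{-\xi})^k = \xi(1-e^{-\xi})/k$, where the last equality follows by eliminating $(1-e^{-\xi})^{k-1}$ via the fixed-point relation.

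The critical density is pinned down by the condition $e(C) = v(C)$: dividing the two limits and setting the ratio to one gives $k = \xi^*(1-e^{-\xi^*})/(1 - e^{-\xi^*} - \xi^* e^{-\xi^*})$, which is exactly \eqref{eq:kxi}, and substituting back yields $c_k^* = \xi^*/\bigl(k(1-e^{-\xi^*})^{k-1}\bigr)$. For $c > c_k^*$ one has $e(C) > v(C)$ with high probability, so Hall's condition fails already at $S = V(C)$ and no assignment exists. For $c < c_k^*$ one has $e(C) < v(C)$ with high probability, but this alone only rules out $S = V(C)$; to obtain orientability of $H$ I would complement the core analysis with a first-moment estimate showing that with high probability no subset $S \subseteq T$ of any size violates $e_H(S) \le |S|$.

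The main technical obstacle is precisely this first-moment calculation just below threshold. One must bound uniformly in $s$ the expected number of vertex subsets of size $s$ with $e_H(S) \ge |S|$, and the relevant exponential rate becomes tight at $c = c_k^*$; handling the small, medium, and linear-size regimes coherently — and making the threshold emerging from the combinatorial first-moment bound coincide with the fixed-point threshold coming from the core analysis — is the delicate heart of the argument, and the place where the specific shape of \eqref{eq:kxi} is forced upon us.
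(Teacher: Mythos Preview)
The paper does not prove Theorem~\ref{thm:main}. It is quoted as background: the result was obtained independently in~\cite{fp10}, \cite{inc:fp09} (for $k\ge4$), and~\cite{unp:Dietz09}, and the present paper's contribution is Theorem~\ref{thm:insertion} on insertion time. So there is no ``paper's own proof'' to compare against.

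That said, your outline is a faithful high-level summary of the strategy in~\cite{fp10}: reduce orientability to the Hall-type density condition, pass to the $2$-core, extract the fixed point $\xi$ governing core size and edge count, and read off the threshold from $e(C)=v(C)$. You also correctly identify the genuine difficulty on the subcritical side --- that $e(C)<v(C)$ alone does not rule out a dense \emph{subset} of the core, and that a uniform first-moment bound over all scales is needed, with the linear-size regime becoming tight exactly at $c_k^*$. Indeed, the present paper carries out a strictly stronger version of that very calculation (Theorem~\ref{thm:density} and Section~\ref{proof:density}, proving Property~$D_\delta$: every $V'$ satisfies $e(V')<(1-\delta)|V'|$), reusing the machinery of~\cite{fp10}; this incidentally re-derives part~1 of Theorem~\ref{thm:main}. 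Part~2 (the supercritical side) is not touched here at all.

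One small imprecision: your phrase ``any vertex of degree at most one can absorb its unique incident edge'' glosses over isolated vertices and, more importantly, the fact that peeling is iterative --- removing a degree-one vertex may create new degree-one vertices. The clean statement is that $H$ is orientable iff its $2$-core is, proved by orienting each edge removed during the stripping process toward the vertex whose deletion triggered it. This is standard and you clearly have it in mind, but it is worth stating precisely.
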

Numerically we obtain for example that~$c_3^* \stackrel{.}{=} 0.917, c_4^* \stackrel{.}{=} 0.976$ and~$c_5^* \stackrel{.}{=} 0.992$, where ``$\stackrel{.}{=}$'' indicates that the values are truncated to the last digit shown. Moreover, a simple calculation reveals that~$c_k^* = 1 - e^{-k} + o(e^{-k})$ for~$k\to\infty$.

\paragraph{Fast Insertion}
Note that Theorem~\ref{thm:main} is non-algorithmic: it states that whenever the load of the hash-table is below the load threshold, then there {\em exists} with high probability an assignment such that each item is allocated to one of the $k$ positions given by the hash functions. The theorem does not, however, address the question whether such an assignment can actually be {\em found} efficiently. This is the problem that we address in this paper. 

More specifically, we study the distribution of the time that is needed to add an item into the table, assuming that some number of given items have already been allocated. We consider the following natural randomized insertion procedure, which is also known as the ``random-walk insertion heuristic''.
When we want to add a new item $i$ to the table, we first choose randomly one location among $h_1(i), \dots, h_k(i)$.  If that location is free, $i$ can be placed immediately and the algorithm terminates. If not, the item occupying the chosen location has to be displaced and is moved to a random location among its remaining $k-1$ choices to make room for $i$. This item in turn may need to displace another item, and so on. Consequently, inserting an item will require a sequence of moves, each maintaining the property that every item remains in one of its $k$ potential locations, until no further displacements are needed. The algorithm is described formally in the next section.

The random-walk insertion heuristic was introduced and simulated in~\cite{inc:fpss03} as a randomized counterpart of the deterministic breadth-first search for a free location to insert the new item. We refer the reader to that paper for a detailed discussion about the origins of the method and many references to related work. Note that the random-walk insertion heuristic may fail in several different ways. In particular, if it is not possible to add the new item to the table, it will get stuck in an infinite sequence of displacements of items. Moreover, since the heuristic acts only 'locally', there is no way of detecting such an undesirable situation. On the other hand, Frieze, Melsted and Mitzenmacher~\cite{inp:fmm09} studied this algorithm and showed that such 'bad' situations occur only very rarely. In particular, they showed that the running time, i.e., the number of displaced items, is polylogarithmic with high probability provided $k\ge 8$ and the load of the hash-table is not too close to the threshold $c_k^*$.

The main theorem of our paper states that the random-walk insertion heuristic actually succeeds in polylogarithmic time with high probability for any number of inserted items arbitrarily close to the load threshold and for all $k\geq 3$. 
\begin{theorem} \label{thm:insertion}
For any $\zeta > 0$ the following is true. Let $0 < \eps < 1$, and for $k\geq 3$ set $c = {k+\log(k-1)\over (k-1)\log(k-1)}$. For any set of $m=\lfloor (1-\eps)c_k^* n\rfloor$ items of the universe $U$, if the hash functions $h_1,\ldots, h_k$ are independent and uniformly distributed over a table with $n$ locations, then with probability $1-o(1)$ each of the items will be inserted 
into the table in time $O(\log^{2+c+\zeta} n)$. 
\end{theorem}
Observe that $c \stackrel{.}{=} 2.66$ for $k = 3$, $c \stackrel{.}{=} 1.54$ for $k=4$ and $c \stackrel{.}{=} 1.15$ for $k = 5$. Moreover, a simple calculation shows that $c  = \frac1{\log k} + O(\frac1k)$ as $k$ grows. Our exponent in the bound of the running time thus also improves upon that from \cite{inp:fmm09}, which is greater than $2 + 2c$.

As a last remark, note that for $k = 2$ the random-walk insertion heuristic is a deterministic algorithm. The total running time to insert a given number of elements was studied in the papers by Devroye and Morin~\cite{ar:dm03} and in great detail by Drmota and Kutzelnigg~\cite{ar:dk12}. In particular, these papers show that the running time is linear in the number of inserted elements. We conjecture that for $k \ge 3$ this should also be the case.

\paragraph{Outline} Our proof of Theorem~\ref{thm:insertion} develops further some ideas from~\cite{inp:fmm09} and combines them with several new structural properties of random hypergraphs. In Section~\ref{sec:mainProof} we introduce some basic facts and relate cuckoo hashing to random hypergraphs and orientations of their edge sets. A general outline of the argument can be found there. In particular, it turns out that a crucial property needed in the analysis is that such a random hypergraph has no ``dense spots'', i.e., no subgraphs that have a density (fraction of edges to vertices) that is much larger than the density of the whole graph, c.f.\ Theorem~\ref{thm:density}. In Section~\ref{proof:density}, which is formulated entirely in random graph jargon, we prove Theorem~\ref{thm:density}. Finally, in Section~\ref{proof:expansion} we prove some additional auxiliary properties of random hypergraphs.

\section{The Insertion Algorithm and its Analysis}
\label{sec:mainProof}

\subsection{Random-Walk Insertion \& Random Hypergraphs}
\label{ssec:rwi}
%

We begin with a formal description of the random-walk insertion heuristic. We assume that there are~$k\geq 3$ hash functions~$h_1,\ldots, h_k$ that map a universe~$U$ to the $n$ locations of a hash table~$T$. We denote by~$T(i)$ the contents of the~$i$th location of~$T$, and we write~$T(i)=\emptyset$ if the~$i$th location is empty. Also, if~$e$ is an item that has been inserted into the table, we denote by $I(e)$ the index of the hash function that~$e$ currently uses, i.e., $T(h_{I(e)}(e))=e$. With these definitions at hand we are able to present in the following table a formal description of the insertion algorithm exposed in the introduction.
\begin{table}[h]
\label{t:algo}
\centering 
\begin{tabular}{c l}
\hline 
   & {\bf algorithm} {\sc Insert}($T$, $I$, $e$) -- Insert $e$ into $T$ using the random-walk heuristic\\
1 & $\Suc \leftarrow  \Fa$; \\
2 & $j\leftarrow 0$; \\
3 & {\bf repeat} \\
4 & \hspace{1cm} Choose uniformly at random~$i \in \{1,\ldots, k\}\setminus \{j\}$; \\
5 & \hspace{1cm}~$I(e) \leftarrow i$; \\
6 & \hspace{1cm} {\bf if}~$T(h_i(e)) \not = \emptyset$ {\bf then} \\
7 & \hspace{1.5cm}~$e'\leftarrow T(h_i(e))$; \\
8 & \hspace{1.5cm}~$j \leftarrow I(e')$; \\
9 & \hspace{1.5cm}~$T(h_i(e)) \leftarrow e$; \\
10 & \hspace{1.5cm}~$e\leftarrow e'$; \\
11 & \hspace{1cm} {\bf else} \\
12 &\hspace{1.5cm}~$T(h_i(e)) \leftarrow e$; \\
13 &\hspace{1.5cm}~$\Suc \leftarrow \Tr$; \\
14 &\hspace{1cm} {\bf endif} \\
15 & {\bf until~$\Suc = \Tr$} \\
\hline
\end{tabular} 
\end{table}

Our analysis of this algorithm begins with describing the allocation of elements in terms of certain operations on hypergraphs. More specifically, the hash table of size~$n$ corresponds to a set of~$n$ vertices, and the $k$ locations $h_1(i), \dots, h_k(i)$ of an item $i$ correspond to a hyperedge of size (at most)~$k$. As we assume that the hash functions are truly random, a set of~$m$ elements gives rise to a~$k$-uniform random multi-hypergraph $H_{n,m,k}^*$ with vertex set $V_n = \{1, \dots, n\}$ and~$m$ edges, each one chosen independently and uniformly at random with replacement among all $k$-multisubsets of the vertex set. Note that in this definition of~$H_{n,m,k}^*$ we actually interpret the word ``multi" in two ways: {\em (i)} an edge has size~$k$, but may contain a particular vertex several times, and {\em (ii)} the edge set of~$H_{n,m,k}^*$ may be a multiset, i.e.,\ a particular edge may occur multiple times.

With slight abuse of terminology, we say that a multi-hypergraph~$H=H(V,E)$ with vertex set $V$ and edge set $E$ is a~$k$-\emph{graph} if it is~$k$-bounded, that is, every hyperedge is a subset of~$V$ with at most~$k$ vertices. Observe, that the random hypergraph~$H_{n,m,k}^*$ corresponds in a natural 
way to a~$k$-graph, by projecting each ordered~$k$-tuple of vertices that forms an edge in~$H_{n,m,k}^*$ to the set of vertices contained 
in this $k$-tuple. In what follows, we will be using the symbol~$H_{n,m,k}^*$ to denote both objects; each time the interpretation should 
be clear from the context. 

\paragraph{Notation} Let $H = H(V,E)$ be a multi-hypergraph. In the rest of the paper we will use the following notation. The \emph{density} of $H$ is the ratio $|E|/|V|$. Moreover, for any $V'\subseteq V$ we write~$H[V']$ for the subgraph of~$H$ \emph{induced by} $V'$, i.e., $H[V']$ has vertex set $V'$ and its edges are those edges in $E$ that contain vertices only from $V'$. We will also write $E_H(V')$ for the edge set of $H[V']$ and $e_H(V') = |E_H(V')|$. We will omit the subscript if the graph we refer to is clear from the context. Finally, slightly abusing notation, we will say that the density of $V'$ is $e_H(V')/|V'|$, again if the reference to $H$ is clear from the context.


\subsection{Orientations and the~$o$-neighborhood of a Vertex} 

For a~$k$-graph~$H=H(V,E)$ with~$|E| \leq |V|$ 
an injective mapping~$o:E\rightarrow V$ such that~$o(e) \in e$ for all~$e\in E$ is called an \emph{orientation}. For~$e\in E$ and~$v\in e$ we will say that~$e$ is oriented to~$v$ if~$o(e) = v$. Observe that in the setting of cuckoo hashing an orientation corresponds to a proper assignment of the items to locations in the hash table, i.e., such that each location contains at most one item, and each item is assigned to a location that is prescribed by one of the $k$ hash functions.
Similarly, the random-walk insertion heuristic can be viewed as a random walk on the vertex set of the corresponding~$k$-uniform hypergraph. If we want to stress that the random walk starts with a particular assignment that corresponds to an orientation~$o$, we also speak of an \emph{$o$-random walk} on the vertex set of the hypergraph~$H$.

Before we proceed with the proof of Theorem~\ref{thm:insertion} in the next section we need to collect some basic properties of orientations and fix some necessary notation. Given an orientation~$o$ of a hypergraph~$H$, we denote by~$F_o(H)$ the set of \emph{free} vertices, that is, vertices to which no edge is oriented to. We call the remaining vertices \emph{occupied}. Furthermore, we will define certain quantities that describe how 'far' a given vertex is from~$F_o(H)$. Formally, for any~$v \in V$ let the \emph{first}~$o$-\emph{neighborhood}~$\mathcal{\overline N}_{o,1}(v; H)$ be the set of vertices in the edge oriented to~$v$, excluding~$v$ itself, i.e,
\[
	\mathcal{\overline N}_{o,1}(v; H) = \{ u \in e~:~ o(e) = v \text{ and } u \neq v\}.
\]
More generally, the~$t$th~$o$-neighborhood of~$v$ contains all vertices in any edge that is oriented to some vertex of the~$(t-1)$st~$o$-neighborhood of~$v$, excluding all vertices in previous neighborhoods. Formally, we have
\[
	\mathcal{\overline N}_{o,t}(v; H) = \big\{ u \in e~:~ o(e) \in \mathcal{\overline N}_{o,t-1}(v;H) \big\} \setminus \bigcup_{j=1}^{t-1}\mathcal{\overline N}_{o,j}(v;H).
\]
We also define the~$0$th~$o$-neighborhood of~$v$ to be~$v$ itself. Moreover, we write $\N_{o,t}(v;H) = \cup_{j=0}^t \mathcal{\overline N}_{o,j}(v;H)$ and $N_{o,t}(v;H) = |\N_{o,t}(v;H)|$. We will omit the reference to $H$ if the hypergraph that we refer to is clear from the context. Finally, if~$v$ is a vertex of~$H$ and~$S\subseteq V$, then the~$o$-distance of~$S$ from~$v$ is defined by
\[
	d_o(v,S; H) = \min \{t  :  \N_{o,t}(v) \cap S \not = \emptyset \},
\]
i.e., it is the smallest $t$ such that $\N_{o,t}(v)$ contains some vertex is $S$. Note that the $o$-distance provides a way to measure the complexity, i.e., the number of items that have to be displaced, of an $o$-random walk. Indeed, if $d_o(v, F_o) = d$, then inserting a new item at location $v$ requires (deterministically) the displacement of at least $d$ items.

Note also that for~$t\geq 1$ the definition implies that the~$t$th neighborhood of any vertex can contain at most~$(k-1)^t$ vertices. Thus, by summing up a geometric series we get the bound
\begin{equation} \label{eq:tlevel}
 N_{o,t}(v) \leq (k-1)^{t+1}. 
\end{equation}  

\subsection{Proof of Theorem~\ref{thm:insertion}: Analysis of the Insertion Algorithm}

Before we proceed with the actual proof details, let us give a rough sketch of the main argument. Assume that~$0 \le m \le \lfloor (1-\eps)c_k^*n \rfloor$ items~$I_1, \dots, I_m \in U$ are given, together with the values of the~$k$ hash functions. Then the above discussion implies that the corresponding hypergraph is distributed like~$H_{n,m,k}^*$, where the~$m$ items correspond to the~$m$ edges. Let us fix the realization of~$H_{n,m,k}^*$ that we denote by~$H$ for brevity. We will bound the insertion time under the assumption that the hypergraph satisfies some properties holding with high probability that we will state explicitly. In particular, these properties will guarantee that~$H$ and (thus) all of its subgraphs have an orientation. Let~$H_j = H_j(V_n, E_j)$ be the subgraph of~$H$ on the same vertex set, which includes only the edges that correspond to the first~$j$ items~$I_1, \dots, I_j$, and let~$o_j:E_j\to V_n$ be \emph{any} orientation of~$H_j$. Subsequently, we will argue that if~$H$ has this set of desired properties, then with probability at least~$1 - O(n^{-1 - \zeta/2})$ the $o_j$-random walk will displace at most~$O(\log^{2+c+\zeta}n)$ items before inserting~$I_{j+1}$ successfully. This implies the statement of the theorem.

Suppose that the edge corresponding to~$I_{j+1}$ is initially allocated to~$v \in V_n$, and assume that~$v$ is occupied; otherwise there is nothing to show. Following the arguments in~\cite{inp:fmm09} we will consider a decomposition of the vertex set~$V_n$ into two sets according to the~$o_{j}$-distance of~$F_{o_j}$ from each vertex. In particular, for some~$C>0$, we let~$S \subseteq V_n$ be the set of vertices~$v \in V_n$ such that~$d_{o_j}(v,F_{o_j}) \leq C$ and let~$B=V_n\setminus S$. 
In the work of Frieze et al.~\cite{inp:fmm09} the parameter~$C$ was of order~$\log \log n$. In contrast, for our analysis, it is sufficient that $d_{o_j}(v,F_{o_j})$ is bounded, and this has also the additional benefit that it allows us to shave off some logarithmic factors from the resulting running time. We show that $C$ can be chosen such that~$S$ covers almost all of the hypergraph. 

Note that if~$v \in S$, then the definition of $S$ implies that there is at least one free vertex within~$o_j$-distance~$C$ from~$v$, and the~$o_j$-random walk will thus hit a free vertex with probability at least~$1/(k-1)^C$ within the next $C$ steps. In order to treat the case $v \not \in S$ we will first show that certain expansion properties of~$H$ (that hold with high probability for $H_{n,m,k}^*$) guarantee that the~$o_j$-neighborhood up to $o_j$-distance roughly $\log_{k-1} n$ from~$v$ grows almost like a~$(k-1)$-regular hypertree. This, in turn, will then allow us to show that for vertices $v \not \in S$ the~$o_j$-random walk will hit~$S$ with reasonably high probability after a logarithmic number of steps.

Our plan for the proof of Theorem~\ref{thm:insertion} is thus as follows. In the next two subsections we define two properties of~$k$-graphs, a density and an expansion property, and show that a random hypergraph $H^*_{n,m,k}$ has these properties with high probability. We also show that the density property implies that the set $S$ is large, and that the expansion property implies that a random walk starting in a vertex not in $S$ will hit $S$ within a logarithmic number of steps with high probability. In Section~\ref{ss_proof} we then show how these two properties conclude the proof of Theorem~\ref{thm:insertion}.

\subsubsection{Density Properties}

Assume that~$H=H(V,E)$ is a~$k$-graph.
We define the following property.
 
\medskip 

\noindent
{\bf Property $D_{\delta}$}: For all $\emptyset \neq V' \subseteq V$ we have 
$$ e(V') < (1-\delta) |V'|.$$ 
 
\noindent Note that if a $k$-graph has property $D_\delta$ for some $\delta > 0$, then there exists an orientation as well; this follows immediately from Hall's theorem. The next proposition states that for \emph{any} such orientation $o$, most of the vertices are such that the set of free vertices $F_o$ is within bounded $o$-distance from them. This will be an important ingredient in the proof of Theorem~\ref{thm:insertion}, since for such vertices, the probability that the random-walk insertion heuristic finds the shortest sequence of displacements to some free vertex is bounded from below by a constant.
\begin{proposition} \label{prop:decomp} 
Let $H=H(V,E)$ be a $k$-graph that has Property $D_{\delta}$ for some $\delta > 0$. Let $o:E\to V$ be any orientation. Then, for all $\alpha >0$, there exists $C=C(\alpha, \delta)>0$ and a set $S \subseteq V$ with $|S| \ge (1-\alpha) |V|$ such that for every $v \in S$ we have $d_o(v,F_o)\leq C$.
\end{proposition}
\noindent
We defer the proof to Section~\ref{proof:decomp}. The next theorem states that~$H_{n,m,k}^*$ has Property~$D_{\delta}$ with high probability for some suitable $\delta>0$.
\begin{theorem}  \label{thm:density}
Let $0<\eps <1$, $k\ge 3$ and suppose that $m=\lfloor (1-\eps) c_k^*n \rfloor$. There exists a $\delta = \delta(\eps,k)>0$ such that 
$H_{n,m,k}^*$ has property~$D_{\delta}$ with probability~$1-o(1)$.  
\end{theorem}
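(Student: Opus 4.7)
The natural approach is a first-moment (union bound) calculation. For $s\in\{1,\ldots,n\}$ let $X_s$ count the subsets $V'\subseteq V_n$ of size $s$ with $e(V')\ge\lceil(1-\delta)s\rceil$; by Markov's inequality it suffices to show $\sum_s\ex[X_s]=O(1/n)$. Since the edges of $H^*_{n,m,k}$ are independent uniform $k$-tuples from $V_n^k$, for any fixed $V'$ of size $s$ we have $e(V')\sim\Bin(m,(s/n)^k)$, and combining the moment bound $\mathbb{P}(\Bin(m,p)\ge t)\le\binom{m}{t}p^t$ with the $\binom{n}{s}$ choices of $V'$ yields
\begin{equation}\label{eq:firstmom}
\ex[X_s]\;\le\;\binom{n}{s}\binom{m}{\lceil(1-\delta)s\rceil}\!\left(\frac{s}{n}\right)^{k\lceil(1-\delta)s\rceil}.
\end{equation}

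For bounded $s\le s_0$, choosing $\delta<1/(s_0+1)$ so that the ceiling equals $s$, the right-hand side of~\eqref{eq:firstmom} is $O(n^{-(k-2)s})$, summing (for $k\ge 3$) to $O(1/n)$; similarly the sublinear regime $s\to\infty$, $s=o(n)$ is handled by Stirling and contributes $o(1/n)$. Near the upper boundary $s=c n/(1-\delta)$ the bound~\eqref{eq:firstmom} is also effective: whenever $(1-\delta)s>m$ the event is impossible so the expectation vanishes, which is why I would impose $\delta<1-c$ to kill the endpoint $s=n$ deterministically, and just below the boundary the binomial $\binom{m}{\lceil(1-\delta)s\rceil}$ is small since it forces nearly all $m$ edges to lie inside $V'$. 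The delicate regime is $s=xn$ linear in $n$ with $x$ bounded away from both $0$ and $c/(1-\delta)$: a Stirling/entropy analysis of~\eqref{eq:firstmom} gives $\ex[X_s]\le\exp(n\psi_\delta(x)+O(\log n))$ with
$$\psi_\delta(x):=H(x)+cH\!\bigl(\tfrac{(1-\delta)x}{c}\bigr)+k(1-\delta)x\log x,\qquad c:=m/n=(1-\eps)c_k^*.$$

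The crux is then to establish $\psi_\delta(x)\le -\eta$ uniformly on the admissible range for some $\eta=\eta(\eps,k)>0$ and all sufficiently small $\delta>0$. A direct entropy computation reveals that the naive bound~\eqref{eq:firstmom} can be loose at intermediate $x$ (the indicators $\{e(V')\ge(1-\delta)|V'|\}$ are positively correlated across overlapping $V'$ of similar size), so the argument must be supplemented by the structural ($k$-core / peeling) analysis underlying Theorem~\ref{thm:main} in~\cite{fp10}. Concretely, for $c$ strictly below $c_k^*$ the $k$-core of $H^*_{n,m,k}$ (obtained by iteratively removing vertices of $k$-graph degree $\le 1$) has, with probability $1-O(1/n)$, an edge-to-vertex ratio bounded away from $1$ by a gap that depends continuously on $c_k^*-c$. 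Any subset $V'$ with $e(V')\ge(1-\delta)|V'|$ must essentially lie inside the core (low-degree vertices outside the core can be peeled off without significantly increasing the edge-to-vertex ratio), so the core density bound transfers to the desired uniform inequality $e(V')<(1-\delta)|V'|$ in the linear regime. The main obstacle is sharpening the concentration estimates of~\cite{fp10} from $o(1)$ to $O(1/n)$ on the relevant peeling process, and carrying out the peeling reduction that relates an arbitrary dense $V'$ to the core; the quantitative slack $\delta=\delta(\eps,k)$ ultimately arises from $c$ being $\eps$-bounded away from the threshold, combined with the continuous dependence of the limiting core density on $c$.
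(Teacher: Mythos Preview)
Your proposal is essentially the same strategy the paper follows: a first-moment bound (packaged there as the expansion Property~$\mathcal{E}$) disposes of sets with few edges, while the linear regime is handled by first proving that the \emph{2-core} of $H_{n,m,k}^*$ has edge-to-vertex ratio bounded below $1$ by a gap depending on $\eps$ (the paper's Theorem~\ref{thm:densitycore}, via the Poisson cloning model), and then transferring this to arbitrary subsets by a peeling argument (the paper's Lemma~\ref{lem:transfer}).

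One point where the paper is more careful than your sketch: the transfer step is not simply ``a dense $V'$ must essentially lie inside the core''. The paper instead splits into cases according to whether $H[V']$ has many or few edges outside the core of $H[V']$; in the former case one uses the stripping order together with Property~$\mathcal{E}$ to locate $\Omega(n)$ vertices of the core adjacent to the stripped part, and in the latter case one applies the core density bound directly. Your peeling heuristic alone does not obviously control the ratio in the intermediate case, so this case split (or an equivalent device) is needed to make the reduction rigorous.
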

\noindent
We prove this theorem in Section~\ref{proof:density}. 
Together with Proposition~\ref{prop:decomp} this gives us a statement about the typical structure of~$H_{n,m,k}^*$. 
\begin{corollary} \label{lem:decomp}
Let~$\eps  > 0, \alpha >0$ and assume that~$m=\lfloor (1-\eps) c_k^* n\rfloor$. Then there is a~$C=C(\alpha, \eps) >0$ such that \emph{every} subgraph~$\hat{H} = \hat{H}(V_n, \hat{E})$ of~$H_{n,m,k}^*$ has the following property with high probability. For \emph{any} orientation~$\hat{o}$ of~$\hat{H}$ there exists a set~$S \subseteq V_n$ such that~$|S|\ge(1-\alpha)n$ and every~$v \in S$ satisfies~$d_{\hat{o}} (v,F_{\hat{o}}; \hat{H}) \leq C$.
\end{corollary}

\subsubsection{Expansion Properties}

The density properties described in the previous section will allow us to deal in the proof of Theorem~\ref{thm:insertion} with insertions of new items at vertices that are at bounded $o$-distance from the set of free vertices. In order to deal with insertions at any other vertex we will consider a different set of structural properties of hypergraphs.

For a set of edges~$E'$ of a hypergraph~$H=H(V,E)$, we denote by~$V(E')$ the set of vertices contained in edges of~$E'$. 
We say that a~$k$-graph~$H=H(V,E)$ has \emph{expansion property}~$\mathcal{E}$, if it satisfies the following two conditions.

\medskip 

\noindent
{\bf Property~$\mathcal{E}$}:
\begin{enumerate}
\item[1.]  For all~$E'\subseteq E$ with~$\log \log |V| < |E'| < |V|/k$ we have 
$$ |V(E')| \geq (k-1-x_{|E'|}) |E'|,\qquad\mbox{
where~$x_s = {\log_k((k-1)e^k) \over \log_k(|V|/s) - 1}$}.$$
\item[2.] For all~$E'\subseteq E$ with 
$|E'| \le \log \log |V|$ we have 
$$ |V(E')| \geq (k-1) |E'|.~$$
\end{enumerate}
Note that for any~$E'\subseteq E$ we have~$|V(E')| \le k |E'|$, and moreover, if the graph with edge set~$E'$ is connected, then even~$|V(E')| \le (k-1)|E'|$. Property~$\mathcal{E}$ thus guarantees that~$|V(E')|$ is 'close' to this upper bound for all~$E'\subseteq E$, i.e., the graph around any given vertex expands rapidly.

Certainly, the choice of the parameters in the definition of~$\mathcal{E}$ is somehow arbitrary and not best possible. Nevertheless, they suffice for our arguments. In Section~\ref{proof:expansion} we show the following statement.
\begin{proposition} \label{prop:expand}
If~$m\leq c_k^* n$, then~$H_{n,m,k}^*$ has Property~$\mathcal{E}$ with high probability. 
\end{proposition}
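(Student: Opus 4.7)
The plan is a standard first moment argument. Since the $m$ edges of $H^*_{n,m,k}$ are independent, uniformly random ordered $k$-tuples from $V_n^k$, any fixed $k$-tuple is contained in any fixed $t$-element subset of $V_n$ with probability $(t/n)^k$. A union bound over all $T$ of size $t$ together with linearity of expectation gives
\[
\BE\bigl[\,|\{E'\subseteq E : |E'|=s,\;|V(E')|\le t\}|\,\bigr] \;\le\; \binom{m}{s}\binom{n}{t}\Bigl(\frac{t}{n}\Bigr)^{ks}.
\]
Markov's inequality reduces the problem to showing that the sum of this bound over all $(s,t)$ pairs violating Property $\mathcal{E}$ is $o(1)$. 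Applying $\binom{m}{s}\le(em/s)^s$, $\binom{n}{t}\le(en/t)^t$, $m\le c_k^*n$, and setting $t=rs$, the bound collapses to $[B(r,s)]^s$ with $B(r,s)\le c_k^*\,e^{r+1}\,r^{k-r}\,(s/n)^{k-r-1}$.

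For Part 2 of $\mathcal{E}$ (range $s\le\log\log n$, bad $t\le(k-1)s-1$), writing $t=(k-1)s-j$ with $j\ge 1$ gives $[B(r,s)]^s \le [c_k^*(k-1)e^k]^s\cdot[(k-1)s/(en)]^j$; the first factor is polylogarithmic in $n$ because $s\le\log\log n$, while $[(k-1)s/(en)]^j$ is a geometric series in $j\ge 1$. Summing over $j$ and then over $s$ yields a total of $(\log n)^{O(1)}/n = o(1)$. For Part 1 ($\log\log n<s<n/k$, bad $r<k-1-x_s$), substitute $r=k-1-x$ (bad $\iff x>x_s$) and use $(k-1-x)^{1+x}\le(k-1)^{1+x}$ to reduce $B$ to
\[
B \;\le\; c_k^*\,(k-1)e^k\bigl[(k-1)s/(en)\bigr]^x.
\]
The crucial algebraic step is that the definition $x_s = \log_k((k-1)e^k)/(\log_k(n/s)-1)$, combined with the elementary inequality $\log_k(e/(k-1))\ge -1$ (equivalent to $ek\ge k-1$, trivially true for $k\ge 2$), yields $\log_k(n/s)-1 \le \log_k(en/((k-1)s))$ and therefore $(k-1)e^k\bigl[(k-1)s/(en)\bigr]^{x_s}\le 1$. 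Hence $B(x_s)\le c_k^*$, and since $B$ is strictly decreasing in $x$ for $s<n/k$, the same bound holds uniformly over all bad $x>x_s$. Thus $B^s\le(c_k^*)^s$, and summing $O(s)\cdot(c_k^*)^s$ (the $O(s)$ accounts for the number of admissible $t$ per $s$) over $s>\log\log n$ gives a convergent geometric tail of size $(c_k^*)^{\log\log n}\cdot\mathrm{poly}(\log n) = o(1)$, using that $c_k^*<1$ strictly.

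The principal obstacle is the algebraic bookkeeping: one must verify that the given form of $x_s$ is exactly the correct threshold \emph{and} that the multiplicative slack $c_k^*<1$ survives both the approximation $(k-1-x)^{1+x}\le(k-1)^{1+x}$ and the replacement of the natural denominator $\log_k(en/((k-1)s))$ by the cleaner $\log_k(n/s)-1$. Two range endpoints merit a sanity check: for $s$ approaching $n/k$, $x_s$ diverges but $(k-1-x_s)s$ becomes non-positive and the Property is vacuous, so only the range $s < n/k^{1+\log_k((k-1)e^k)/(k-1)}$ actually contributes; near the transition $s\approx\log\log n$, the polylogarithmic prefactor from Part 2 is easily absorbed by the geometric decay $(c_k^*)^s$.
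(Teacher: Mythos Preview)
Your proof is correct and follows essentially the same first-moment argument as the paper. The only cosmetic differences are that the paper parametrizes by $\delta=(k-1)s-t$ rather than $r=t/s$, and it evaluates the bound only at the single boundary value of $t$ (obtaining $(ks/n)^{x_ss}(c_k^*(k-1)e^k)^s=(c_k^*)^s$ directly from the identity $\log_k(n/s)-1=\log_k(n/(ks))$) instead of summing over all bad $t$; your summation incurs a harmless extra $O(s)$ factor and your inequality $\log_k(n/s)-1\le\log_k(en/((k-1)s))$ is a looser stand-in for that identity, but the conclusion is identical.
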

Note also that if $H$ has Property $\mathcal{E}$, then all its subgraphs have also this property. Given an orientation~$o$ of a hypergraph and a vertex~$v$, recall that~$N_{o,t}(v)$ is the number of vertices 
within~$o$-distance at most~$t$ from~$v$. In Section~\ref{proof:expansion} we show the following statement, which quantifies how fast $N_{o,t}$ grows as a function of $t$, if the given graph has Property $\mathcal{E}$.
\begin{lemma} \label{lem:growth}
Let~$\zeta>0$, $k\geq 3$ and $c = c(k)$ be as in Theorem~\ref{thm:insertion}. Then, for any~$\beta >0$ sufficiently small and for any $n$ sufficiently large the following is true. Let~$H=H(V,E)$ be a~$k$-graph with~$n$ vertices that has Property~$\mathcal{E}$, and let~$o:E\to V$ be any orientation. Set~$T=\log_{k-1} n + (c+\zeta) \log_{k-1} \log_{k-1} n$. If~$v \in V$ is such that 
there is no free vertex within~$o$-distance~$T$ from~$v$, then~$N_{o,T} (v) > \beta n$.
\end{lemma}
The proof can be found in Section~\ref{proof:expansion}. In other words, this lemma states that number of vertices with $o$-distance $T$ from a given vertex $v$ is $\Omega(n)$, provided that there is no free vertex within this $o$-distance. Intuitively, placing a new item at such a vertex $v$ dominates the running time of the random-walk insertion heuristic, since in this case it is forced to displace (in the best case) at least $T$ items.

In order to show a similar statement about $N_{o,t}$ for vertices that are 'close' to the set of free vertices we perform a simple technical modification. Given a hypergraph~$H=H(V,E)$ and an orientation~$o$ of its edges, we define an auxiliary hypergraph~${H'} = {H'}(V',E')$ by replacing every free vertex of~$H$ by a~$(k-1)$-regular hypertree of depth~$T$ (on a new set 
of vertices) rooted at this vertex. We extend~$o$ to an orientation~$o'$ of the edges of~${H'}$ by orienting each new edge towards the root of its tree. Thus, the union of the leaves of each such tree are the free vertices of~$H'$. 
\begin{corollary} \label{lem:expansion}
With the same assumptions as in Lemma~\ref{lem:growth} and $H'$ as in the previous paragraph we obtain for \emph{all}~$v\in V$ that~$N_{o',T}(v; H') > \beta n$.
\end{corollary} 

\subsubsection{Proof of Theorem~\ref{thm:insertion}}\label{ss_proof}

In this section we combine the statements of Corollary~\ref{lem:decomp} and Corollary~\ref{lem:expansion} to derive a high probability bound for the running time of the random-walk insertion heuristic. Assume that~$m=\lfloor (1-\eps) c_k^* n\rfloor$. From Theorem~\ref{thm:density} we know that there exists a~$\delta = \delta(\eps,k)>0$ such that~$H_{n,m,k}^*$ satisfies Property~$D_{\delta}$ with high probability. In particular,~$H_{n,m,k}^*$ has an orientation with high probability, and so do all of its subgraphs. From Proposition~\ref{prop:expand} we also know that~$H_{n,m,k}^*$ satisfies Property~$\mathcal{E}$ with high probability; similarly, all its subgraphs have Property~$\mathcal{E}$. We thus may assume that we begin with a certain realization of~$H_{n,m,k}^*$ that has both properties, which, for
convenience, we abbreviate by~$H$. Let~$\hat{H} = \hat{H}(V_n, \hat{E})$ be \emph{any} subgraph of~$H$ on the same vertex set. Fix \emph{any} orientation~$\hat{o}$ of its edges, and any~$\zeta > 0$. Thereafter, choose~$\alpha$ small enough so that Corollary~\ref{lem:expansion} can be applied (to~$\hat H$) with $\beta = 2\alpha$ and set~$C=C(\alpha ,\eps) > 0$ as in Corollary~\ref{lem:decomp}. 
Note that this also specifies the set~$S$, i.e.,~$S \subseteq V_n$ is such that~$|S|\ge(1-\alpha)n$ and every~$v \in S$ satisfies~$d_{\hat{o}} (v,F_{\hat{o}}; \hat{H}) \leq C$.
\begin{lemma} \label{lem:SingleStage}
Assume that~$\alpha, \zeta, C$ and~$H,\hat{H},\hat{o}$ are specified as above. Let~$T$ and~$c$ be defined as in Corollary~\ref{lem:expansion} and Theorem~\ref{thm:insertion}. Then
the probability that a~$\hat{o}$-random walk on~$\hat{H}$ starting at a vertex~$v_0$ hits a free vertex within~$T+C$ steps is at least 
${\alpha\,(k-1)^{-C}}\,{(\log_{k-1} n)^{-c-\zeta}}$.
\end{lemma}
\begin{proof}
Let us consider the following stopping rule. 
Starting from~$v_0$, we walk either for~$T$ steps or until we hit a free vertex, whatever occurs earlier. 
If the latter is not the case, then we walk for~$C$ additional steps or until we hit a free vertex.
We consider the same~$\hat{o}$-random walk on~$\hat{H}'$, that is, 
the~$\hat{o}$-random walk that starts at~$v_0$ and makes the same random choices as the one in~$\hat{H}$, and possibly new ones, if the random walk on~$\hat{H}$ stopped at a free vertex. Let~$u$ be the vertex that has been reached after~$T$ steps. 
The growth property guaranteed by Corollary~\ref{lem:expansion} implies that there are at least~$\beta n - \alpha n \ge \alpha n$ vertices within~$\hat{o}'$-distance~$T$ from~$v_0$ in~$\hat{H}'$ that either belong to~$S$ or to one of the trees we added to~$\hat{H}$. The probability of hitting such a vertex after~$T$ steps is at least~${\alpha n  \, (k-1)^{-T}}$. If~$u \not \in S$, then~$u$ belongs to one of the trees that we added to~$\hat{H}$, and we conclude that the~$\hat{o}$-random walk on~$\hat{H}$ has stopped before making~$T$ steps. If~$u \in S$, then we stop the~$\hat{o}$-random walk in~$\hat{H}'$ but we continue it in~$\hat{H}$ for another~$C$ steps or until a free vertex is found. The probability that we end up at a free vertex is at least~$(k-1)^{-C}$, since the assumption that~$u$ belongs to~$S$ implies that there is at least one free vertex within $\hat{o}$-distance $C$ from $u$. Thus the probability of success is at least ${\alpha n  \, (k-1)^{-T-C}}$, and the proof is completed.
\end{proof}
To conclude the proof of Theorem~\ref{thm:insertion}, we split the $o$-random walk into \emph{phases}, where a phase is either a window of duration~$T+C$ or until a free vertex was hit. We repeatedly use the above lemma to bound the number of unsuccessful phases, since it applies to any subgraph of~$H^*_{n,m,k}$ and any orientation of it. More precisely, suppose that we are given any orientation~$\hat{o}$ of some subgraph~$\hat{H}$ of~$H^*_{n,m,k}$, and we want to extend~$\hat{o}$ to the graph~$\hat{H}$ plus some edge~$e$ that is in~$H^*_{n,m,k}$ but not in~$\hat{H}$. (In the cuckoo hashing setting, this amounts to the situation that we have already inserted a specific number of items in the table, and we want to add another item.) Suppose that we set~$\hat{o}(e) = v_0 \in e$;~$v_0$ is chosen randomly among the vertices in~$e$, but this is not important. If the first phase is unsuccessful, then the above statement applies with the starting vertex being the vertex in which the previous phase ended, say~$v_1$, and with a new orientation~$\hat{o}_1$ of the hypergraph~$\hat{H}$ with~$e$ and without the edge that was oriented to~$v_1$ before the last step of the~$\hat{o}$-random walk.

The above arguments imply that for any $\zeta >0$ the probability that at least 
$\log_{k-1}^{c+1 + 2\zeta}n$ phases are unsuccessful given that a new item is inserted 
in $v_0$ is $O(1/n^{1+ \zeta})$. As each phase lasts for $T+C = O(\log n)$ steps, we deduce
that with probability $1-O(1/n^{1+ \zeta})$, the random walk inserts the new item within $O(\log^{2+c + 2\zeta} n)$ steps. 
As the total number of inserted elements is $O(n)$ this concludes the proof of Theorem~\ref{thm:insertion}.

\paragraph{Remark} Note that the same argument yields the bound $O(\log^{1+c + 2\zeta} n)$ for the \emph{expected} total number of steps performed by one $o$-random walk, since the number of unsuccessful phases is dominated by a geometric distribution.

\subsection{Proof of Proposition~\ref{prop:decomp}: $o$-neighborhoods and the Density of a Hypergraph} \label{proof:decomp}

The main idea of the proof is as follows. As $H$ has Property~$D_{\delta}$ we know that $|F_o| = |V| - |E| >  \delta |V|$. Suppose that we remove all free vertices and all edges that contain a free vertex. As the removal of any edge~$e$ that contains a free vertex generates a new free vertex, namely $o(e)$, this process generates a subhypergraph of $H$, which is induced by $V\setminus F_o$, with a new set of free vertices. For this subhypergraph we can again use Property~$D_{\delta}$ to deduce that the number of free vertices is at least a $\delta$-fraction of the number of vertices. We repeat this stripping until we are left with less than $\alpha |V|$ vertices -- and show that a constant number of rounds suffice.

More formally, let $F_0 := F_o$ and $L_0:= V$. Define inductively $L_{i+1} = L_i \setminus F_i$ and let $F_{i+1}$ be the set of free vertices in the hypergraph induced by $L_{i+1}$. Since $H$ has Property~$D_{\delta}$, it follows that $|F_0| > \delta |V| = \delta |L_0|$. We claim that for all $i\geq 0$ we have $|F_{i+1}| \geq \delta |L_{i+1}|$ as well. The two crucial observations are:
\begin{itemize}
\item[\em (i)] $|L_{i+1}|=e(L_i)$, as $L_{i+1}$ contains exactly the vertices that are the images (under~$o$) of the edges in the hypergraph induced by $L_i$.
\item[\em (ii)] $|F_{i+1}| = e(L_i) - e(L_{i+1})$, as every edge that belongs to the hypergraph induced by $L_i$ but not to the one induced by $L_{i+1}$ generates exactly one free vertex in $F_{i+1}$. 
\end{itemize}
As $H$ has Property~$D_{\delta}$ we also know that $e(L_{i+1}) < (1-\delta) |L_{i+1}|$. Hence,
\begin{equation*} 
\begin{split}
|F_{i+1}| \stackrel{(ii)}{=} e(L_i) - e(L_{i+1}) > e(L_i) - (1-\delta) |L_{i+1}| \stackrel{(i)}{=} \delta |L_{i+1}|.
\end{split}
\end{equation*} 
Thus 
\begin{equation*} 
\begin{split}
|L_{i+1}| = |L_i| -|F_i| < (1-\delta)|L_i|\qquad\text{for all $i\ge 0$.}
\end{split}
\end{equation*} 
To conclude the proof observe that this implies that for all $t\geq 1$
$$ |L_t| < (1-\delta)^t |L_0| = (1-\delta)^t |V|. $$
Thereby, choosing $t= \lceil \log_{1-\delta} \alpha \rceil$ we deduce that $|L_t| < \alpha |V|$. Thus, we may take $S:= V\setminus L_t$  
and  $C(\alpha, \delta) = \lceil \log_{1-\delta} \alpha \rceil$. 

\section{Proof of Theorem~\ref{thm:density}: the Subgraph Density of $H_{n,m,k}^*$} \label{proof:density}

Towards the proof of Theorem~\ref{thm:density} we begin with showing a lemma that will enable us to restrict our arguments to a specific subgraph of $H_{n,m,k}^*$. In particular, we will consider the so-called \emph{core}, which is the unique maximum subhypergraph of a given hypergraph $H$ with minimum degree at least two. Of course, the core of $H$ might contain no vertex.  We will denote the core of $H$ by $\CP(H)$.

The core is a well-studied object in the literature of random graphs and it has several applications in the analysis of algorithms, for example in the context of load balancing and hashing~\cite{ar:csw07,ar:fr07,fp10} or graph coloring~\cite{ar:ct04}. In particular, the distribution of the number of vertices and edges in $C(H_{n,m,k}^*)$ is well-known, see e.g.~\cite{ar:psw96,ar:c04,ar:m05,ar:k06}, and we will make extensive use of these results. 

A standard algorithm that reveals the core of a given hypergraph is the so-called \emph{stripping process} and works as follows. We repeatedly choose an arbitrary vertex of degree at most one and we remove it from the graph, together with the single edge that it contains (if any). This process stops when there are no more available vertices of degree at most one; what remains is either an empty hypergraph with no vertices (if the core is empty), or, otherwise, the core itself.
\begin{lemma} \label{lem:transfer}
Let $H=H(V,E)$ be a $k$-graph where every edge contains at least two distinct vertices such that 
\begin{enumerate}
\item[1.] $\CP(H)$ has Property~$D_{\delta}$ for some $0 < \delta < 1/4$ and
\item [2.] $H$ has Property $\mathcal{E}$.
\end{enumerate}
Then there exists $\zeta > 0$ such that $H$ itself has Property~$D_{\zeta \delta}$. 
\end{lemma}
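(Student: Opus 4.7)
The plan is to combine the density property $D_\delta$ of the core with the expansion Property~$\mathcal{E}$, using the structure of the stripping process that produces $\CP(H)$.

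Write $V_C := V(\CP(H))$, $V_S := V \setminus V_C$, $E_C := E(\CP(H))$, and $E_S := E \setminus E_C$. At each step of the stripping process some edge is removed because one of its vertices has degree $1$; since that vertex becomes isolated immediately afterwards, sending each stripped edge to that vertex defines an \emph{injection} $\phi : E_S \to V_S$. For a non-empty $V' \subseteq V$, set $V'_C := V' \cap V_C$, $V'_S := V' \cap V_S$, and decompose $E(V') = E_C(V') \sqcup E_S(V')$ accordingly. Core edges lying inside $V'$ lie inside $V'_C$, so Property~$D_\delta$ of the core applied to $V'_C$ yields $|E_C(V')| \le (1-\delta)|V'_C|$; for each $e \in E_S(V')$ the vertex $\phi(e)$ lies in $V'_S$, so injectivity of $\phi$ gives $|E_S(V')| \le |V'_S|$. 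Adding these gives the master inequality
\begin{equation}\label{eq:master}
|E(V')| \;\le\; (1-\delta)|V'_C| + |V'_S| \;=\; |V'| - \delta|V'_C|.
\end{equation}

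Writing $\beta := |V'_C|/|V'| \in [0,1]$, inequality~\eqref{eq:master} reads $|E(V')|/|V'| \le 1-\delta\beta$ and already gives $|E(V')| \le (1-\zeta\delta)|V'|$ whenever $\beta \ge \zeta$. In the remaining case $\beta < \zeta$, most of $V'$ lies in $V_S$ and \eqref{eq:master} is too weak, so I would bring in Property~$\mathcal{E}$ applied to $E_S(V')$: since every edge of $E_S(V')$ lies inside $V'$ we have $V(E_S(V')) \subseteq V'$, and the expansion bound gives
\[
|V'| \;\ge\; |V(E_S(V'))| \;\ge\; (k-1-x)\,|E_S(V')|,
\]
hence $|E_S(V')| \le |V'|/(k-1-x)$ with $x = x_{|E_S(V')|}$. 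For $k \ge 3$ and for sizes of $E_S(V')$ at which $x \le k-2-\delta$, this improves the stripping bound to $|E_S(V')| \le |V'|/(1+\delta)$, and combining with the core bound yields $|E(V')|/|V'| \le (1-\delta)\beta + 1/(1+\delta)$. The pointwise minimum of this bound and \eqref{eq:master} attains its maximum over $\beta \in [0,1]$ at the crossover $\beta^* = \delta/(1+\delta)$, where both bounds equal $1 - \delta^2/(1+\delta)$; thus the choice $\zeta := \delta/(1+\delta)$, which lies in $(0,1)$ since $\delta < 1/4$, suffices.

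The main obstacle is the remaining regime in which $|E_S(V')|$ is so large that $x_{|E_S(V')|} > k-2-\delta$, where the expansion bound above degrades. There $|V'| \ge |V'_S| \ge |E_S(V')|$ already forces $V'$ to contain a positive constant fraction of $V$, and one has to argue that this forces $|V'_C|/|V'| \ge \zeta$ nonetheless. To do so, I would apply \eqref{eq:master} globally to $V' = V$ to obtain the structural inequality $|V_C| \ge (|V|-|E|)/\delta$, and combine this lower bound on $|V_C|$ with the quantitative lower bound on $|V'|$ together with $\delta < 1/4$ to verify by direct bookkeeping that $|V' \cap V_C| \ge \zeta|V'|$, placing us back in the easy case. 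This is the part where the hypothesis $\delta < 1/4$ is used in an essential way.
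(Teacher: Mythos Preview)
Your master inequality~\eqref{eq:master} and the deduction for $\beta \ge \zeta$ are clean and correct, and the combination with expansion in the range where $x_{|E_S(V')|} \le k-2-\delta$ is also fine. The gap is entirely in the final paragraph.

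First, the inequality you extract from~\eqref{eq:master} at $V' = V$ goes the wrong way: from $|E| \le |V| - \delta|V_C|$ you get $|V_C| \le (|V|-|E|)/\delta$, an \emph{upper} bound on the core, not a lower bound. Second, and more seriously, even a genuine lower bound on $|V_C|$ would not help: nothing prevents $V'$ from being large and \emph{disjoint} from $V_C$. Indeed, the hypotheses do not force $V_C$ to occupy any fixed fraction of $V$ (the core could even be empty, in which case condition~1 is vacuous for every $\delta$), so for $V' \subseteq V_S$ you have $\beta = 0$ and your master inequality gives only $e(V') \le |V'|$. In this regime $|E_S(V')|$ can be of order $|V|$, well outside the range $|E'| < |V|/k$ in which Property~$\mathcal{E}$ applies at all, so neither of your two bounds bites. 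The ``direct bookkeeping'' you allude to cannot exist without further structural input.

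The paper avoids this trap by organising the case distinction by $e_S$ rather than by $\beta$, and by working with the core of $H[S]$ rather than with $\CP(H)\cap S$. For $e_S \le \gamma|V|$ expansion alone suffices. For $e_S > \gamma|V|$, the crucial idea is: look at the stripping process \emph{of $H[S]$}, take the last $\lfloor\gamma|V|/2\rfloor$ edges to be stripped, and apply Property~$\mathcal{E}$ to \emph{that} bounded collection of edges. Each such edge has its $\phi$-vertex among the stripped part of $S$, but its remaining $\ge k-2-\tfrac12$ vertices (on average) must lie in the core of $H[S]$, which is a subgraph of $\CP(H)$; this manufactures a linear-in-$|V|$ lower bound on $|V(\CP(H[S]))|$ from which $D_\delta$ can be invoked. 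In short, the missing idea is to use expansion not on all of $E_S(V')$ but on a carefully chosen chunk of size $\Theta(\gamma|V|)$ that is small enough for Property~$\mathcal{E}$ to give a non-trivial bound, and positioned (via the stripping order) so that most of the vertices it uncovers are forced into the core.
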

\begin{proof}
Let $S \subseteq V$. We set $v_S = |S|$ and we write $e_S$ for the number of edges in~$H[S]$. Moreover, we write $d_S = e_S/v_S$ for the density of $H[S]$. We will show that for any $S \subseteq V$ we have $d_S < 1 -\zeta \delta$ for some $\zeta > 0$. Towards this goal we will make a case distinction depending on the number of edges $e_S$. 

The first part of Property~$\mathcal{E}$ implies that 
for all $\beta \in (0,1)$ there exists a $0< \gamma < 1/k$ such that for all $E'\subseteq E$ with 
 $\log \log |V| < |E'| \leq \gamma |V|$ we have 
$$ |V(E')| \geq (k-1-\beta) |E'|. $$
Setting $\beta =1/2$,  let $\gamma = \gamma (1/2)$ be as above.
Let us start with the case $e_S \leq \gamma |V|$. The bound on the density of $S$ can be deduced right away from 
Property~$\mathcal{E}$. Indeed, by using the first part of Property~$\mathcal{E}$, 
if $e_S > \log \log |V|$, then $v_S \geq (k-1-1/2)e_S$, yielding
\begin{equation}\label{eq:density1} 
d_S = {e_S \over v_S} \leq {1 \over k-1 -1/2} \stackrel{(k \ge 3)}{\leq} {2\over 3}.
\end{equation}
Using the second part of Property~$\mathcal{E}$, if $e_S\leq \log \log |V|$, then $v_S \geq (k-1)e_S$. A rearrangement yields 
\begin{equation} \label{eq:density13}
d_S = {e_S \over v_S} \leq {1\over k-1} \stackrel{ (k\geq 3 )}{\leq} {1\over 2}. 
\end{equation}
Suppose now that~$e_S > \gamma |V|$. We will make a further case distinction depending on the number of edges that belong to 
the core of~$H[S]$.
First, let us assume that there are at least~$\gamma |V|/2$ edges of~$H[S]$ that do not belong to the core of~$H[S]$. To avoid unnecessary complications we will assume that~$H[S]$ is connected; clearly it is sufficient to argue about connected sets. Consider the stripping process on~$H[S]$. This induces an ordering of the set of vertices of~$H[S]$ not belonging to the 
core; it is the ordering according to which these vertices are deleted from $H[S]$.
Let~$v_1,\ldots, v_t$ be this ordering for an integer~$t \le v_S$.

Note that whenever we delete a vertex of degree one during the stripping process, this is accompanied by an edge that is deleted too, that is, the (single) edge that this vertex is contained in.  Each one of the remaining~$k-1$ vertices of this edge either belongs to  the core of~$H[S]$ or it appears in some position after the deleted vertex in the specified ordering. Let us now consider the ordering in reverse and let~$i$ be the minimum index such that the number of edges that contain~$v_i,\ldots, v_t$ in the graph induced by $v_i, \dots, v_t$ and the vertex set of $\C(H[S])$ is $\lfloor \gamma |V|/2\rfloor$; there are~$s:=t-i+1$ vertices there. Assuming that among them there are~$x$ vertices that were isolated at the point in time they were removed by the stripping process, we have~$s=\lfloor\gamma |V|/2\rfloor + x$. The first part of Property~$\mathcal{E}$ implies that the number of vertices that are contained in these edges is at least $(k-1-\beta) \lfloor \gamma |V|/2\rfloor$. 
Among these vertices at least~$(k-1-\beta)\lfloor \gamma |V|/2\rfloor - s = (k-2 -\beta)\lfloor \gamma |V|/2\rfloor -x$ 
vertices must belong to the core of~$H[S]$. Let~$S_0$ 
denote these vertices and let~$S_1$ denote the remaining vertices of the core of~$H[S]$.  
In other words,~$|S_0| \geq (k-2 -\beta)\lfloor \gamma |V|/2\rfloor -x$.
As the core of~$H[S]$ is a subgraph of the core of~$H$, Property~$D_{\delta}$ implies that 
the core of~$H[S]$ contains at most~$(1-\delta)(|S_0| + |S_1|)$ edges. 
Now we can write an upper bound on the density of~$S$. We have 
\begin{equation*} 
\begin{split}
d_S = {e_S \over v_S} & \leq {t -x + (1-\delta) (|S_0| + |S_1|) \over t + |S_0| + |S_1|} = 
1 - {x \over t + |S_0| + |S_1|} - \delta {|S_0| + |S_1|\over t + |S_0| + |S_1|}.
\end{split}
\end{equation*}
Using that~$t+|S_0| + |S_1| \leq |V|$ we infer that
\begin{equation*}
\begin{split}
{e_S \over v_S} & \leq 1 - {x\over |V|}- \delta {|S_0| \over |V|} 
\leq 1 - {x\over |V|}- \delta {(k-2 -\beta)\lfloor \gamma |V|/2\rfloor -x \over |V|} \leq 1 -\delta {(k-2 -\beta)\lfloor \gamma |V|/2\rfloor  \over |V|}.
\end{split}
\end{equation*}
So,~$\lfloor \gamma |V|/2\rfloor\geq \gamma |V|/4$ implies that
\begin{equation}\label{eq:density2}
\begin{split}
{e_S \over v_S} &  
\leq
1 -\delta  {(k-2 -\beta) \gamma |V|\over 4|V|} = 1 -\delta  (k-2 -\beta) \gamma/4.  
\end{split}
\end{equation}
Finally, assume that less than $\gamma |V|/2$ edges of $H[S]$ do not belong to the core of $H[S]$. With~$e_{\CP (S)}$ and $v_{\CP(S)}$ denoting the number of edges and vertices of the core of $H[S]$, respectively, Property~$D_{\delta}$ of the core of $H$ implies that 
\begin{equation}\label{eq:EdgesVertices}
(1-\delta)v_{\CP(S)}  >e_{\CP (S)}\geq \gamma |V|/2.
\end{equation}
Since additionally $t + e_{\C(S)} \le |V|$ we infer that
\begin{equation*} \label{eq:density3}
{e_S \over v_S} \leq {t + e_{\CP(S)} \over t + v_{\CP(S)}}\leq  {t + (1-\delta)v_{\CP(S)} \over t + v_{\CP(S)}} = 
1 - \delta {v_{\CP(S)} \over t + v_{\CP(S)}} \stackrel{(\ref{eq:EdgesVertices})}{\leq} 
1 -\delta {\gamma |V| \over 2(1-\delta)|V|} = 1- \delta {\gamma \over 2(1-\delta)}.
\end{equation*}
Together with (\ref{eq:density1})--(\ref{eq:density2}) the above inequality determines the value of $\zeta$. 
Taking 
$$\zeta : = \min \left\{ \frac12, {\gamma \over 2(1-\delta)},  {(k-2 -1/2) \gamma\over 4} \right\}$$ suffices.
\end{proof}
The main ingredient in our proof is statement about the subgraphs of the core itself.
\begin{theorem}  \label{thm:densitycore}
Let $\eps > 0$ and suppose that $m=\lfloor (1-\eps) c_k^*n \rfloor$, where $c_k^*$ is given in Theorem~\ref{thm:main}. Then, for sufficiently small $\eps$,  the core of  
$H_{n,m,k}^*$ has Property $D_{\eps^3/2}$ with probability $1-o(1)$.  
\end{theorem}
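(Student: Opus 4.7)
The approach combines a sharp concentration estimate for the size of the core of $H_{n,m,k}^*$ with a first-moment argument for dense subgraphs within it.

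The plan is as follows. First, I would invoke the standard analysis of the stripping process on a random $k$-uniform hypergraph (as used in~\cite{unp:Dietz09,fp10,inc:fp09}) to pin down the size of the core. For $m=(1-\eps)c_k^*n$, there are explicit smooth functions $v(\eps)$ and $e(\eps)$, defined through the fixed-point equation underlying~\eqref{eq:kxi}, such that with probability $1 - o(1)$ the core $\CP(H_{n,m,k}^*)$ has $V_c = n\, v(\eps) + O(\sqrt{n}\log n)$ vertices and $E_c = n\, e(\eps) + O(\sqrt{n}\log n)$ edges. Equation~\eqref{eq:kxi} is chosen precisely so that $v(0) = e(0)$: at the threshold the core has density exactly $1$. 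A Taylor expansion of $v(\eps) - e(\eps)$ around $\eps=0$, using~\eqref{eq:kxi} and the companion equation for the core density to identify the leading behaviour, then needs to give $v(\eps) - e(\eps) \ge c_1(k)\, \eps^3$ for some positive constant $c_1(k)$ and all sufficiently small $\eps>0$. This establishes Property~$D_{\eps^3/2}$ in the case $V'=V(\CP(H_{n,m,k}^*))$, with room to spare.

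Second, I would reduce the analysis for proper subsets to a first-moment computation. Conditional on the outcomes $V_c, E_c$ of the stripping process, the core is distributed as a uniformly random $k$-graph on $V_c$ labeled vertices with $E_c$ edges subject to minimum degree $\ge 2$; the degree conditioning contributes at most a constant factor to individual probabilities and is absorbed into the union bound. For a fixed set $V'$ of size $s$ in the core, the probability that $e(V') \ge t$ is bounded by
\[
\binom{E_c}{t}\left(\binom{s}{k}\bigg/\binom{V_c}{k}\right)^t \;\le\; \binom{E_c}{t}\,(s/V_c)^{kt}.
\]
Multiplying by $\binom{V_c}{s}$, applying Stirling and taking logarithms yields an estimate of the shape $\exp\!\bigl(V_c\,\Phi(s/V_c,\eps)\bigr)$. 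Plugging in $t=\lceil (1-\eps^3/2)s\rceil$ and using the density gap $e(\eps)\le (1-c_1\eps^3)v(\eps)$ from the first step, together with $k\ge 3$, one checks that $\Phi(\sigma,\eps)<0$ uniformly in $\sigma = s/V_c \in [V_c^{-1},1]$. Summing over $s$ then gives an expected count of $o(1)$, so no such dense subset exists with probability $1-o(1)$.

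The main obstacle is the Taylor expansion in the first step: extracting a cubic lower bound on $v(\eps)-e(\eps)$ requires careful algebraic manipulation of~\eqref{eq:kxi} to identify which of the leading coefficients vanish and which survive. A secondary technical difficulty is that the Stirling analysis of $\Phi$ typically needs a split into regimes---small subsets (say $s=O(\log V_c)$, where Property~$\mathcal{E}$ from Proposition~\ref{prop:expand} already gives $e(V')\le s/(k-1) < (1-\eps^3/2)s$ for $k\ge 3$), bulk subsets where $\sigma$ is bounded away from $1$, and almost-full subsets where the statement is deduced by combining $e(V')\le E_c\le (1-c_1\eps^3)V_c$ with $|V'|\ge (1-o(1))V_c$---with attention paid to the transitions between these regimes.
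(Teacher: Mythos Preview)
Your proposal has a genuine gap in the second step: the naive first-moment bound does not work.  Take $k=3$ and let the core have $N$ vertices and (essentially) $N$ edges.  Your exponent for a subset of size $\sigma N$ with $t\approx\sigma N$ induced edges is
\[
\Phi(\sigma)\;=\;2H(\sigma)+k\sigma\ln\sigma\;=\;\sigma\ln\sigma-2(1-\sigma)\ln(1-\sigma),
\]
and a direct evaluation gives $\Phi(0.9)\approx 0.37$, $\Phi(0.7)\approx 0.47$, $\Phi(0.5)\approx 0.35$.  In fact $\Phi(\sigma)>0$ on roughly $(0.15,1)$ for $k=3$, so the expected count blows up exponentially and the union bound is useless in the entire ``bulk'' regime.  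Replacing $t=\sigma N$ by $t=(1-\eps^3/2)\sigma N$ changes nothing to leading order.  This is precisely why the paper does not use a naive first moment: it passes to the Poisson cloning model, where Kim's theorem gives the core as a \emph{configuration model} with $2$-truncated Poisson degrees, and then exploits (i) the maximality structure of a densest set (no edge meets it in exactly $k-1$ vertices) and (ii) a large-deviations rate function $I(z)$ for the total degree inside a subset.  Both ingredients are essential to push the exponent below zero for $\beta$ close to $1$, and neither is present in your sketch.

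Two smaller points.  First, your description of the conditional law of the core --- ``uniformly random $k$-graph on $V_c$ vertices with $E_c$ edges and minimum degree $\ge 2$'' --- is not correct; the core of $H_{n,m,k}$ does not have this distribution, and the claim that the degree conditioning costs only a constant factor is unsupported.  The paper handles this by moving to $\widetilde{H}_{n,p,k}$ and invoking Kim's structural theorem.  Second, the Taylor expansion you worry about is easier than you suggest: the edge--vertex gap in the core is \emph{linear} in $\eps$, not cubic (see the paper's Corollary giving $\tilde M_2=\tilde N_2(1-e_k\delta+\Theta(\delta^2))$ with $\eps=\Theta(\delta)$).  This helps, but it does not rescue the first-moment calculation above.
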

The above theorem together with Proposition~\ref{prop:expand} and Lemma~\ref{lem:transfer} yield Theorem~\ref{thm:density}. 
In the remainder of this section we prove Theorem~\ref{thm:densitycore}. 

\subsection{Models of Random Hypergraphs}\label{ssec:models}

Theorem~\ref{thm:densitycore} is stated for the $H_{n,m,k}^*$ model, where multiple edges are allowed, and also each edge can contain a vertex more than once. We start by arguing that it suffices to consider a slightly different random graph model. Let $H_{n,m,k}$ denote a random hypergraph that is created by selecting $m$ edges with $k$ distinct vertices in each edge without replacement. Then the following is true.

\begin{proposition} \label{prop:ModelsEquivalence}
Let~$k\geq 3$ and~$\eps > 0$ be sufficiently small. Assume that~$m=\lfloor cn \rfloor$, for some~$c>0$. If~$\prob (H_{n,m,k} \text{ has property } D_{\eps^3})=1-o(1)$, then
$\prob (H_{n,m,k}^* \text{ has property } D_{\eps^3/2})=1-o(1)$ as well.  
\end{proposition}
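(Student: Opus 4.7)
The plan is to quantify how $H_{n,m,k}^*$ differs from $H_{n,m,k}$ and to show that this difference is too small to spoil the density property once we absorb it into the factor-of-two weakening. Call an edge of $H_{n,m,k}^*$ \emph{defective} if its ordered $k$-tuple contains a repeated vertex, or if its underlying unordered vertex set coincides with that of some other edge. A single random tuple contains a repeat with probability at most $\binom{k}{2}/n$, so the expected number of repetition defects is $O(1)$; the expected number of coincidence pairs is $O(m^2/n^k) = O(n^{2-k}) = o(1)$ for $k \ge 3$. Hence the total number $X$ of defective edges satisfies $\mathbb{E}[X] = O(1)$, and Markov's inequality gives $\mathbb{P}(X \ge \log n) = o(1)$.

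Next I would identify the conditional distribution of the good edges. Condition on a particular set $S \subseteq [m]$ of defective indices and on the contents of those tuples, and write $r = |S| \le \log n$. The remaining $m-r$ good tuples are then jointly uniform over ordered sequences of $k$-tuples with $k$ distinct entries, pairwise distinct underlying sets, all disjoint from the defective sets. Projecting to unordered sets, they have the distribution of $H_{n,m-r,k}$ conditioned on the event that none of its edges coincides with one of the $r$ defective underlying sets. This forbidden family contains $o(\binom{n}{k})$ sets, so the conditioning succeeds with probability $1 - o(1)$. A coupling $H_{n,m-r,k} \subseteq H_{n,m,k}$ (retain $m-r$ edges of $H_{n,m,k}$ chosen uniformly at random; exchangeability makes this work), together with the monotonicity of Property $D_{\eps^3}$ under edge deletion, gives $\mathbb{P}(H_{n,m-r,k} \in D_{\eps^3}) \ge \mathbb{P}(H_{n,m,k} \in D_{\eps^3}) = 1 - o(1)$. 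Thus the good part of $H_{n,m,k}^*$ satisfies $D_{\eps^3}$ with probability $1 - o(1)$.

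It remains to deduce $D_{\eps^3/2}$ for $H_{n,m,k}^*$ itself. For a test subset $V' \subseteq V_n$ with $|V'| \ge 4\log n/\eps^3$, I simply bound
\[ e(V') \le e_{\mathrm{good}}(V') + r \le (1-\eps^3)|V'| + \log n \le (1-\eps^3/2)|V'|. \]
For $|V'| < 4\log n/\eps^3$, the induced edge set $E'$ satisfies $|E'| \le \binom{|V'|}{k} = O((\log n)^k) < n/k$; here I invoke Proposition~\ref{prop:expand} (which applies since the density hypothesis implicitly forces $c \le c_k^*$): $H_{n,m,k}^*$ has Property~$\mathcal{E}$ with probability $1 - o(1)$, and clause 1 or 2 of $\mathcal{E}$ (with error $x_{|E'|} = o(1)$) yields $|V(E')| \ge (k-1-o(1))|E'|$. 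Since $|V(E')| \le |V'|$, this rearranges to $e(V')/|V'| \le 1/(k-1) + o(1) \le 1/2 + o(1) < 1 - \eps^3/2$ for $k \ge 3$ and $\eps$ sufficiently small.

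The main obstacle I anticipate is the second step, where one must pin down the conditional law of the good edges. $H_{n,m,k}^*$ samples ordered tuples with replacement while $H_{n,m,k}$ samples unordered sets without replacement, and conditioning on \emph{good} status introduces subtle reweighting through the $k!$ symmetry factor. Nevertheless, the total-variation cost of all the conditioning steps is $O(rm/n^k) = o(1)$ for $k \ge 3$, which is comfortably absorbed by the slack from $D_{\eps^3}$ to $D_{\eps^3/2}$, so the argument should go through cleanly.
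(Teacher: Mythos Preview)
Your proposal is correct and follows the same three-step approach as the paper: bound the number of defective edges by $O(\log n)$, handle small $V'$ via Property~$\mathcal{E}$, and absorb the defects into the slack between $\eps^3$ and $\eps^3/2$ for large $V'$. The only difference is that the paper couples $H_{n,m,k}^*$ to $H_{n,m,k}$ by \emph{resampling} each bad edge until the hypergraph is simple (which directly yields $H_{n,m,k}$ and sidesteps the conditional-law subtleties you flag in your final paragraph), rather than conditioning on the good edges and comparing to $H_{n,m-r,k}$.
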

\begin{proof} 
First of all, recall that Proposition~\ref{prop:expand} implies that~$H_{n,m,k}^*$ has Property~$\mathcal{E}$ with probability~$1-o(1)$. Therefore, sets of size at most~$\gamma n$, for some sufficiently small~$\gamma>0$, do not violate Property~$D_{\eps^3/2}$. So it is sufficient to argue only about sets with at least~$\gamma n$ vertices. 

Let us call an edge in $H_{n,m,k}^*$ \emph{bad} if it either has repeated vertices or if there is another edge that contains exactly the same
vertices. For each of these edges we resample new edges until the resulting hypergraph contains $m$ different edges with $k$ distinct vertices
in each. Note that this process yields $H_{n,m,k}$. A trivial calculation reveals that with probability $1-o(1)$ the random hypergraph 
$H_{n,m,k}^*$ has at most $\log n$ bad edges. 

In the above process, sets with at least~$\gamma n$ vertices may change their number of edges by at most~$\log n$. Thus, for~$n$ large enough, if~$H_{n,m,k}$ has the property~$D_{\eps^3}$ and there are at most~$\log n$ bad edges, then~$H_{n,m,k}^*$ must have property~$D_{\eps^3/2}$. The statement of the proposition follows. 
\end{proof} 
Thus, proving Theorem~\ref{thm:densitycore} for~$H_{n,m,k}$ (where we use~$\eps^3$ instead of~$\eps^3/2$) is sufficient. Our remaining proof strategy is inspired by the ideas in~\cite{fp10,fp10a}, where it was shown that the core of $H_{n,m,k}$ has property $D_0$. Particularly, we develop further and adapt the arguments in that papers in order to show the stronger statement claimed in Theorem~\ref{thm:densitycore}.

We will use the following auxiliary statements about binomial coefficients.
\begin{proposition}\label{prop:binomial}
Let~$H(x) = -x\log x - (1-x)\log(1-x)$ denote the entropy function. Then, for any~$0 < \alpha<1$ and $|\delta| < 1$, as $n \to\infty$
\[
	\binom{n}{\alpha n} = \frac{1 + o(1)}{\sqrt{2\pi \alpha (1-\alpha)n}}e^{nH(\alpha)}
	~~\text{ and }~~
	\binom{n}{\alpha n + \delta n} \le \binom{n}{\alpha n}e^{n |\delta| \log(\max\{1/\alpha, 1/(1-\alpha)\})}.
\]
\end{proposition}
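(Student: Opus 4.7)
The statement contains two classical facts about binomial coefficients, and the plan is to treat them separately. Both follow either from Stirling's formula or from elementary manipulations of the binomial coefficient recurrence, so no delicate probabilistic or combinatorial argument is needed; I expect this to be a short, mostly mechanical proof.

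For the asymptotic identity, the plan is to apply Stirling's formula $n! = \sqrt{2\pi n}(n/e)^n(1+o(1))$ to each of the three factorials in $\binom{n}{\alpha n} = n!/((\alpha n)!((1-\alpha) n)!)$, assuming for convenience that $\alpha n$ is an integer (the general case reduces to this via $\lfloor \alpha n\rfloor$ at the cost of an additional $(1+o(1))$). The $e^{-n}$ factors cancel and the polynomial parts combine into
\[
\frac{n^n}{(\alpha n)^{\alpha n}((1-\alpha)n)^{(1-\alpha)n}} = \alpha^{-\alpha n}(1-\alpha)^{-(1-\alpha)n} = e^{nH(\alpha)},
\]
while the square-root prefactors consolidate to $\sqrt{2\pi n}/\bigl(\sqrt{2\pi\alpha n}\,\sqrt{2\pi(1-\alpha)n}\bigr) = 1/\sqrt{2\pi\alpha(1-\alpha)n}$, yielding the claim.

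For the shift inequality, the plan is to use the elementary identity $\binom{n}{k+1}/\binom{n}{k} = (n-k)/(k+1)$ and write the ratio $\binom{n}{\alpha n + \delta n}/\binom{n}{\alpha n}$ as a telescoping product
\[
\prod_{i=0}^{\delta n - 1} \frac{(1-\alpha)n - i}{\alpha n + i + 1}.
\]
Since the numerator decreases and the denominator increases with $i$, each factor is at most $(1-\alpha)/\alpha \le 1/\alpha$, so the whole product is bounded by $(1/\alpha)^{\delta n} = e^{n\delta\log(1/\alpha)}$. For the shift in the other direction, $\binom{n}{\alpha n - \delta n}/\binom{n}{\alpha n}$ is the analogous telescoping product whose factors are bounded by $\alpha/(1-\alpha) \le 1/(1-\alpha)$, giving the bound $e^{n\delta\log(1/(1-\alpha))}$. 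Taking the larger of the two upper bounds yields the stated inequality with $\max\{1/\alpha,1/(1-\alpha)\}$.

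There is no real obstacle in this proof; the only minor nuisance is the purely cosmetic handling of integrality of $\alpha n$ and $\delta n$, which is absorbed into the $o(1)$ term in the first part and is harmless in the second part since the telescoping bound is insensitive to rounding of $\delta n$ by a constant number of factors.
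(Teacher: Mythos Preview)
Your proposal is correct and essentially identical to the paper's own proof: the paper also invokes Stirling's formula for the first part (omitting details) and writes the ratio $\binom{n}{\alpha n \pm \delta n}/\binom{n}{\alpha n}$ as the same telescoping product, bounding each factor by $(1-\alpha)/\alpha$ or $\alpha/(1-\alpha)$ respectively.
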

\begin{proof}
The first statement is well-known and follows immediately from Stirling's approximation of the factorial function; we omit the details. To see the second statement, let us first consider the case $\delta \ge 0$. We can assume that~$\alpha + \delta \le 1$, as otherwise the statement holds trivially. Then
\[
\frac{\binom{n}{\alpha n +\delta n}}{\binom{n}{\alpha n}}
=
\prod_{i=1}^{\delta n }\frac{n-\alpha n -i+1}{\alpha n + i}
\le
\left(\frac{(1-\alpha)n}{\alpha n}\right)^{\delta n}
\le
e^{n \delta \log(1 / \alpha)}.
\]
Similarly, for the case $\delta <0$ we obtain
\[
\frac{\binom{n}{\alpha n - |\delta| n}}{\binom{n}{\alpha n}}
=
\prod_{i=0}^{|\delta| n -1}\frac{\alpha n - i}{n-\alpha n + i +1}
\le
\left(\frac{\alpha n}{(1-\alpha)n}\right)^{|\delta| n}
\le
e^{n |\delta| \log(1 / (1-\alpha))}.
\]
\end{proof}

For the sake of convenience we will carry out our calculations in the~$H_{n,p,k}$ model of random~$k$-graphs. 
This is the ``higher-dimensional" analogue of the well-studied~$G_{n,p}$ model, where, given~$n\geq k$ vertices, we include each~$k$-tuple of vertices with probability~$p$, independently of every other~$k$-tuple. 
Standard arguments show that if we adjust~$p$ suitably, then the~$H_{n,p,k}$ is essentially equivalent to~$H_{n,cn,k}$. The following proposition makes this statement precise.
\begin{proposition} \label{prop:ModelsEquiv}
Let ${\mathcal P}$ be any property of hypergraphs, and let $p = ck/\binom{n-1}{k-1}$, where~$c > 0$. Then
\[
	\Pr {H_{n,cn,k} \not\in {\mathcal P}} \le O(\sqrt{n}) \cdot \Pr {H_{n,p,k}  \not\in {\mathcal P}}.
\]
\end{proposition}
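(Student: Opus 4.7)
The plan is to exploit the classical coupling between the two models: $H_{n,p,k}$ conditioned on having exactly $\lfloor cn\rfloor$ edges is distributed as $H_{n,cn,k}$. Writing $M$ for the number of edges of $H_{n,p,k}$, we have $M\sim \Bin(N,p)$ with $N=\binom{n}{k}$, and the identity $k\binom{n}{k} = n\binom{n-1}{k-1}$ gives $\E{M} = Np = cn$. Once we establish a lower bound on $\Pr{M=\lfloor cn\rfloor}$ of the right order, the proposition will follow by a one-line conditioning argument.

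The key ingredient is the estimate $\Pr{M = \lfloor cn\rfloor} = \Omega(n^{-1/2})$, i.e.\ that the probability of a binomial hitting (a value near) its mean is of order one over the standard deviation. I would prove this by direct Stirling-based evaluation, in the spirit of Proposition~\ref{prop:binomial}: writing
\[
\Pr{M = cn} = \binom{N}{cn}p^{cn}(1-p)^{N-cn}
\]
and setting $\alpha := cn/N = p$, one has $e^{NH(\alpha)}\alpha^{N\alpha}(1-\alpha)^{N(1-\alpha)} = 1$, so that the first estimate of Proposition~\ref{prop:binomial} yields $\Pr{M=cn} = (1+o(1))/\sqrt{2\pi cn(1-p)}$. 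Since $p = ck/\binom{n-1}{k-1} = \Theta(n^{1-k}) = o(1)$, this is $\Theta(n^{-1/2})$.

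Given this, the proposition is immediate from
\[
\Pr{H_{n,p,k}\notin\mathcal{P}} \;\ge\; \Pr{H_{n,p,k}\notin\mathcal{P} \,\big|\, M=\lfloor cn\rfloor}\,\Pr{M=\lfloor cn\rfloor} \;=\; \Pr{H_{n,cn,k}\notin\mathcal{P}}\cdot \Omega(n^{-1/2}),
\]
after rearrangement. There is no real obstacle here; the only calculation of substance is the $\Omega(n^{-1/2})$ lower bound at the binomial mode, which is a routine Stirling application that fits into the framework of Proposition~\ref{prop:binomial}. A minor bookkeeping point is that $cn$ need not be an integer, but replacing $cn$ by $\lfloor cn\rfloor$ changes $\alpha$ by at most $1/N$ and does not affect any of the asymptotics above.
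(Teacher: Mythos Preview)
Your proof is correct and follows essentially the same approach as the paper: both compute $\Pr{M=cn}=\Theta(n^{-1/2})$ via Proposition~\ref{prop:binomial} (using $pN=cn$) and then apply the identical conditioning argument $\Pr{H_{n,cn,k}\notin\mathcal{P}} = \Pr{H_{n,p,k}\notin\mathcal{P}\mid M=cn} \le \Pr{H_{n,p,k}\notin\mathcal{P}}/\Pr{M=cn}$. Your handling of the floor is slightly more careful than the paper's, but otherwise the arguments coincide.
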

\begin{proof}
Let $N = \binom{n}{k}$, and note that $pN = cn$. Hence, $$\Pr{H_{n,p,k} \text{ has } cn \text{ edges}} = \binom{N}{cn}p^{cn}(1-p)^{N - cn} 
\stackrel{(\text{Prop.~\ref{prop:binomial}})}{=}\Theta(n^{-1/2}).$$ The claim then follows from
\[
	\Pr {H_{n,cn,k} \not\in {\mathcal P}}
	= \Pr {H_{n,p,k} \not\in {\mathcal P} ~|~ H_{n,p,k} \text{ has } cn \text{ edges}}
	\le \frac{\Pr {H_{n,p,k} \not\in {\mathcal P}}}{\Pr{H_{n,p,k} \text{ has } cn \text{ edges}}}.
\]
\end{proof}
In order to prove Theorem~\ref{thm:densitycore} it is therefore sufficient to show that the core of $H_{n,p,k}$ has  Property $D_{\eps^3}$ with probability $1 - o(n^{-1/2})$. This is accomplished in the next sections.

\subsection{Working on the Core of $H_{n,p,k}$: the Cloning Model}
\label{ssec:working}

Recall that the core of a hypergraph is the unique maximum subgraph that has minimum degree (at least)~two. Note that studying properties of the core of a random hypergraph directly is a very difficult task, since the restriction on the minimum degree introduces several dependencies among the edges. At this point we introduce the main tool for our analysis, which provides an accurate description of the core of random hypergraphs. The \emph{cloning model} with parameters $(N, D, k)$, where $N\geq 1$ and $D\geq 0$ are random variables taking integral 
values, is defined as follows. We generate a graph in three stages.
\begin{enumerate}
\item[1.] We expose the value of $N$.
\item[2.] We expose the degrees $\mathbf{d} = (d_1, \dots, d_{N})$, where the $d_i$'s are independent identically distributed as $D$.
\item[3.] For each $1 \le v \le N$ we generate~$d_v$ copies, which we call {\em $v$-clones} or simply {\em clones}. Then we choose uniformly at random a matching from all perfect~$k$-matchings on the set of all clones. Note that such a matching may not exist -- in this case we choose a random matching that leaves less than $k$ clones unmatched. Finally, we construct the graph $H_{\mathbf{d}, k}$ by contracting the clones to vertices, i.e., by projecting the clones of $v$ onto $v$ itself for every $1 \le v \le N$.
\end{enumerate}
Note that the last stage in the above procedure is equivalent to the \emph{configuration model} $H_{\mathbf{d}, k}$ for random hypergraphs with degree sequence~$\mathbf{d}= (d_1, \dots, d_n)$. In other words, $H_{\mathbf{d}, k}$ is a random multigraph where the $i$th vertex has degree $d_i$.
~\\

We will consider a special case of the above model. The so-called \emph{Poisson cloning model}~$\widetilde{H}_{n,p,k}$ for~$k$-graphs with~$n$ vertices and parameter~$p \in [0,1]$, which was introduced by Kim~\cite{ar:k06}. There, we choose~$N = n$ with probability 1, and the distribution~$D$ is the Poisson distribution with parameter~$\lambda := p {n-1 \choose k-1}$. Note that here~$D$ is essentially the vertex degree distribution in the binomial random graph~$H_{n,p,k}$, so we would expect that the two models behave similarly. The following statement confirms this, and is implied by Theorem~1.1 in~\cite{ar:k06}.
\begin{theorem}
\label{cor:Contiguity}
Let~$k \ge 2$ and suppose that~$p = \Theta(n^{-k+1})$. Then there is a~$C>0$ such that for any property~$\mathcal P$ of~$k$-graphs
\[
	\Pr{H_{n,p,k} \not \in {\mathcal P}} \le C\Pr{\widetilde{H}_{n,p,k} \not\in \mathcal{P}}^{1/k} + e^{-n}.
\]
\end{theorem}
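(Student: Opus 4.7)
The plan is to derive this as a direct consequence of Theorem~1.1 of Kim~\cite{ar:k06}, which establishes a general contiguity-type comparison between the binomial $k$-uniform hypergraph $H_{n,p,k}$ and the Poisson cloning model $\widetilde{H}_{n,p,k}$. Kim's theorem is formulated precisely in the form required here, with the same $1/k$ exponent and additive $e^{-n}$ error term as displayed. The only verification needed is that our regime falls within its scope: since $p = \Theta(n^{-k+1})$, the Poisson rate $\lambda := p\binom{n-1}{k-1}$ is a positive constant, which is exactly the setting covered in \cite{ar:k06}.

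Conceptually, Kim's argument proceeds along the following lines. First, one interprets $\widetilde{H}_{n,p,k}$ as a two-stage procedure: draw i.i.d.\ $\mathrm{Po}(\lambda)$ degrees $d_1,\ldots,d_n$, and then pair the resulting clones uniformly at random into $k$-tuples. Second, one considers the event $\mathcal{S}$ that the resulting multi-hypergraph is simple, in the sense that no hyperedge contains a repeated vertex and no two hyperedges are identical. A direct first- and second-moment computation, exploiting $\lambda = \Theta(1)$, shows that $\Pr{\mathcal{S}}$ is bounded below by a constant $c_k>0$ depending only on~$k$; moreover, conditional on $\mathcal{S}$ and on the resulting edge count $m$, the distribution coincides with that of the uniform random simple $k$-graph $H_{n,m,k}$. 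Third, a de-Poissonisation step analogous to Proposition~\ref{prop:ModelsEquiv} converts from a fixed edge count to the binomial model $H_{n,p,k}$, at the cost of a multiplicative factor that is absorbed into~$C$, while the additive $e^{-n}$ error absorbs the Chernoff deviation probability for the Poisson degree sum $\sum_i d_i$.

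The main technical substance of Kim's theorem, and the reason for citing it rather than reproving it, lies in obtaining the $1/k$-power dependence rather than a crude linear comparison: a naive conditioning on $\mathcal{S}$ costs only a factor of $1/c_k$ but does not by itself reveal how the failure probability in $\widetilde{H}_{n,p,k}$ is assembled from events occurring at the $k$ vertices of a hyperedge. Kim handles this via a careful moment decomposition of the failure event across the $k$ slots of each hyperedge, which is precisely what produces the $1/k$ exponent in the final inequality. This moment estimate is the delicate step in~\cite{ar:k06}, and it would be the main obstacle in any self-contained reproof. Taking Kim's theorem as a black box, our statement follows immediately by verifying that $p = \Theta(n^{-k+1})$ is within the range of its hypotheses.
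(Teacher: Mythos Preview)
Your approach is essentially the same as the paper's: the paper does not prove this statement at all but simply records it as ``implied by Theorem~1.1 in~\cite{ar:k06}'', and you do the same, correctly checking that $p=\Theta(n^{-k+1})$ puts $\lambda$ in the constant regime covered by Kim. One caution: your heuristic account of \emph{why} the exponent $1/k$ arises (``moment decomposition of the failure event across the $k$ slots of each hyperedge'') is speculative and likely not an accurate description of Kim's actual mechanism; since you explicitly take the theorem as a black box this does not affect correctness, but you may wish to drop or soften that paragraph rather than risk misattributing the argument.
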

In other words, in order to prove Theorem~\ref{thm:densitycore} it is sufficient to show that the core of~$\widetilde{H}_{n,p,k}$ has Property~$D_{\eps^3}$ with probability~$1 - o(n^{-k/2})$. Before we do so, let us collect some important properties of the Poisson cloning model.

One big advantage of the Poisson cloning model is that it provides a very precise description of the core. In particular, Theorem~6.2 in~\cite{ar:k06} implies the following statement, where we write ``$x \pm y$'' for the interval of numbers~$(x - y, x+y)$.
\begin{theorem}\label{thm:CoreClone}
Let $\lambda_{k}:= \min_{x>0} \frac{x}{(1-e^{-x})^{k-1}}$, $0<\delta< 1$, and $c$ be such that $ck=p {n-1\choose k-1} > \lambda_{k}$. Moreover, let~$\bar{x}$ be the largest solution of the equation $x=(1-e^{-xck})^{k-1}$, and set $\xi:= \bar{x}ck$. Then the following is true with probability $1-n^{-\omega(1)}$. If $\tilde{N}_{2}$ denotes the number of vertices in the  core of $\widetilde{H}_{n,p,k}$, then
 $$ \tilde{N}_{2}= (1-e^{-\xi} - \xi e^{-\xi})n \pm \delta n.$$
Furthermore, the core itself is distributed like the cloning model $(\tilde{N}_{2}, \, \mathrm{Po}_{\ge 2}(\Lambda_{c,k}), \, k)$, where $\mathrm{Po}_{\ge 2}(\Lambda_{c,k})$ denotes a
Poisson random variable conditioned on being at least $2$ and parameter~$\Lambda_{c,k}$, where $\Lambda_{c,k}=\xi +\beta$, for some $|\beta|\leq \delta$. 
\end{theorem}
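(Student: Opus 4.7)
The plan is to derive Theorem~\ref{thm:CoreClone} as a direct reformulation of Theorem~6.2 of Kim~\cite{ar:k06}, which provides a precise structural description of the 2-core of the Poisson cloning hypergraph model. The essential work therefore reduces to matching Kim's parametrization with ours and verifying the fixed-point equation that governs $\xi$.

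I would begin by recalling Kim's framework: in $\widetilde{H}_{n,p,k}$ each vertex receives an independent $\mathrm{Po}(\lambda)$ number of clones with $\lambda = p\binom{n-1}{k-1} = ck$, and the clones are matched uniformly into $k$-hyperedges. The natural stripping process (iteratively delete degree-one vertices and their incident edges) can be analyzed via a branching-process heuristic, which yields the self-consistency equation $q = (1 - e^{-\lambda q})^{k-1}$ for the probability $q$ that a given clone lies in the 2-core: such a clone survives iff, among the other $k-1$ clones in its matched edge, each belongs to a vertex having at least one further surviving clone. Substituting $\lambda = ck$ matches this with our equation $x = (1-e^{-xck})^{k-1}$, so $q = \bar{x}$, and $\xi := \bar{x}\, ck = q\lambda$ emerges as the effective Poisson intensity restricted to surviving clones. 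The hypothesis $ck > \lambda_k$ is exactly what makes $x \mapsto (1-e^{-xck})^{k-1}$ cross the diagonal at a positive point, ensuring a nontrivial $\bar{x}$.

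Next, since a vertex joins the 2-core iff at least two of its clones survive, and surviving clones at a given vertex form a thinned Poisson of parameter $q\lambda = \xi$, the probability of being in the core equals $1 - e^{-\xi} - \xi e^{-\xi}$; multiplied by $n$ this is the stated mean, and Kim's exponential concentration supplies the $\pm \delta n$ deviation with failure probability $n^{-\omega(1)}$. The distributional claim would then follow immediately: conditionally on surviving, the degree of a vertex is $\mathrm{Po}(\xi)$ conditioned on $\ge 2$, and the surviving clones are matched uniformly into $k$-edges --- exactly the cloning model $(\tilde{N}_2,\, \mathrm{Po}_{\ge 2}(\Lambda_{c,k}),\, k)$, with the slack $\beta$ in $\Lambda_{c,k} = \xi + \beta$ absorbing empirical fluctuations in the degree profile.

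The main obstacle will be the bookkeeping of the parameter translation: one must carefully match Kim's fixed-point equation to ours under the substitution $\lambda = ck$, verify that his ``largest solution'' coincides with $\bar{x}$, and check that the $n^{-\omega(1)}$ tail applies simultaneously to the core size and to the empirical degree distribution being within $\delta$ of its limit. Both follow from the exponential concentration built into Kim's proof applied to the appropriate functionals of the Poisson cloning process; no new probabilistic input beyond Theorem~6.2 of~\cite{ar:k06} should be needed.
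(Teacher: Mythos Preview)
Your proposal is correct and matches the paper's approach exactly: the paper does not give an independent proof of this theorem but simply states that it is implied by Theorem~6.2 in Kim~\cite{ar:k06}. Your additional heuristic discussion of the branching-process fixed point and the parameter translation $\lambda = ck$, $\xi = \bar{x}ck$ is more detailed than what the paper offers, but it is precisely the bookkeeping one would do to verify that Kim's statement yields the formulation here.
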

We shall say that a random variable is a $2$-truncated Poisson variable, if it is distributed like a Poisson variable, conditioned on being at least 2.
The next statement is taken from~\cite[Corollary 3.4]{fp10a}.
\begin{corollary}
\label{cor:edgesCore}
Let $\delta > 0$. Let $\tilde{N}_2$ and $\tilde{M}_2$ denote the number of vertices and edges in the core of $\widetilde{H}_{n,p,k}$, where $p = c k / \binom{n-1}{k-1}$ and $ck > \lambda_k$, where $\lambda_k$ is defined in Theorem~\ref{thm:CoreClone}. Then, with probability $1 - n^{-\omega(1)}$,
\[
\tilde{N}_2 = (1-e^{-\xi}-\xi e^{-\xi})n \pm \delta n
\quad\text{ and }\quad
\tilde{M}_2 = {\xi (1-e^{-\xi}) \over k(1-e^{-\xi} - \xi e^{-\xi})} \tilde{N}_2 \pm \delta n,
\]
where $\xi = \bar{x}ck$ and $\bar{x}$ is the largest solution of the equation $x = (1-e^{-xck})^{k-1}$.
\end{corollary}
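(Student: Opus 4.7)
The vertex bound on $\tilde{N}_2$ is nothing more than a restatement of the vertex count in Theorem~\ref{thm:CoreClone}, so the work is in the edge count. The plan is to exploit the second half of Theorem~\ref{thm:CoreClone}, which tells us that \emph{conditional} on $\tilde{N}_2$, the core itself is distributed as the cloning model with $\tilde{N}_2$ vertices whose degrees are i.i.d.\ copies of a $2$-truncated Poisson variable with parameter $\Lambda_{c,k}=\xi+\beta$, where $|\beta|\le\delta_0$ for a parameter $\delta_0$ that we are free to choose. The key identity the proof rests on is that in any cloning model every edge consumes exactly $k$ clones, so
\[
k\tilde{M}_2 \;=\; \sum_{i=1}^{\tilde{N}_2} d_i,
\]
where $d_1,\dots,d_{\tilde{N}_2}$ are the degrees. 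Thus the problem reduces to estimating a sum of $\Theta(n)$ i.i.d.\ $2$-truncated Poisson variables with parameter $\Lambda_{c,k}$.

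First I would compute the expectation in closed form. For $X\sim\mathrm{Po}(\lambda)$ a direct calculation gives
\[
\mathbb{E}[X\mid X\ge 2] \;=\; \frac{\mathbb{E}[X]-\Pr[X=1]}{\Pr[X\ge 2]} \;=\; \frac{\lambda-\lambda e^{-\lambda}}{1-e^{-\lambda}-\lambda e^{-\lambda}} \;=\; \frac{\lambda(1-e^{-\lambda})}{1-e^{-\lambda}-\lambda e^{-\lambda}}.
\]
Call this function $f(\lambda)$. Then $\mathbb{E}[k\tilde{M}_2\mid \tilde{N}_2,\Lambda_{c,k}] = \tilde{N}_2\, f(\Lambda_{c,k})$, which, when $\Lambda_{c,k}=\xi$, already produces the prefactor $\xi(1-e^{-\xi})/\bigl(k(1-e^{-\xi}-\xi e^{-\xi})\bigr)$ claimed in the statement.

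Next I would establish concentration of $\sum d_i$ around its conditional mean. The $2$-truncated Poisson with parameter bounded above by some constant has finite exponential moments, so the standard Chernoff bound for sums of i.i.d.\ variables with bounded MGF yields
\[
\Pr\!\left[\left|\sum_{i=1}^{\tilde{N}_2}d_i - \tilde{N}_2 f(\Lambda_{c,k})\right| > \eta n\,\Big|\,\tilde{N}_2,\Lambda_{c,k}\right] \;\le\; 2\exp(-c_1\eta^2 n)
\]
for some $c_1 = c_1(\xi,k)>0$, which is $n^{-\omega(1)}$ for any fixed $\eta>0$. Here I use that $\tilde{N}_2=\Theta(n)$ on the high-probability event from Theorem~\ref{thm:CoreClone}.

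Finally I would absorb the error coming from $\Lambda_{c,k}=\xi+\beta$ by continuity. Since $f$ is smooth and $\xi$ is a fixed positive constant (determined by $k$ and $c$), $|f(\xi+\beta)-f(\xi)|\le C|\beta|\le C\delta_0$ for a constant $C$ depending only on $\xi,k$. Combining the three error contributions (the $\delta_0 n$ fluctuation in $\tilde{N}_2$, the $C\delta_0 \tilde{N}_2$ error from replacing $f(\Lambda_{c,k})$ by $f(\xi)$, and the Chernoff deviation $\eta n$) and dividing by $k$ yields
\[
\tilde{M}_2 \;=\; \frac{\xi(1-e^{-\xi})}{k(1-e^{-\xi}-\xi e^{-\xi})}\,\tilde{N}_2 \;\pm\; O\!\bigl((\delta_0+\eta)n\bigr).
\]
Given the target $\delta>0$ in the corollary, I choose $\delta_0$ and $\eta$ small enough (in terms of $\delta,\xi,k$) that this error is at most $\delta n$, and the claim follows.

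The only non-routine point is the concentration step: one has to verify that a $2$-truncated Poisson inherits a uniformly bounded moment generating function from the underlying Poisson (which it does, since conditioning on $\{X\ge 2\}$ only inflates the MGF by the constant $1/\Pr[X\ge 2]$), so that the Chernoff bound can be applied with constants independent of $n$. Everything else is routine bookkeeping of the error terms.
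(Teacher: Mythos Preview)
Your proposal is correct and follows exactly the natural approach: invoke Theorem~\ref{thm:CoreClone} for the vertex count and the distributional description of the core, use the clone-counting identity $k\tilde{M}_2 = \sum_i d_i$ (up to at most $k-1$ unmatched clones, which is negligible), compute the mean of a $2$-truncated Poisson, and apply a Chernoff-type bound plus a continuity argument in $\Lambda_{c,k}$. The paper itself does not supply a proof of this corollary; it simply imports the statement from~\cite[Corollary~3.4]{fp10}, and the argument there is essentially the one you have written.
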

In the following we will collect a few basic properties of the relation of the number of vertices and edges in the core of $\widetilde{H}_{n,p,k}$. We define the functions $$f(x) = \frac{x(1-e^{-x})}{k(1-e^{-x}-x e^{-x})} ~~~\text{and}~~~ g(x) = \frac{x}{k(1-e^{-x})^{k-1}}$$ and recall that $c_k^*$ and $\xi^*$ in Theorem~\ref{thm:main} and also Theorem~\ref{thm:densitycore} are given by the solution of the system
\begin{equation}
\label{eq:fg}
	1 = f(\xi^*)
\qquad\text{and}	\qquad
	c_k^* = g(\xi^*).
\end{equation}
An easy calculation shows that $f(x)$ is an increasing function of $x$ and infinitely differentiable over $\mathbb{R}^+$, and that $g(x)$ has a unique minimum, which is obtained at $x = x_g$. Moreover,~$g(x_g) = \lambda_k/k$, where $\lambda_k$ is defined in Theorem~\ref{thm:CoreClone}. We shall need the following technical claim.
\begin{claim}
$x_g < \xi^*$.
\end{claim}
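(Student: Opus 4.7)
The plan is to use the monotonicity of $f$ (already remarked after \eqref{eq:fg}) to reduce the claim to showing $f(x_g) < 1$, because then $f(x_g) < 1 = f(\xi^*)$ forces $x_g < \xi^*$.

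First I would derive a closed form for $x_g$. Taking the logarithmic derivative of $g$, one finds $g'(x)/g(x) = 1/x - (k-1) e^{-x}/(1-e^{-x})$, so the unique critical point $x_g > 0$ satisfies
\[
  1 - e^{-x_g} \;=\; (k-1)\,x_g\,e^{-x_g}, \qquad\text{equivalently}\qquad e^{x_g} = 1 + (k-1)x_g.
\]
Substituting this relation into the numerator and denominator of $f$, the numerator becomes $x_g(1-e^{-x_g}) = (k-1)x_g^2 e^{-x_g}$, and the denominator becomes
\[
  k\bigl(1-e^{-x_g} - x_g e^{-x_g}\bigr) \;=\; k\bigl((k-1)x_g e^{-x_g} - x_g e^{-x_g}\bigr) \;=\; k(k-2)x_g e^{-x_g}.
\]
Dividing gives the clean identity
\[
  f(x_g) \;=\; \frac{(k-1)\,x_g}{k(k-2)}.
\]

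Next I would show this equals less than $1$, i.e.\ $x_g < k(k-2)/(k-1) =: L_k$. Observe that $\psi(x) := e^{x} - 1 - (k-1)x$ is strictly convex with $\psi(0)=0$ and $\psi'(0) = 1 - (k-1) < 0$ (as $k\ge3$); hence $\psi$ is negative on $(0,x_g)$ and positive on $(x_g,\infty)$. Thus $x_g < L_k$ is equivalent to $\psi(L_k) > 0$. Since $(k-1)L_k = k(k-2)$, we have $1 + (k-1)L_k = (k-1)^2$, so we must show $e^{L_k} > (k-1)^2$, i.e.\
\[
  L_k \;=\; (k-1) - \tfrac{1}{k-1} \;>\; 2\log(k-1).
\]
Setting $m := k-1\ge 2$ and $\phi(m) := m - 1/m - 2\log m$, I compute $\phi'(m) = 1 + 1/m^2 - 2/m = (1 - 1/m)^2 \ge 0$, so $\phi$ is non-decreasing on $[2,\infty)$, while $\phi(2) = 3/2 - \log 4 > 0$ since $\log 4 < 1.4$. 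Hence $\phi(k-1) > 0$ for all $k\ge 3$, which gives $x_g < L_k$ and therefore $f(x_g) < 1 = f(\xi^*)$. Monotonicity of $f$ concludes $x_g < \xi^*$.

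There is no real obstacle here; the only moderately delicate point is arranging the algebra so that the factor $(k-2)$ appears naturally in the denominator after using the critical-point identity — this is what makes the reduction to the elementary inequality $\phi(m)>0$ possible, and is also why the argument requires $k\ge 3$ (for $k=2$ the factor $k-2$ vanishes and the claim itself is irrelevant in the paper's setting).
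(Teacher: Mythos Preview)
Your proof is correct and takes a genuinely different route from the paper's. The paper proceeds in two separate steps: it first checks that the numerator of $g'$ is positive at $x_0 = 2\log(k-1)$, giving $x_g \le 2\log(k-1)$, and then bounds $\xi^*$ from below by $k/2$ (via $f(k/2)\le 1$ for $k\ge4$, numerically for $k=3$), concluding with the comparison $2\log(k-1) < k/2$. You instead exploit the critical-point identity $e^{x_g} = 1+(k-1)x_g$ to evaluate $f(x_g)$ \emph{exactly} as $\tfrac{(k-1)x_g}{k(k-2)}$, reducing everything to the single elementary inequality $(k-1) - \tfrac{1}{k-1} > 2\log(k-1)$, which you prove cleanly via $\phi'(m) = (1-1/m)^2 \ge 0$.

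It is worth noting that the two arguments ultimately rest on the same elementary fact: the paper's ``easily checked'' positivity of the numerator of $g'(2\log(k-1))$ is exactly your inequality $\phi(k-1)>0$. The difference is that you finish directly from this, whereas the paper still requires the additional comparison $k/2 > 2\log(k-1)$ --- which in fact fails for $k\in\{4,5,6,7\}$. Your argument sidesteps that issue entirely and is both tighter and self-contained.
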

\begin{proof}
A simple calculation reveals that
\[
	g'(x) = {\frac { 1-{e^{-x}}-(k-1)x{e^{-x}}}{k ( 1- e^{-x} ) ^{k}}}.
\]
Let $x_0 = 2\log(k-1)$. The numerator of $g'(x_0)$ is 
$1- {1 \over (k-1)^2} - {2\log (k-1) \over k-1}, $
which is easily seen to be greater than zero for all $k\geq 3$. Hence $g'(x_0)>0$ and thus $x_g \leq x_0$. 

In the remainder we argue that $\xi^* \ge {k}/2$, which settles the claim with $x_0 < {k}/2$. Note that the monotonicity of $f$ guarantees that it is enough to show that $f({k}/{2}) \le 1$. Using the estimate $e^x \ge 1 + x + x^2/2$, which is valid for all $x \ge 0$ we obtain
\[
	f\left(\frac{k}{2}\right) =  \frac12 \cdot\frac{e^{k/2} - 1}{e^{k/2} - 1 - k/2} = \frac12\left(1 + \frac{k}{2(e^{k/2}-1-k/2)}\right) \le \frac12\left(1 + \frac4k\right).
\]
Note that for $k \ge 4$ this expression is at most $1$, thus concluding the proof in these cases. Finally, if $k = 3$, then numerical calculations imply that $\xi^* > 2.14 > 2\log 2$.
\end{proof}
Let us assume that~$p = c k / \binom{n-1}{k-1}$, where~$c = (1-\eps)c_k^* > \lambda_k/k$, and set~$\xi = \bar{x}ck$, where~$\bar{x}$ is the largest solution of the equation~$x = (1-e^{-xck})^{k-1}$. So,~$\xi$ is the largest solution of~$c = g(\xi)$, implying with the above claim that~$\xi < \xi^*$. Therefore, we have~$f(\xi) < 1$, and Corollary~\ref{cor:edgesCore} guarantees that with probability~$1 - n^{-\omega(1)}$ the density of the core of~$H_{n,p,k}$ is~less than $1$. This argument can be extended to obtain the following finer statement.
\begin{corollary}
\label{cor:relNM}
Let $\delta > 0$ be sufficiently small and choose $c > \lambda_k/k$ such that the largest solution $\xi$ of the equation $c = g(\xi)$ satisfies $\xi = \xi^* - \delta$, where $\xi^*$ is as in Theorem~\ref{thm:main}. Then there is an $\eps > 0$ such that $\eps = \Theta(\delta)$ and $c = (1-\eps)c_k^*$. Moreover, there is a constant $e_k > 0$ such with probability $1 - n^{-\omega(1)}$
\[
	\tilde{M}_2 = \tilde{N}_2(1 - e_k\delta + \Theta(\delta^2)).
\]
\end{corollary}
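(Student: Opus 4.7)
The plan is to derive both conclusions from two Taylor expansions around $\xi^*$, combined with the structural description of the core provided by Corollary~\ref{cor:edgesCore}. For the first conclusion, the idea is to expand $g$ at $\xi^*$: since $c=g(\xi^*-\delta)$ and $c_k^*=g(\xi^*)$, and since $g$ is smooth on $\mathbb{R}^+$ (as noted right after~(\ref{eq:fg})), we obtain
\[
c = c_k^* - g'(\xi^*)\,\delta + O(\delta^2).
\]
The preceding claim established that $x_g<\xi^*$, where $x_g$ is the unique minimum of $g$; hence $g'(\xi^*)>0$. Setting $\eps:=(c_k^*-c)/c_k^*$ then yields $\eps=g'(\xi^*)\,\delta/c_k^*+O(\delta^2)=\Theta(\delta)$ and $c=(1-\eps)c_k^*$.

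For the second conclusion, the plan is to apply Corollary~\ref{cor:edgesCore} with its internal parameter set to some quantity that is $o(\delta)$---for concreteness, $\delta^2$---so that its additive error is dominated by the multiplicative correction we are about to extract. That corollary then gives, with probability $1-n^{-\omega(1)}$,
\[
\tilde{M}_2 = f(\xi)\,\tilde{N}_2 \pm \delta^2 n = \tilde{N}_2\bigl(f(\xi)+O(\delta^2)\bigr),
\]
where the second equality uses $\tilde{N}_2=\Theta(n)$, also from the same corollary. I would then Taylor expand $f$ at $\xi^*$, invoking the defining relation $f(\xi^*)=1$ from~(\ref{eq:fg}), to obtain
\[
f(\xi^*-\delta) = 1 - f'(\xi^*)\,\delta + O(\delta^2),
\]
and set $e_k:=f'(\xi^*)$. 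Strict positivity of $e_k$ is immediate from the earlier observation that $f$ is strictly increasing and infinitely differentiable on $\mathbb{R}^+$.

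There is no serious obstacle here: both Taylor expansions are routine once smoothness of $f$ and $g$ is available, both leading derivatives have already been shown to be strictly positive, and the probabilistic content is borrowed verbatim from Corollary~\ref{cor:edgesCore}. The only mild subtlety worth flagging is the calibration of the parameter supplied to Corollary~\ref{cor:edgesCore}: it must be chosen as a sufficiently small function of $\delta$ (any $o(\delta)$ quantity such as $\delta^2$ suffices) so that the $\pm\delta^2 n$ additive error is swallowed by the $O(\delta^2)\tilde{N}_2$ multiplicative term that drives the final estimate.
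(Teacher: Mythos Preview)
Your proposal is correct and follows essentially the same approach as the paper: Taylor expand $g$ at $\xi^*$ to get $\eps=\Theta(\delta)$ (the paper uses the mean-value form $g'(\mu)$ rather than $g'(\xi^*)+O(\delta)$, but this is cosmetic), then apply Corollary~\ref{cor:edgesCore} with parameter $\delta^2$ and Taylor expand $f$ at $\xi^*$ to identify $e_k=f'(\xi^*)>0$. Your remark about calibrating the parameter fed into Corollary~\ref{cor:edgesCore} to $o(\delta)$ is exactly the point the paper makes implicitly by substituting $\delta^2$ for $\delta$.
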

\begin{proof}
We first show that there is an $\eps>0$ with the claimed properties. Note that $\xi$ is defined through the equation $c = g(\xi)$ and $\xi^*$
through $c_k^* = g(\xi^*)$. Let $x_g$ be the minimizer of $g$, i.e., $g(x_g) = \lambda_k/k$, and note that whenever $x > x_g$ we have  
$g'(x) > 0$. Applying Taylor's Theorem and using~\eqref{eq:fg} we infer that there is a $\mu \in [\xi, \xi^*]$ such that
\[
	c = g(\xi) = g(\xi^*) + g'(\mu)(\xi - \xi^*) = c_k^* - g'(\mu) \delta.
\]
So $c = (1-\eps)c_k^*$, where $\eps = \frac{g'(\mu)}{c_k^*}\delta$, and note that $g'(\mu)$ remains bounded for $\mu \in [\xi, \xi^*]$,
whenever $\delta$ is sufficiently small.

To see the claim for $\tilde{M}_2$, note that Corollary~\ref{cor:edgesCore} (where we use $\delta^2$ for $\delta$) guarantees that with probability
$1- n^{-\omega(1)}$ we may assume that $\tilde{M}_2 = (f(\xi) \pm \delta^2)\tilde{N}_2$. Moreover, Taylor's Theorem, this time applied to~$f$, implies that 
\[
	f(\xi) = f(\xi^*) + f'(\xi^*)(\xi - \xi^*) + O((\xi - \xi^*)^2)= 1 - f'(\xi^*)\delta + O(\delta^2),
\]
thus concluding the proof with $e_k = {f'(\xi^*)}$ and the fact that $f$ is increasing.
\end{proof}
We immediately obtain the following corollary.
\begin{corollary}
Let $k\ge3$. Let $\eps >0$ be sufficiently small and suppose that $p = (1-\eps)c_k^* k / \binom{n-1}{k-1}$. Then, with probability $1 - n^{-\omega(1)}$
\[
\tilde{M}_2 \le (1-\eps^2)\tilde{N}_2.
\]
\end{corollary}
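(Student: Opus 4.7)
The plan is to derive this as a direct consequence of Corollary~\ref{cor:relNM} by inverting the correspondence between $\eps$ and $\delta$. Given $p = (1-\eps)c_k^* k / \binom{n-1}{k-1}$, set $c = (1-\eps)c_k^*$. Let $\xi$ be the largest solution of $c = g(\xi)$; since $\xi^* > x_g$ (the minimizer of $g$) by the earlier claim, and $g$ is continuous and strictly increasing on $(x_g, \infty)$, we have that $\xi = \xi^* - \delta$ for some $\delta > 0$, and the map $\eps \mapsto \delta$ is a well-defined increasing bijection near $0$.

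I would then quantify this relation via Taylor's theorem applied to $g$ at $\xi^*$: using \eqref{eq:fg},
\[
(1-\eps)c_k^* = c = g(\xi) = g(\xi^*) - g'(\xi^*)\delta + O(\delta^2) = c_k^* - g'(\xi^*)\delta + O(\delta^2),
\]
which (since $g'(\xi^*) > 0$) gives $\delta = \frac{c_k^*}{g'(\xi^*)}\eps + O(\eps^2)$. In particular, there is a constant $K_1 = K_1(k) > 0$ such that $\delta \ge K_1\eps$ for all sufficiently small $\eps$.

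Now I would invoke Corollary~\ref{cor:relNM}, which says that with probability $1 - n^{-\omega(1)}$,
\[
\tilde{M}_2 = \tilde{N}_2\bigl(1 - e_k\delta + \Theta(\delta^2)\bigr) \le \tilde{N}_2\bigl(1 - e_k K_1 \eps + O(\eps^2)\bigr).
\]
For $\eps$ small enough the linear term $e_k K_1 \eps$ dominates both the $O(\eps^2)$ error and $\eps^2$ itself, so $e_k K_1 \eps - O(\eps^2) \ge \eps^2$, and we conclude $\tilde{M}_2 \le (1-\eps^2)\tilde{N}_2$ with the claimed probability.

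There is no real obstacle here, just bookkeeping: the entire content of the statement is already packaged in Corollary~\ref{cor:relNM}, and the only thing to verify is that the mapping $\eps \leftrightarrow \delta$ can be inverted with a linear quantitative bound, which reduces to the nonvanishing of $g'(\xi^*)$---a fact ensured by the preceding claim that places $\xi^*$ strictly to the right of the minimizer $x_g$ of $g$. The slight weakening from the linear gap $e_k K_1 \eps$ to the quadratic gap $\eps^2$ is precisely what allows us to absorb the $O(\eps^2)$ error term for all sufficiently small $\eps$.
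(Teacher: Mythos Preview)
Your proposal is correct and matches the paper's approach: the paper states this corollary immediately after Corollary~\ref{cor:relNM} with the remark ``We immediately obtain the following corollary,'' and your argument---inverting the $\eps \leftrightarrow \delta$ correspondence via Taylor expansion of $g$ at $\xi^*$ and then absorbing the $\Theta(\delta^2)$ error into the gap between $e_k K_1\eps$ and $\eps^2$---is exactly the intended unpacking of that remark.
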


\subsection{Subgraphs of the 2-Core}
\label{ssec:subgraphs2core}

In order to obtain a sufficient bound for the probability that the core of~$\Pcl$ has property~$D_{\eps^3}$ we will exploit the following statement. Recall that for a hypergraph $H = H(V,E)$ and $S \subseteq V$ we denote by $H[S]$ the subgraph induced by $S$, and we abbreviate $e_S = |E_H(S)|$ for the number of edges in $H[S]$.
\begin{proposition}
\label{prop:Umax1dense}
Let $\gamma > 0$ and let~$H=H(V,E)$ be a $k$-graph such that $|E| < \lfloor(1-\gamma)|V|\rfloor$. Moreover, let~$U$ be an inclusion maximal subset of~$V$ such that $e_U \ge (1-\gamma)|U|$. Then~$e_U = \lceil(1-\gamma)|U|\rceil$ and all edges~$e \in E$ satisfy~$|e \cap U| \neq k-1$.
\end{proposition}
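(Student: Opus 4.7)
The plan is to derive the two conclusions in order, using the first (no edge has $|e \cap U| = k-1$) to establish the second (the exact count of $e_U$). A preliminary remark: the hypothesis $e_H < \lfloor(1-\gamma)v_H\rfloor$ prevents $U = V_H$, so $V_H \setminus U$ is nonempty and inclusion maximality has genuine content.

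First I would prove the statement about edges by contradiction. Suppose some edge $e$ satisfies $|e \cap U| = k-1$, and let $v$ be its unique vertex outside $U$. Set $U' := U \cup \{v\}$; then $v_{U'} = v_U + 1$ and $e_{U'} \ge e_U + 1$ (the edge $e$ is now fully contained in $U'$). Consequently
\[
	e_{U'} - (1-\gamma)v_{U'} \;\ge\; (e_U - (1-\gamma)v_U) + \gamma \;\ge\; \gamma \;>\; 0,
\]
so $U'$ also satisfies the density condition, contradicting the inclusion maximality of $U$.

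Next I would use this to fix the value of $e_U$ exactly. Since no edge meets $U$ in exactly $k-1$ vertices, for any $v \in V_H \setminus U$ no edge containing $v$ lies entirely in $U \cup \{v\}$; hence passing from $U$ to $U \cup \{v\}$ introduces no new edges and $e_{U \cup \{v\}} = e_U$. Inclusion maximality then gives $e_U < (1-\gamma)(v_U + 1) < (1-\gamma)v_U + 1$. As $e_U$ is an integer, this forces $e_U \le \lceil(1-\gamma)v_U\rceil$; meanwhile, the assumption $e_U \ge (1-\gamma)v_U$ combined with integrality yields $e_U \ge \lceil(1-\gamma)v_U\rceil$. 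Combining the two bounds produces the desired equality.

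I do not expect a genuine obstacle: the proof is a short two-step contradiction argument extracted from maximality. The only place calling for a moment of care is the passage from the strict inequality $e_U < (1-\gamma)v_U + 1$ to the ceiling bound, which should be verified separately according to whether or not $(1-\gamma)v_U$ is an integer.
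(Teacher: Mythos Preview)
Your proposal is correct and uses essentially the same maximality argument as the paper; the only difference is that the paper proves the two conclusions in the opposite order (first $e_U=\lceil(1-\gamma)v_U\rceil$, by adding an arbitrary vertex and using only $e_{U'}\ge e_U$, then the edge condition), whereas you establish the edge condition first and then exploit it to get $e_{U\cup\{v\}}=e_U$ exactly. Both routes are short contradiction arguments from inclusion maximality, and your handling of the ceiling inequality is fine.
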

\begin{proof}
If~$e_U>\lceil(1-\gamma)|U|\rceil$, then $e_U \ge (1-\gamma)|U| + 1$. Let~$U' = U \cup \{v\}$, where~$v$ is any vertex in~$V \setminus U$. Note that such a vertex exists, as $U \neq V$. Moreover, denote by~$d$ the degree of~$v$ in~$U$, i.e., the number of edges in~$H$ that contain~$v$ and all other vertices only from~$U$. Then
\[
	e_{U'} = e_U + d \ge e_U \ge (1-\gamma)|U| + 1.
\]
Note that $|U'| = |U| + 1$. Hence, the above inequality implies that
\[
	e_{U'} \ge (1-\gamma)(|U| + 1) - (1 - \gamma) + 1 \ge (1 - \gamma)|U'|,
\]
which contradicts the maximality of~$U$. Similarly, if there was an edge~$e$ such that~$|e \cap U| = k-1$, then we could construct a larger subset of~$V_H$ that also satisfies the density requirement by adding the vertex in~$e \setminus U$ to~$U$.
\end{proof}
Let $\gamma >0$. The following lemma bounds the probability that a given subset of the vertices of the core is maximal and has density $\ge 1- \gamma$, \emph{assuming that the degree sequence has been exposed}. That is, the randomness is that of the 3rd stage of the exposure process in the Poisson cloning model. A similar statement was shown in~\cite{fp10} for the special case $\gamma = 0$.
\begin{lemma} \label{lem:BU}
Let~$k\ge 2$,~$\mathbf{d} = (d_1, \dots, d_N)$ be a degree sequence and~$U \subseteq \{1, \dots, N\}$ such that 
$|U| = \lfloor \beta N \rfloor$, where $1/2 < \beta \le 1$. Moreover, set~$M = k^{-1}\sum_{i=1}^N d_i$ and~$q = (kM)^{-1} \sum_{i\in U} d_i$. Let $0 < \gamma < 1/4$ and assume that~$3N/4 < M < (1-\gamma)N$. If~$\mathcal{B}(\beta, q;\, \gamma)$ denotes the event that~$U$ is an inclusion maximal set of~$H_{\mathbf{d}, k}$ such that~$e_U \ge (1-\gamma)|U|$, then
\[
	\mathbb{P}_{\mathbf{d},k}(\mathcal{B}(\beta, q;\, \gamma)) \le \max\left\{1, \binom{M}{\lfloor \beta N \rfloor}\right\} \cdot (2^k-k-1)^{M- \beta N } \cdot e^{-kM\cdot H(q)} \cdot e^{O(\gamma  \log(1/\gamma) N)},
\]
where $H(x) = -x\ln x - (1-x)\ln(1-x)$ denotes the entropy function, and $\mathbb{P}_{\mathbf{d},k}$ denotes the probability measure on the space of Stage 3, given the outcomes of the first two stages.
\end{lemma}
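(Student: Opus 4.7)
The plan is to write $\mathbb{P}_{\mathbf{d},k}(\mathcal{B}(\beta, q;\,\gamma))$ as the ratio of the number of matchings realizing the event to the total number of matchings $(kM)!/(M!(k!)^M)$ on the $kM$ clones, then bound the numerator by a direct structural count. Proposition~\ref{prop:Umax1dense} reduces the event to two structural constraints: exactly $e_U = \lceil (1-\gamma) \beta N \rceil$ edges are fully contained in $U$ (viewed at the clone level), and no edge has exactly $k-1$ $U$-clones. Thus each of the remaining $M - e_U$ ``mixed'' edges carries between $0$ and $k-2$ $U$-clones.

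To count favorable matchings I pass to ordered configurations (sequence the $M$ edges and, within each edge, sequence the $k$ clones), where each unordered matching corresponds to $M!(k!)^M$ ordered configurations. I place the $e_U$ pure-$U$ edges first: there are exactly $L!/(L-ke_U)!$ ordered sequences of $ke_U$ $U$-clones filling them. For the remaining $M-e_U$ mixed edges I specify, for each edge, a \emph{pattern}---the subset of its $k$ ordered positions that host $U$-clones. The allowed patterns are subsets of $\{1,\dots,k\}$ of size in $\{0,1,\dots,k-2\}$, yielding $2^k-k-1$ options per mixed edge. Once the pattern sequence is fixed (and provided the total number of $U$-positions equals $L-ke_U$), the $U$-clones fill their slots in $(L-ke_U)!$ ways and the non-$U$-clones in $\bar L!$ ways. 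Dropping the totals constraint yields the upper bound $(2^k-k-1)^{M-e_U}$ on the number of valid pattern sequences, so the total count of ordered configurations with pure-$U$ edges first is at most $\tfrac{L!}{(L-ke_U)!}\cdot (2^k-k-1)^{M-e_U}\cdot (L-ke_U)!\,\bar L! = L!\,\bar L!\,(2^k-k-1)^{M-e_U}$.

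Each favorable unordered matching corresponds to $e_U!(M-e_U)!(k!)^M$ such ordered configurations (symmetry within the pure and mixed blocks plus clone orderings inside each edge), so dividing by this symmetry factor and by the total matching count gives
\[
\mathbb{P}_{\mathbf{d},k}(\mathcal{B}(\beta,q;\,\gamma))
\;\le\;
\binom{M}{e_U}\cdot (2^k-k-1)^{M-e_U}\cdot \frac{1}{\binom{kM}{L}}.
\]
The final step applies Proposition~\ref{prop:binomial}: by Stirling, $1/\binom{kM}{L} = \Theta((kM)^{1/2})\,e^{-kM H(q)}$; the discrepancy $|e_U-\lfloor\beta N\rfloor|\le \gamma\beta N$ combined with the second bound in Proposition~\ref{prop:binomial} gives $\binom{M}{e_U}\le \binom{M}{\lfloor\beta N\rfloor}\cdot e^{O(\gamma N\log(1/\gamma))}$; and $(2^k-k-1)^{M-e_U} = (2^k-k-1)^{M-\beta N}\cdot e^{O(\gamma N)}$. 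All polynomial factors and residual error are absorbed into $e^{O(\gamma \log(1/\gamma) N)}$, and the $\max\{1,\cdot\}$ absorbs the degenerate case $\beta N > M$ (where the binomial vanishes but the event may still be feasible for small enough $e_U$).

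The main obstacle is purely bookkeeping: one must verify that ``dropping the $\sum|S_i| = L-ke_U$ constraint'' indeed yields an upper bound (which it does, since infeasible pattern sequences simply contribute zero arrangements), and that the ratio $\binom{M}{e_U}/\binom{M}{\lfloor\beta N\rfloor}$ is controlled. It is precisely here that the hypotheses $\beta > 1/2$ and $M < (1-\gamma)N$ enter, ensuring that $\alpha := \lfloor\beta N\rfloor/M$ stays bounded away from both $0$ and $1$ by a function of $\gamma$, so that the logarithmic factor $\log\max\{1/\alpha, 1/(1-\alpha)\}$ appearing in Proposition~\ref{prop:binomial} is $O(\log(1/\gamma))$.
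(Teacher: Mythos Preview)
Your proposal is correct and arrives at exactly the same intermediate bound as the paper, namely
\[
\mathbb{P}_{\mathbf{d},k}(\mathcal{B}(\beta,q;\,\gamma)) \;\le\; \binom{M}{e_U}\,(2^k-k-1)^{M-e_U}\,\binom{kM}{qkM}^{-1},
\]
but via a more elementary route. The paper introduces an auxiliary randomization: it places independent $\Bin(k,q)$-distributed red/blue patterns on the $M$ bins, and realizes $\mathbb{P}_{\mathbf{d},k}(\mathcal{B})$ as the ratio $P/Z$, where $Z=\Pr{\sum|X_i|=kqM}$ and $P$ is the joint probability that additionally $R_k=e_U$ and $R_{k-1}=0$. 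Expanding $P$ as a multinomial sum over profiles $(b_0,\dots,b_{k-2})$ and bounding the constrained sum by the full multinomial $(2^k-k-1)^{M-e_U}$ gives the same expression. Your direct count of ordered configurations followed by division by the symmetry factor avoids the auxiliary model entirely; the paper's device is a probabilistic repackaging of the same combinatorics, since conditioning i.i.d.\ multinomial bins on a fixed total reproduces the uniform measure on configurations. Your route is arguably cleaner; the paper's buys nothing extra here, though the conditioning trick is a standard pattern that generalizes well.

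One small imprecision in your final paragraph: you assert that $\alpha=\lfloor\beta N\rfloor/M$ stays bounded away from $1$ by a function of $\gamma$, but this fails in the range $(1-\gamma)M<|U|\le M$ (indeed $\alpha$ can equal $1$), so Proposition~\ref{prop:binomial} does not control $\binom{M}{e_U}/\binom{M}{|U|}$ there. The paper splits off this regime explicitly: when $|U|\ge(1-\gamma)M$ one has $e_U\ge(1-\gamma)^2M\ge(1-2\gamma)M$, hence $\binom{M}{e_U}\le\binom{M}{2\gamma M}\le(e/2\gamma)^{2\gamma M}=e^{O(\gamma\log(1/\gamma)N)}$ directly, and the $\max\{1,\cdot\}$ in the statement then absorbs both this case and your $|U|>M$ case. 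Your Proposition~\ref{prop:binomial} argument is only needed, and only valid, for $|U|\le(1-\gamma)M$.
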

\begin{proof}
The graph~$H_{\mathbf{d}, k}$ is obtained by creating~$d_i$ clones for each~$1\le i\le N$ and, thereafter, by choosing uniformly at random a perfect~$k$-matching on this set of clones. Note that this is the same as throwing~$kM$ balls into~$M$ bins, with the condition that every obtains exactly~$k$ balls. We use this analogy to prove the claim as follows. Assume that we color the $kqM$ clones of the vertices in~$U$ red, and the remaining $k(1-q)M$ clones blue. So, by applying Proposition~\ref{prop:Umax1dense} we are interested in the probability of the event that there are exactly~$\lceil(1-\gamma)|U|\rceil$ bins with~$k$ red balls and no bin that contains exactly one blue ball.

We estimate the probability for this event as follows. We start by putting into each bin $k$ \emph{black} balls, labeled with the numbers 
$1, \dots, k$. Let~$\mathcal{K} = \{1, \dots, k\}$, and let~$X_1, \dots, X_M$ be independent random sets such that for $1 \le i\le M$
\[
	\forall \mathcal{K}' \subseteq \mathcal{K} ~:~ \Pr{X_i = \mathcal{K}'} = q^{|\mathcal{K}'|}(1-q)^{k - |\mathcal{K}'|}.
\]
Note that $|X_i|$ is distributed like $\Bin(k, q)$. We then recolor the balls in the $i$th bin that are in $X_i$ with red, and all others with blue. 
We infer that the total number of red balls is $X = \sum_{i=1}^M |X_i|$. Set 
\[
	Z = \Pr{X = kqM}. 
\]
Note that $\E{X} = kqM$, and that $X$ is distributed like $\Bin(kM, q)$. By applying Proposition~\ref{prop:binomial} we infer that
\[
	Z = \Pr{X = \E{X}} = (1 + o(1))(2\pi q(1-q)kM)^{-1/2}.
\]
Let $R_j$ be the number of $X_i$'s that contain $j$ elements, and set 
\[
	P = \Pr{X = kqM \wedge R_k = \lceil(1 - \gamma)|U|\rceil \wedge R_{k-1} = 0}. 
\]
Let $e_U = \lceil(1 - \gamma)|U|\rceil$. By using the above notation we may estimate
\begin{equation}
\label{eq:probBU}
	\Pr{\mathcal{B}(\beta, q;\, \gamma)} = \frac{P}{Z} \le \sqrt{2 M} \cdot \Pr{X = kqM \wedge R_k = e_U \wedge R_{k-1} = 0}.
\end{equation}
Let $p_j = \Pr{|X_i| = j} = \binom{k}{j}q^j(1-q)^{k-j}$. Moreover, define the set of integer sequences
\[
	\mathcal{A} = \left\{(b_0, \dots, b_{k-2}) \in \mathbb{N}^{k-1} ~:~ \sum_{j=0}^{k-2} b_j = M - e_U \textrm{ and } \sum_{j=0}^{k-2} j b_j = kqM - ke_U\right\}.
\]
Then
\begin{equation*}
\begin{split}
	& P =  \sum_{(b_0, \dots, b_{k-2})\in \mathcal{A}} \binom{M}{b_0, \dots, b_{k-2}, 0, e_U} \cdot 
\left( \prod_{j=0}^{k-2}p_j^{b_j} \right) \cdot p_k^{e_U}.
\end{split}
\end{equation*}
Observe that the summand can be rewritten as
\[
	\binom{M}{e_U} q^{kqM}(1-q)^{k(1-q)M} \cdot \binom{M-e_U}{b_0, \dots, b_{k-2}} \prod_{j=0}^{k-2} \binom{k}{j}^{b_j}.
\]
By applying the multinomial theorem we obtain the bound
\[
\sum_{(b_0, \dots, b_{k-2})\in \mathcal{A}}\binom{M-e_U}{b_0, \dots, b_{k-2}} \prod_{j=0}^{k-2} \binom{k}{j}^{b_j} \le (2^k-1-k)^{M-e_U}.
\]
Thus, from~\eqref{eq:probBU} we infer that for large $M$
\begin{equation*}
\label{eq:prelim}
	\Pr{\mathcal{B}(\beta, q;\, \gamma)}
	\le
	2\sqrt{M}\binom{M}{e_U} q^{kqM}(1-q)^{k(1-q)M} (2^k-k-1)^{M-e_U}.
\end{equation*}
The proof is completed by estimating $\binom{M}{e_U}$. More specifically, assume first that $|U| \ge (1-\gamma)M$. Then $\gamma < 1/4$ guarantees that
\[
	\binom{M}{e_U} \le \binom{M}{2\gamma M} \le \left(\frac{eM}{2\gamma M}\right)^{2\gamma M} = e^{O(\gamma \log(1/\gamma) \cdot N )}.
\]
Otherwise, let us write $|U| = \lfloor \beta N \rfloor = \eta M$, for some appropriate $\eta \le 1-\gamma$. Note that $\beta > 1/2$ and $M < (1-\gamma)N$ guarantee that $\eta > 1/2$. By applying Proposition~\ref{prop:binomial} we obtain
\[
	\binom{M}{e_U} = \binom{M}{(1 - \gamma)\eta M} \le \binom{M}{\eta M} e^{M\gamma\eta\log(\max\{1/\eta, 1/(1-\eta)\})} = \binom{M}{|U|}e^{O(\gamma \log(1/\gamma) N)}.
\]
\end{proof}
With the above lemma at hand we are ready to estimate the number of subsets of vertices in the core of $\Pcl$ that have density at least~$1-\eps^3$. Suppose that the degree sequence of the core~$\C$ is given by~$\mathbf{d} = (d_1, \dots, d_{N_2})$.  
Then, the number of edges in~$\C$ is~$M_2 = k^{-1}\sum_{i=1}^{N_2} d_i$. For~$q, \beta \in [0,1]$ let~$X_{q, \beta}= X_{q, \beta}(\C) = X_{q, \beta}(\mathbf{d})$ denote the number of subsets of~$\C$ with~$\lfloor \beta \tilde{N}_2 \rfloor$ vertices and total degree~$\lfloor q\cdot k \tilde{M}_2 \rfloor$. (We will omit writing ``$\lfloor . \rfloor$'' from now on.) Note that~$X_{q,\beta}$ is a random variable that depends \emph{only} on the outcomes of the first two stages of the exposure of the core. Let also~$Y_{q,\beta}$ denote the number of these sets that are inclusion maximal and have density at least~$1-\eps^3$.

Let~$\delta > 0$. Moreover, let~$p = ck/\binom{n-1}{k-1}$ be such that the largest solution~$\xi$ of the equation~$g(\xi) = c$ satisfies~$\xi = \xi^* - \delta$, where~$g(\xi^*) = c_k^*$. By applying Corollary~\ref{cor:relNM} we infer that there is a $\eps = \Theta(\delta)$ such that $c = (1-\eps)c_k^*$. Moreover, Theorem~\ref{thm:CoreClone} (where we use $\delta^3$ for $\delta$) guarantees that with probability~$1-n^{-\omega(1)}$
\[
\tilde{N}_2 = n(1-e^{-\xi} - \xi e^{-\xi}) \pm \delta^3 n ~~ \mbox{and} ~~ \Lambda_{c,k} = \xi \pm \delta^3, \qquad \text{where } \xi = \xi^* - \delta.
\]
Set
\[
	n_2 = (1 - e^{-\xi} - \xi e^{-\xi}) n ~\text{ and }~ m_2 = \frac{\xi(1 - e^{-\xi})}{k(1 - e^{-\xi} - \xi e^{-\xi})}n_2
\]
and let $\cal A$ be the event
\begin{equation}
\label{eq:N2M2}
	{\cal A} ~:~ \tilde{N}_2 = n_2 \pm \delta^3 n ~\text{ and }~ \tilde{M}_2 = m_2 \pm \delta^3 n.
\end{equation}
Corollary~\ref{cor:edgesCore} implies that $\Pr {\cal A} = 1-n^{-\omega(1)}$. Moreover, Corollary~\ref{cor:relNM} guarantees the existence of a $e_k > 0$ such that
\begin{equation} \label{eq:n2m2}
 m_2 = (1 - e_k\delta + \Theta(\delta^2))n_2.
\end{equation}
We shall assume all the above facts in the remainder. We are ready to prove the main result of this section, which deals with sets with more than $0.7N_2$ vertices. Smaller sets are treated at the end of this section.
\begin{lemma}
\label{lem:coresubsetslarge}
With the notation above, let $\beta \in [0.7,1 - e_k\delta/2]$ and let $\beta \leq q \leq 1 - {2(1-\beta)/k}$. Then, for sufficiently small $\delta > 0$
\[
\Pr { Y_{q,\beta} >0} = n^{-\omega(1)}.
\]
Moreover, when $q < \beta$ or $q > 1 - 2(1-\beta)/k$, the above probability is 0. 
\end{lemma}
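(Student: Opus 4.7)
I would first dispose of the range $q < \beta$ and $q > 1 - 2(1-\beta)/k$, which are deterministically empty under the high-probability event $\mathcal{A}$ of~\eqref{eq:N2M2}. Suppose $U$ is an inclusion-maximal $(1-\eps^3)$-dense subset of the core with $|U| = \beta N$ and $\sum_{i \in U} d_i = qkM$. Every edge contained in $U$ contributes $k$ to $\sum_{i \in U} d_i$, so $qkM \ge k\,e_U \ge k(1-\eps^3)\beta N$, and using $N/M = 1 + e_k\delta + O(\delta^2)$ from~\eqref{eq:n2m2} I obtain $q \ge (1-\eps^3)(1 + e_k\delta + O(\delta^2))\beta$. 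Since $\eps = \Theta(\delta)$ by Corollary~\ref{cor:relNM}, $\eps^3 = o(\delta)$, so for $\delta$ small enough the right-hand side strictly exceeds $\beta$, contradicting $q < \beta$. For the other case, every core vertex has degree at least $2$, so $(1-q)kM = \sum_{i \in V\setminus U}d_i \ge 2(1-\beta)N$; combining with $N/M > 1$ yields $1-q > 2(1-\beta)/k$, contradicting $q > 1 - 2(1-\beta)/k$.

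\textbf{First moment reduction.}
For $(\beta,q)$ in the valid range I apply $\Pr{Y_{q,\beta} > 0} \le \mathbb{E}[Y_{q,\beta}\mathbf{1}_{\mathcal A}] + \Pr{\overline{\mathcal A}}$ with $\Pr{\overline{\mathcal A}} = n^{-\omega(1)}$. Decomposing over the degree sequence of the core and invoking Lemma~\ref{lem:BU} with $\gamma = \eps^3$ gives
\[
\mathbb{E}[Y_{q,\beta}\mathbf{1}_{\mathcal A}] \le \mathbb{E}[X_{q,\beta}\mathbf{1}_{\mathcal A}]\cdot\max\!\left\{1,\binom{M}{\beta N}\right\}(2^k{-}k{-}1)^{M - \beta N}\, e^{-kMH(q)}\, e^{O(\eps^3\log(1/\eps^3)\,n)}.
\]
To bound $\mathbb{E}[X_{q,\beta}\mathbf{1}_{\mathcal A}]$, I recall that conditional on $\tilde N_2$ the degrees in the core are i.i.d.\ $\mathrm{Po}_{\ge 2}(\Lambda_{c,k})$ by Theorem~\ref{thm:CoreClone}, so by symmetry of the indices $\mathbb{E}[X_{q,\beta}\mid N] = \binom{N}{\beta N}\,\Pr{\sum_{i=1}^{\beta N}D_i = q\sum_{i=1}^N D_i}$. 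The right-hand probability factors as $\Pr{S_{\beta N} = qkM}\,\Pr{S_{(1-\beta)N} = (1-q)kM}/\Pr{S_N = kM}$; estimating the denominator by $\Theta(n^{-1/2})$ via the local central limit theorem at the mean and the numerators by Cram\'er-type large-deviation bounds for 2-truncated Poisson sums yields
$\mathbb{E}[X_{q,\beta}\mathbf{1}_{\mathcal A}] \le \exp\!\bigl(n\,((n_2/n)H(\beta) - K(\beta,q;\Lambda_{c,k}) + o(1))\bigr)$
for a nonnegative rate function $K$ vanishing iff $q = \beta$.

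\textbf{Uniform gap and conclusion.}
Combining the two bounds, taking logarithms and dividing by $n$ yields $\tfrac{1}{n}\log\mathbb{E}[Y_{q,\beta}\mathbf{1}_{\mathcal A}] \le F(\beta,q;\delta) + O(\eps^3\log(1/\eps^3))$ for an explicit continuous functional $F$. At $\delta = 0$ (load $c_k^*$, core density exactly one), an argument analogous to the first moment calculations in~\cite{fp10} shows $F(\beta,q;0) \le 0$ on the compact region $\{(\beta,q): 0.7 \le \beta \le 1,\, \beta \le q \le 1 - 2(1-\beta)/k\}$, with equality only at a single degenerate critical point. For $\delta > 0$ the core has density $1 - e_k\delta + O(\delta^2) < 1$, and the principal technical obstacle is to convert this slack into a uniform strict gap: I would Taylor-expand $F$ in $\delta$ around the critical point and verify that the first-order term is strictly negative throughout the region, yielding $F(\beta,q;\delta) \le -c(\delta)$ with $c(\delta) = \Theta(\delta)$. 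The exponent $\gamma = \eps^3$ in the $(1-\eps^3)$-density definition is calibrated so that the additive error $O(\eps^3\log(1/\eps^3)) = o(\delta)$ is dominated by $c(\delta)$ when $\eps = \Theta(\delta)$. Hence the gap survives and $\mathbb{E}[Y_{q,\beta}\mathbf{1}_{\mathcal A}] \le e^{-\Omega(\delta n)} = n^{-\omega(1)}$, which proves the claim.
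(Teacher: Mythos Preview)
Your overall architecture---first moment via Lemma~\ref{lem:BU}, large-deviation estimate for $\E{X_{q,\beta}}$, and then a sign analysis of the resulting exponent---is exactly the paper's route, and your large-deviation bound for $X_{q,\beta}$ is precisely what the paper imports from~\cite{fp10} (the rate function there is written explicitly as $(1-\beta)I(k(1-q)/(1-\beta))$ with $I$ the Cram\'er rate for a $2$-truncated Poisson).

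There is, however, a genuine gap in your ``uniform gap'' step. You propose to Taylor-expand $F(\beta,q;\delta)$ in $\delta$ and verify that the first-order coefficient is strictly negative throughout the region $\{0.7\le\beta\le 1,\ \beta\le q\le 1-2(1-\beta)/k\}$. In the paper's parametrization the exponent is
\[
f(\beta,q) \;+\; e_k\delta\bigl(kH(q)-\ln(2^k-k-1)\bigr) \;+\; O(\delta^2),
\]
so the first-order $\delta$-coefficient is $e_k\bigl(kH(q)-\ln(2^k-k-1)\bigr)$. This is \emph{not} negative throughout the region: for $k=3$ and $q=0.7$ one has $3H(0.7)\approx 1.83 > \ln 4\approx 1.39$, and the same failure occurs for $k=4$ near $q=0.7$. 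Hence your uniform bound $F\le -\Theta(\delta)$ cannot be obtained this way.

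The paper resolves this by a two-scale split at $\beta=1-\sqrt{\delta}$. Away from the corner ($\beta\le 1-\sqrt{\delta}$), it invokes the quantitative bound $f(\beta,q)\le -C(1-\beta)+O(\delta^2)$ from~\cite{fp10} (stated as Claim~\ref{cl:techn}), so the zeroth-order term alone is $\le -C\sqrt{\delta}$ and swamps any $O(\delta)$ correction regardless of its sign. Near the corner ($1-\sqrt{\delta}\le\beta\le 1-e_k\delta/2$), the constraint $q\ge\beta$ forces $q\ge 0.99$ for small $\delta$, and only there does one use that $kH(q)\le kH(0.99)<\ln(2^k-k-1)$ to make the $\delta$-term negative. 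Your plan would succeed once you insert this split; the single missing idea is that the linear-in-$(1-\beta)$ decay of $f$ at $\delta=0$ must carry the argument where the $\delta$-derivative has the wrong sign.

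A minor remark on the ``moreover'': your argument for $q<\beta$ uses $M<(1-\eps^3)N$, which holds only on $\mathcal{A}$, so you obtain $n^{-\omega(1)}$ rather than literally $0$. This is harmless for the application.
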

\begin{proof}
The proof follows the arguments in \cite{fp10a}, see Lemma 4.5 -- Claim 4.11 there. We describe here all necessary steps and refer each time we need a statement from~\cite{fp10a} at the appropriate position in that paper. Firstly, suppose that we have exposed the degree sequence~$\mathbf d$ of the core. Markov's inequality implies that
\begin{equation}
\label{eq:tmptmp}
	\Pr { Y_{q,\beta} >0 ~|~ \mathbf{d}} \le X_{q,\beta}(\mathbf{d}) \mathbb{P}_{\mathbf{d}, k}({\cal B}(\beta, q;\, \eps^3)),
\end{equation}
where ${\cal B}(\beta, q;\, \eps^3)$ is as in Lemma~\ref{lem:BU}. 
Note that for sufficiently small $\delta > 0$, ~\eqref{eq:n2m2}, the upper bound on $\beta$, and Proposition~\ref{prop:binomial} imply 
\[
	\binom{m_2 \pm \delta^3n}{\beta(n_2 \pm \delta^3n)}
	\le \binom{n_2}{\beta(n_2 \pm \delta^3n)}
	\le \binom{n_2}{\beta n_2} \cdot e^{O(\delta^3 \log(1/\delta) n_2)}.
\]
By conditioning on~$\cal A$, taking expectations on both sides of~\eqref{eq:tmptmp}, and applying Lemma~\ref{lem:BU} (see Lemma 4.5 in~\cite{fp10}) we infer that for $\delta >0$ sufficiently small
\begin{equation}\label{eq:yqbfirst}
	\Pr { Y_{q,\beta} >0} \le \E{X_{q,\beta} ~|~ \cal A} \cdot \binom{n_2}{\beta n_2} \cdot (2^k-k-1)^{m_2- \beta n_2 } \cdot e^{-km_2\cdot H(q) + O(\delta^2n_2)} + \Pr{\overline{\cal A}}.
\end{equation}
The expectation of $X_{q,\beta}$, conditioned on the event $\cal A$, is determined by calculating the probability that a specific set with $\beta \tilde{N}_2$ vertices  has total degree $qk \tilde{M}_2$. This task was performed in~\cite{fp10a}, see Lemma 4.8 there. It follows that
\[
	\E{X_{q,\beta} ~|~ \cal A} = \exp\left(n_2H(\beta) - n_2(1-\beta)I\left(\frac{k(1-q)}{1-\beta}\right)(1+o(1)) + O(n_2\delta^2)\right),
\]
where 
\begin{equation*}
I(z) = 
\begin{cases} z\left(\ln T_z - \ln \xi \right) - \ln \left(e^{T_z} - T_z -1 \right) + \ln \left(e^{\xi}-\xi -1 \right), & \text{ if $z>2$} \\
\ln 2 - 2\ln \xi + \ln (e^{\xi} - \xi -1),& \text{ if $z=2$}
\end{cases},
\end{equation*}
and $T_z$ is the unique solution of $z={T_z (1-e^{-T_z})\over 1-e^{-T_z} - T_z e^{-T_z}}$. Let
$$f(\beta, q) := 
2~H(\beta) + (1-\beta)\ln (2^k - k -1) -k  H\left(q\right) - (1-\beta) I\left({k(1-q) \over 1-\beta}\right).$$
By using~\eqref{eq:yqbfirst} and~\eqref{eq:n2m2} we infer that
\[
	\Pr { Y_{q,\beta} >0  } \le \exp\left\{n_2\left(f(\beta, q) + e_k\delta\big(kH\left(q\right)- \ln(2^k-k-1)\big) + O(\delta^2)\right)\right\} + n^{-\omega(1)}.
\]
In~\cite{fp10a} the following was shown, see Claim 4.11 there.
\begin{claim}
\label{cl:techn}
There exists a $C > 0$ such that for any small enough $\nu > 0$ the following is true. Let $0.7 \le \beta \le 1-\nu$ and $\beta \leq q \leq 1 - {2(1-\beta)/k}$. Then
\[
	f(\beta, q) \le -C\nu + O(\delta^2).
\]
\end{claim}
We distinguish between the following cases. First, note that if $0.7 \le \beta \le 1 - \sqrt{\delta}$, then the above claim yields for sufficiently small~$\delta > 0$
\[
	\Pr { Y_{q,\beta} >0  } \le e^{n_2(-C\sqrt{\delta} + O(\delta))}  + n^{-\omega(1)} =  n^{-\omega(1)}.
\]
Finally, if $1 - \sqrt{\delta} \le \beta \le 1 - e_k\delta/2$, then the above claim implies that there is a $C' > 0$ such that $f(\beta, q) < C'\delta^2$. Moreover, by the monotonicity of the entropy function and $q \ge \beta$ we have for sufficiently small~$\delta > 0$
\[
	kH\left(q\right) - \ln(2^k-k-1)
	\le kH(0.99) - \ln(2^k-k-1).
\]
A simple calculation and the fact $H(0.99) < 0.06$ show that the above expression is negative for all $k \ge 3$. \end{proof}
This completes the proof of Theorem~\ref{thm:densitycore} for the case $0.7 \le \beta \le 1 - e_k\delta/2$. 
Now if $\beta \geq 1-e_k\delta/2$, then~\eqref{eq:N2M2} together with~\eqref{eq:n2m2} imply that 
for small~$\delta$ all larger subsets have density smaller than~$1-\eps^3$.

In order to cover the remaining cases for $\beta$ we use straightforward first moment arguments. For technical reasons we state our results for the uniform model $H_{n,m,k}$. We begin with the case $k \ge 5$.
\begin{lemma}
\label{lem:kge5spimple}
Let $k \ge 5, c<1$ and $0 < \gamma<0.001$. Then $H_{n,cn,k}$ contains no subset with less than~$0.7n$ vertices and density at least $1-\gamma$ with probability at least $1-n^{-(1-\gamma)k^2+2k+1}$.
\end{lemma}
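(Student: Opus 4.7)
The approach is a first-moment bound. Let $Z_s$ denote the number of subsets $S\subseteq V$ of size $s$ with $e(S)\ge\ell_s:=\lceil(1-\gamma)s\rceil$. Since $\binom{k}{k}=1<\ell_k$ whenever $\gamma<1/k$, the smallest relevant size is $s=k+1$. Using that edges of $H_{n,cn,k}$ are negatively correlated, the probability that a fixed $\ell_s$-set of edges all lie in $H_{n,cn,k}$ is at most $(cn/\binom{n}{k})^{\ell_s}$, so by Markov and the union bound
\[
	\Pr{\exists\,(1-\gamma)\text{-dense } S\text{ with }|S|<0.7n}~\le~\sum_{s=k+1}^{\lfloor 0.7n\rfloor}\binom{n}{s}\binom{\binom{s}{k}}{\ell_s}\left(\frac{cn}{\binom{n}{k}}\right)^{\ell_s}.
\]
I would split the sum into three ranges: \emph{small} $s\in[k+1,C]$ for a large constant $C=C(k)$, \emph{medium} $s\in(C,\alpha_0 n]$ for a small $\alpha_0>0$, and \emph{large} $s\in(\alpha_0 n,0.7n]$.

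The small range is dominated by $s=k+1$. There $\ell_{k+1}=k+1$ forces all $\binom{k+1}{k}=k+1$ possible edges on $S$ to be present, giving $\E{Z_{k+1}}\le\binom{n}{k+1}(cn/\binom{n}{k})^{k+1}=O(n^{-(k+1)(k-2)})$. A direct computation shows $(k+1)(k-2)-[(1-\gamma)k^2-2k-1]=\gamma k^2+k-1\ge 4$ for $k\ge 5$, so this single term already beats the claimed bound by a polynomial factor; successive terms at $s=k+2,\ldots,C$ shrink by further factors of at least $n^{-(k-2)}$. In the medium range, applying $\binom{n}{s}\le(en/s)^s$, $\binom{\binom{s}{k}}{\ell_s}\le(e\binom{s}{k}/\ell_s)^{\ell_s}$, and $\binom{s}{k}\le s^k/k!$ yields $\E{Z_s}\le[B_k(s/n)^\eta]^s$ with $\eta:=(k-1)(1-\gamma)-1\ge 3$ and $B_k$ depending only on $k,c,\gamma$. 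Choosing $\alpha_0$ strictly below $B_k^{-1/\eta}$ makes this exponentially small in $n$ uniformly over the medium range.

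The main obstacle is the large range, where $\binom{n}{s}=e^{nH(\alpha)}$ is exponential in $n$ and the Stirling-based estimate on $\binom{\binom{s}{k}}{\ell_s}$ loses a factor of $e^{\ell_s}$ that is too much: a direct calculation at $\alpha=0.7$, $k=5$, $c=1$ yields a strictly positive exponent ($\approx+0.31$), which is useless. To salvage the argument I would replace the Stirling estimate by the sharper hypergeometric Chernoff upper tail
\[
	\Pr{X_s\ge\ell_s}~\le~\left(\frac{pL}{\ell_s}\right)^{\!\ell_s}\!\left(\frac{(1-p)L}{L-\ell_s}\right)^{\!L-\ell_s},\qquad p=\tfrac{\binom{s}{k}}{\binom{n}{k}},~L=cn,
\]
where $X_s$ counts the edges of $H_{n,cn,k}$ inside a fixed size-$s$ subset. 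Combining with $\binom{n}{s}\le e^{nH(\alpha)}$ and checking that the rate function $H(\alpha)+\tfrac1n\log\Pr{X_s\ge\ell_s}$ is strictly negative throughout $\alpha\in[\alpha_0,0.7]$ is a routine but slightly tedious calculation (a spot-check at $\alpha=0.7$, $k=5$, $c=1$, $\gamma=10^{-3}$ gives $\approx-0.08$, and the rate becomes more negative as $\alpha$ or $c$ decrease). This yields $\E{Z_s}=e^{-\Omega(n)}$ uniformly in the large range. Summing the three ranges, the total is dominated by $\E{Z_{k+1}}=O(n^{-(k+1)(k-2)})$, which is strictly stronger than the claimed $n^{-(1-\gamma)k^2+2k+1}$.
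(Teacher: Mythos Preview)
Your first-moment strategy is correct, and the three-range decomposition can be pushed through, but the paper's argument is considerably simpler because of one different choice at the outset. You bound $\Pr{X_s\ge\ell_s}$ by summing over which $\ell_s$-subset of $\binom{S}{k}$ occurs as edges, giving the factor $\binom{\binom{s}{k}}{\ell_s}$; this is what blows up in the large-$\alpha$ range and forces you to retreat to the hypergeometric Chernoff bound. The paper instead sums over which $\ell_s$ of the $cn$ \emph{actual} edges land inside $S$, giving
\[
\Pr{X_s\ge\ell_s}\;\le\;\binom{cn}{\ell_s}\Bigl(\tfrac{s}{n}\Bigr)^{k\ell_s},
\]
and hence a single uniform bound $\binom{n}{un}\binom{cn}{(1-\gamma)un}u^{k(1-\gamma)un}\le\exp\bigl(n[2H(u)+(1-\gamma)ku\ln u]\bigr)$. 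Since $\binom{cn}{\ell_s}\le 2^{cn}$ regardless of $s$, the large-range difficulty you encountered never arises. The exponent $\phi(u)=2H(u)+(1-\gamma)ku\ln u$ has a unique interior minimum on $(0,0.7]$, so its maximum is at an endpoint; one checks $\phi(0.7)\le 2H(0.7)+(1-\gamma)\cdot 5\cdot 0.7\ln 0.7<-0.01$ for $k\ge5$, and $\phi(k/n)$ gives exactly the exponent $-(1-\gamma)k^2+2k$ appearing in the statement (the extra $+1$ coming from the sum over $s$). No range splitting, no Chernoff.

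Two minor points about your version. First, your large-range argument is only spot-checked at $\alpha=0.7$; you still owe a verification that the rate stays negative for all $\alpha\in[\alpha_0,0.7]$ and all $c<1$ (in particular you should dispose separately of the trivial case $\ell_s>cn$, where your Chernoff formula is undefined but $Z_s=0$ deterministically). Second, your approach actually yields the stronger exponent $-(k+1)(k-2)$ from the $s=k+1$ term, whereas the paper's uniform bound, evaluated at $s=k$, gives exactly the weaker exponent stated in the lemma; so your route buys a sharper conclusion at the cost of a longer proof.
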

\begin{proof}
The probability that an edge of $H_{n,cn,k}$ is contained completely in a subset $U$ of the vertex set is $\binom{|U|}{k}/\binom{n}{k} \le (\frac{|U|}{n})^k$. Let $\frac{k}n\le u \le 0.7$. Then the probability that there is a set with $un$ vertices and density at least $1-\gamma$ is at most
\[
\begin{split}
	\binom{n}{un} \cdot \binom{cn}{(1-\gamma)un} u^{k\cdot (1-\gamma)un}
	\le e^{n(H(u) + H((1-\gamma)u)+ (1-\gamma)ku\ln u)},
\end{split}
\]
where $H(x) = -x\ln x - (1-x)\ln x$ denotes the entropy function. It can easily be seen that the exponent has a unique minimum with respect to $u$ in $[0, 0.7]$, implying that it is maximized either at $u = {k}/n$ or at $u = 0.7$. Note that
\[
	H(0.7) + H((1-\gamma)0.7) + (1-\gamma)k \, 0.7\ln(0.7)
	\le H(0.7) + H((1-\gamma)0.7) + (1-\gamma)5 \cdot 0.7\ln(0.7) \le -0.01
\]
and that
\[
	H\left(\frac{k}{n}\right) + H\left((1-\gamma)\frac{k}n\right) + (1-\gamma)\frac{k^2}{n}\ln\left(\frac{k}n\right)
	=
	-\frac{((1-\gamma)k^2-(2-\gamma)k)\ln n}{n} + O\left(\frac1n\right).
\]
So, the maximum is obtained at $u = k/n$, and for large $n$ we conclude that the probability that there is a subset with at most $0.7n$ vertices and density at least $1 - \gamma$ is at most
\[
	\sum_{k/n \le u \le 0.7} n^{-(1-\gamma)k^2+2k} \le n^{-(1-\gamma)k^2 + 2k  + 1}.
\]
\end{proof}
The cases $k\in\{3,4\}$ need a separate treatment. There we will exploit Proposition~\ref{prop:Umax1dense}.
\begin{lemma} \label{lem:1dense3}
Let $0<\gamma<0.001$. Let $H$ be a $k$-graph, where $k \in \{3,4\}$ and call a set $U \subset V_H$ \emph{bad} if 
\[
	e_U = \lceil(1-\gamma)|U|\rceil ~\text{ and }~ \forall e\in E_H : |e \cap U| \neq k-1.
\]
Then, for any $c\le 0.95$ and sufficiently large $n$
\[
	\Pr{H_{n,cn,3} \text{ contains a bad subset $U$ with $\le n/2$ vertices} } = o(1).
\]
and for any $c\le 0.98$ and sufficiently large $n$
\[
	\Pr{H_{n,cn,4} \text{ contains a bad subset $U$ with $\le 3n/4$ vertices} } = o(1).
\]
\end{lemma}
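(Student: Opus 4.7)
The plan is a straightforward first moment and union bound argument. For each admissible size $s$ we bound the expected number of bad subsets $U$ with $|U|=s$ and sum over $s$. Fix such a $U$, write $u=s/n$, and classify each of the $cn$ edges by its intersection with $U$. Let
\[
p_A=\binom{s}{k}\Big/\binom{n}{k}\sim u^{k}, \qquad p_B=\binom{s}{k-1}(n-s)\Big/\binom{n}{k}\sim ku^{k-1}(1-u)
\]
denote the probabilities that a uniformly random $k$-edge lies entirely inside $U$, respectively intersects $U$ in exactly $k-1$ vertices. Then $U$ can be bad only if at least $\lceil(1-\gamma)s\rceil$ edges lie inside $U$ and no edge meets $U$ in exactly $k-1$ vertices. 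Passing from $H_{n,cn,k}$ to $H_{n,p,k}$ with $p=ck/\binom{n-1}{k-1}$ via Proposition~\ref{prop:ModelsEquiv} (at a cost of a $\sqrt n$ factor) renders the edges independent, and thus
\[
\Pr{U\text{ is bad}} \le \sum_{a\ge(1-\gamma)s}\binom{cn}{a}p_A^{a}(1-p_A-p_B)^{cn-a},
\]
which is dominated by its leading term at $a=\lceil(1-\gamma)s\rceil$ up to a polynomial factor, since $cnp_A=\Theta(nu^{k})\ll(1-\gamma)s$ throughout the range.

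The decisive feature of this bound is the factor $(1-p_A-p_B)^{cn-a}$ coming from the ban on $(k{-}1)$-intersections: without it, the simple computation used for $k\ge 5$ in Lemma~\ref{lem:kge5spimple} would not close, because the factor $p_A^{a}$ alone cannot compensate the set-counting factor $\binom{n}{s}\binom{cn}{(1-\gamma)s}$ when $c$ is this close to $1$. Explicitly one has, for $k=3$, $1-p_A-p_B=(1-u)^{2}(1+2u)+O(1/n)$, and an analogous quartic in $(1-u)$ for $k=4$, which provide exactly the additional damping needed. Combining with $\binom{n}{s}\le e^{nH(u)}$ and Stirling yields, after taking logarithms and dividing by $n$, an exponent
\[
\Phi_{k,c}(u)=H(u)+cH\!\left(\tfrac{(1-\gamma)u}{c}\right)+k(1-\gamma)u\ln u+\bigl(c-(1-\gamma)u\bigr)\ln\bigl(1-p_A(u)-p_B(u)\bigr).
\]

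The main technical obstacle is to verify $\Phi_{k,c}(u)<0$ for all $u\in(0,1/2]$ in the case $(k,c)=(3,0.95)$ and for all $u\in(0,3/4]$ in the case $(k,c)=(4,0.98)$; by monotonicity of $\Phi_{k,c}$ in $c$ it suffices to treat these two extremal cases. I would split the verification into two regimes. For $u$ bounded away from $0$, say $u\in[u_{0},1/2]$ or $u\in[u_{0},3/4]$, the function $\Phi_{k,c}$ is smooth on a compact interval with an easily computable Lipschitz constant, so negativity can be certified by evaluating it on a fine enough grid; heuristically $\Phi_{k,c}$ approaches $0$ only near the right endpoint, which is precisely why the lemma cannot be pushed past $1/2$ or $3/4$. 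For small $u$, and in particular for $s=O(1)$, direct estimation is cleaner:
\[
\binom{n}{s}\binom{cn}{(1-\gamma)s}p_A^{(1-\gamma)s}\le\left(\tfrac{en}{s}\right)^{s}\!\left(\tfrac{ec}{(1-\gamma)u}\right)^{(1-\gamma)s}\!u^{k(1-\gamma)s}=O\!\left(n^{-(k(1-\gamma)-1)s}\right),
\]
which for $k\ge 3$ and $\gamma<0.001$ is at most $n^{-(2-o(1))s}$. Summing over $s\ge k$ yields an $O(n^{-4})$ contribution, while the cases $s<k$ are vacuous because no $k$-edge fits inside $U$. Adding the two regimes and summing over $s$ gives the stated $n^{-2}$ probability for both $k\in\{3,4\}$.
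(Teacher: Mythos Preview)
The paper omits this proof entirely, referring to Lemma~4.2 of~\cite{fp10}. Your first-moment argument is the natural one and almost certainly coincides with what is done there: compared with the $k\ge5$ computation of Lemma~\ref{lem:kge5spimple}, the new ingredient is precisely the damping factor coming from the condition $|e\cap U|\neq k-1$, and you isolate this correctly (including the factorization $1-p_A-p_B=(1-u)^2(1+2u)$ for $k=3$). Your evaluation of $\Phi_{3,0.95}$ at the right endpoint $u=1/2$ is indeed barely negative (about $-0.0013$), so the grid verification must be done with some care, but it goes through.

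Two small corrections. First, after invoking Proposition~\ref{prop:ModelsEquiv} to pass to $H_{n,p,k}$ you nevertheless write the multinomial expression $\binom{cn}{a}p_A^{a}(1-p_A-p_B)^{cn-a}$, which belongs to the model of $cn$ i.i.d.\ uniform edges rather than to $H_{n,p,k}$; in $H_{n,p,k}$ the corresponding quantity is $\binom{\binom{s}{k}}{a}p^{a}(1-p)^{\binom{s}{k}-a}\cdot(1-p)^{\binom{s}{k-1}(n-s)}$. The two give the same exponential rate, so your $\Phi_{k,c}$ is correct, but the derivation should stay in one model. Second, in the small-$s$ regime your exponent is off by one factor of $u$: tracking the powers of $u$ in $(e/u)^{s}\,(ec/((1-\gamma)u))^{(1-\gamma)s}\,u^{k(1-\gamma)s}$ gives $u^{s[(k-1)(1-\gamma)-1]}$, so the surviving power of $n$ is $-s\bigl((k-1)(1-\gamma)-1\bigr)$, not $-s\bigl(k(1-\gamma)-1\bigr)$. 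For $k=3$ this is roughly $n^{-s}$ rather than $n^{-2s}$; summing from $s\ge k=3$ still yields $O(n^{-3})$, comfortably inside the target $n^{-2}$, but the stated $O(n^{-4})$ is not what the computation gives.
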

\noindent
The proof is essentially the same as the proof of Lemma~4.3 in \cite{fp10a}, and thus omitted. 
\begin{proof}[Proof of Theorem~\ref{thm:densitycore}]
First of all, let~$k \ge 5$. By applying Lemma~\ref{lem:kge5spimple} we obtain that with high probability,~$H_{n,m,k}$ does not contain a subset of less than $0.7n$ vertices with density at least~$1-\eps^3$. By Proposition~\ref{prop:ModelsEquivalence} this is also true for~$H^*_{n,m,k}$, and in particular also for the core of~$H^*_{n,m,k}$. Concerning larger subsets of the core of~$H^*_{n,m,k}$, by Proposition~\ref{prop:ModelsEquiv} and Theorem~\ref{cor:Contiguity} it suffices to show that the core~$\C = \C(V_\C, E_\C)$ of~$\widetilde{H}_{n,p,k}$, where~$p = ck/\binom{n-1}{k-1}$ and~$m = cn$, contains no subset with more than~$0.7 |V_\C|$ vertices and density at least~$1-\eps^3$ with probability at least~$1 - o(n^{-k/2})$. This follows from Lemma~\ref{lem:coresubsetslarge}, and the proof is completed for~$k \ge 5$.

The cases~$k = 3,4$ require slightly more work. We begin with~$k = 3$. Lemma~\ref{lem:1dense3} and the fact~$c_3^* < 0.95$ guarantee that with high probability the core of~$H_{n,m,k}$ has no subset~$S$ with~$\le n/2$ vertices such that~$e_S = \lceil(1-\eps^3)|S|\rceil$, and there is no edge that contains precisely two vertices in that set. By Proposition~\ref{prop:ModelsEquivalence} this is also true for~$H^*_{n,m,k}$. In particular, by using Proposition~\ref{prop:Umax1dense} we infer that with high probaility the core of~$H^*_{n,m,k}$ does not contain a inclusion maximal subset with at most~$n/2$ vertices and density at least~$1-\eps^3$.  Concerning larger subsets of the core of~$H^*_{n,m,k}$, again by Proposition~\ref{prop:ModelsEquiv} and Theorem~\ref{cor:Contiguity} it suffices to show that the core~$\C = \C(V_\C, E_\C)$ of~$\widetilde{H}_{n,p,k}$, where~$p = ck/\binom{n-1}{k-1}$ and~$m = cn$, contains no subset with more than~$n/2$ vertices and density at least~$1-\eps^3$ with probability at least~$1 - o(n^{-k/2})$. However, by Corollary~\ref{cor:edgesCore} we know that with probability at least $1 - n^{-\omega(1)}$
\[
|V_\C| = (1-e^{-\xi} - \xi e^{-\xi} \pm O(\eps)) n, \text{ where } \xi = 3c(1-e^{-\xi})^2.
\]
Numerical calculations imply that~$|V_\C| \ge 0.63 n$ for any~$\eps>0$ that is small enough.  So,~$\C$ does not contain any inclusion maximal subset with less than~$n/2 \le |V_\C|/(2\cdot0.63) \le 0.77\tilde{N}_2$ vertices an density at least~$1-\eps^3$. This completes together with Lemma~\ref{lem:coresubsetslarge} the proof for~$k = 3$; the case~$k=4$ follows similarly by using the second part of the conclusion of Lemma~\ref{lem:1dense3}, and the fact that~$c_4^* < 0.98$.
\end{proof}

\section{Spanning properties of $H_{n,m,k}^*$} \label{proof:expansion}
\subsection{Proof of Proposition~\ref{prop:expand}}

The proof is similar to that of Lemma 8 in~\cite{inp:fmm09}, but suitably adjusted to our parameters. For ease of notation we write $t=(k-1)s-\delta$; later we will set $\delta = x_ss$ and $\delta=1$, respectively. The expected number of sets in $H_{n,m,k}^*$ containing $s$ edges that span at most $t$ vertices is bounded from above by 
\begin{equation}
\label{eq:tmp1stmom}
\begin{split}
 {m \choose s} {n \choose t} \left({t^k \over n^k} \right)^{s} 
&\le \left({c_k^* n e \over s} \right)^s \left( {n e \over t} \right)^t \left({t \over n} \right)^{ks} = 
n^{-\delta} e^{ks-\delta} (c_k^*s^{-1})^s t^{s+\delta}\\
&\le n^{-\delta}  e^{ks}(c_k^*s^{-1})^s((k-1)s)^{s} (ks)^{\delta}
= \left(\frac{ks}n\right)^\delta \left( c_k^* (k-1)e^k\right)^s. 
\end{split}
\end{equation}
Let $\xi >0$ be such that $(1+\xi)c_k^* = 1$.
To deduce the first claim we observe that for $\delta=x_ss$ the assumption $x_s = {\log_k((k-1)e^k)  / (\log_k(n/s) - 1)}$ implies that
\[
\left(\frac{ks}n\right)^\delta \left(c_k^* (k-1)e^k\right)^s = \left[ \left(\frac{ks}n\right)^{x_s}  \left(c_k^* (k-1)e^k\right)  \right]^s = (1+\xi)^{-s}.
\]
By taking the sum over all $\log\log n< s \le n/k$ we deduce that the probability that there exists a set of $s$ edges of that spans at most $(k-1-x_s)s$ vertices is 
$O\left((1+\xi)^{-{\log  \log n}}\right)= o(1)$.

The proof of the second part follows along the same lines, except that we use slightly cruder upper bounds. In particular, we bound $c_k^* \leq 1$.   
Setting $\delta=1$, we deduce from \eqref{eq:tmp1stmom} that the expected number of sets with $s\leq \log\log n$ edges that span at most $t=(k-1)s-1$ vertices is at most
\[
\frac{k\log\log n}n \cdot \left(ke^k\right)^{\log\log n} = o(1).
\]

\subsection{Proof of Lemma~\ref{lem:growth}}

We will use the following auxiliary fact.
\begin{proposition}\label{prop:as}
For any constants $a,b> 0$ we have that whenever $D=D(a,b)$ is sufficiently large then 
\[
\prod_{i=1}^j \left(1-\frac{a}{ib+D}\right) \ge  j^{-a/b}\cdot (bD)^{-a/b}\cdot e^{-{a^2}/{bD}}\qquad\text{for all $j\ge 2/b$}.
\]
\end{proposition}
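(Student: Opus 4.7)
The plan is to take logarithms on both sides, turning the product into a sum. Writing
\[
L := \sum_{i=1}^j \log\!\left(1-\frac{a}{ib+D}\right),
\]
the goal becomes $L \ge -\frac{a}{b}\log(bDj) - \frac{a^2}{bD}$. I would first choose $D$ large enough (say $D\ge 2a$) that $a/(ib+D) \le 1/2$ for every $i\ge 1$. Then the elementary inequality $\log(1-x) \ge -x - x^2$ on $[0,1/2]$ (an immediate consequence of $-\log(1-x) = \sum_{k\ge 1} x^k/k \le x + x^2/(2(1-x))$) yields
\[
L \;\ge\; -\sum_{i=1}^j \frac{a}{ib+D} \;-\; \sum_{i=1}^j \frac{a^2}{(ib+D)^2}.
\]

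The next step is to bound each sum by the obvious integral, using that $x \mapsto 1/(bx+D)$ is decreasing, so $\frac{a}{ib+D} \le \int_{i-1}^i \frac{a\,dx}{bx+D}$. This gives
\[
\sum_{i=1}^j \frac{a}{ib+D} \;\le\; \int_0^j \frac{a\,dx}{bx+D} \;=\; \frac{a}{b}\log\!\frac{bj+D}{D},
\]
while extending the quadratic sum to infinity produces
\[
\sum_{i=1}^j \frac{a^2}{(ib+D)^2} \;\le\; \int_0^\infty \frac{a^2\,dx}{(bx+D)^2} \;=\; \frac{a^2}{bD},
\]
which is precisely the $e^{-a^2/(bD)}$ factor appearing in the statement.

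It then remains to compare the linear contribution with the target, i.e.\ to show $\frac{a}{b}\log\frac{bj+D}{D} \le \frac{a}{b}\log(bDj)$, or equivalently $bj+D \le bD^{2} j$. This is where the hypotheses $j \ge 2/b$ and ``$D$ sufficiently large'' come in: for $D\ge 2$ and $bj\ge 2$ one has $(bj-1)(D-1)\ge 1$, hence $bj+D \le bjD$, and $bjD \le bD^{2}j$ is trivial. Combining the two integral bounds with this algebraic inequality proves the claim. The argument is essentially a routine calculation; the only thing to be a little careful about is picking the threshold for $D$ so that the Taylor bound and the final inequality $bj+D \le bD^{2}j$ both hold, which is exactly what the qualifier ``$D=D(a,b)$ sufficiently large'' in the statement absorbs.
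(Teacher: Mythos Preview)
Your proof is correct and essentially identical to the paper's: both use $\log(1-x)\ge -x-x^2$ for $x\le 1/2$, bound the two resulting sums by the integrals $\int_0^j \frac{dx}{bx+D}$ and $\int_0^\infty \frac{dx}{(bx+D)^2}$, and finish with the inequality $bj+D\le bjD$ valid for $bj\ge 2$ and $D\ge 2$. The only cosmetic difference is that the paper drops the $-\log D$ term from the first integral before applying $bj+D\le bjD$, whereas you keep it and instead observe the (weaker) $bj+D\le bD^2 j$; both routes yield the same final bound.
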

\begin{proof}
Assume that $D\ge 2$ is large enough so that $\frac{a}{b+D}\le 0.5$. As $1-x \geq e^{-x-x^2}$ for $x \le 0.5$ we thus obtain
\begin{equation} \label{eq:exponent}
 \prod_{i=1}^j \left(1 - \frac{a}{ib+D}\right) \geq \exp \left(- \sum_{i=1}^j {\frac{a}{ib+D}} - 
\sum_{i=1}^j \left(\frac{a}{ib+D} \right)^2 \right).
\end{equation}
Further, we have 
\begin{equation*}
\begin{split} 
& \sum_{i=1}^j  \frac{a}{ib+D} \leq a \int_{0}^{j} {1 \over bx+D }~dx =\frac{a}b \cdot \left( \log(bj+D) - \log(D) \right)
\le \frac{a}b \cdot  \log(bj+D) .
\end{split}
\end{equation*}
Similarly,
\begin{equation*}
\begin{split} 
& \sum_{i=1}^j  \left(\frac{a}{ib+D}\right)^2 \leq a^2 \int_{0}^{j} {1 \over (bx+D)^2 }~dx =\frac{a^2}{b} \cdot \left( \frac1{D}- \frac1{bj+D}\right) \le \frac{a^2}{bD}.
\end{split}
\end{equation*}
Now observe that for $j\ge 2/b$ we have $bj + D \le bjD$ and thus $ \log(bj+D) \le \log(bj) + \log(D)$. The substitution of these two bounds into (\ref{eq:exponent}) thus yields
\[
\prod_{i=1}^j \left(1 - \frac{a}{ib+D}\right) \geq j^{-a/b} \cdot (bD)^{-a/b} \cdot e^{-{a^2}/{bD}}.
\]
\end{proof}
Let $H=H(V,E)$ be a $k$-graph on $n$ vertices having Property $\mathcal{E}$. We also fix a vertex $v \in V$ and an orientation $o$ of the edges, and for all $i\geq 0$ we let $s_i$ be the number of vertices that are within $o$-distance at most $i$ from $v$.
Note that $s_i=N_{o,i}(v)$, but we shall be using this symbol throughout this section to avoid an unnecessary notational burden.
If all vertices within $o$-distance at most $i$ from $v$ are occupied, then Property $\mathcal{E}$ implies that
\begin{equation} \label{eq:growth}
s_{i+1} \geq 
\begin{cases} 
(k-1-x_{s_i})s_i, & \mbox{if $s_i >  \log\log n$} \\ 
(k-1)s_i, & \mbox{if $s_i \le \log\log n$}.
 \end{cases}. 
\end{equation}
 
\begin{claim}\label{claim:i0}
Let $i_0 := \min \{ i \; : \; s_i > \log\log n  \}$. Then $i_0 \leq \log_{k-1}\log\log n+1$.
\end{claim}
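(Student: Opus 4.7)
The plan is to argue by straightforward induction on $i$ using only the second case of the recursion~\eqref{eq:growth}. Since we are in the hypothesis of Lemma~\ref{lem:growth}, all vertices within $h$-distance $T$ from $v$ are occupied, and hence the lower bounds in~\eqref{eq:growth} apply as long as $i < T$; this will not be binding because $i_0$ will turn out to be much smaller than $T$.

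First I would observe that $s_0 = 1$ (the $0$th $h$-neighborhood of $v$ is $\{v\}$). As long as $i < i_0$, by definition $s_i \leq \log\log n$, so the second case of~\eqref{eq:growth} gives $s_{i+1} \geq (k-1)s_i$. Iterating this from $s_0 = 1$ yields $s_i \geq (k-1)^i$ for every $i \leq i_0$. In particular, taking $i = i_0 - 1$, we have $(k-1)^{i_0 - 1} \leq s_{i_0 - 1} \leq \log\log n$, which rearranges to
\[
i_0 \leq \log_{k-1} \log\log n + 1,
\]
as claimed.

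The only potential subtlety is to make sure the relevant case of~\eqref{eq:growth} is applicable for all $i < i_0$, which follows from the definition of $i_0$ itself, and that $i_0$ is well defined and at most $T$, which is guaranteed because $(k-1)^T \gg \log\log n$ so the iteration cannot stall inside the occupied region. There is no real obstacle here — the statement is essentially a one-line consequence of the second bullet of Property~$\mathcal{E}$ applied $O(\log\log\log n)$ times.
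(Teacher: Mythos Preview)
Your proof is correct and follows essentially the same approach as the paper: both iterate the second case of~\eqref{eq:growth} to obtain $s_i \ge (k-1)^i$ for $i < i_0$, and then read off the bound on $i_0$. The paper's version is terser (it starts the induction at $s_1$ rather than $s_0$), but the content is identical.
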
 
\begin{proof}
Observe that for all $2\leq i< i_0$, we have 
$ s_i \geq (k-1)s_{i-1}   \ge \ldots \geq (k-1)^{i-1} s_1 = (k-1)^i$,
and the claim follows.
\end{proof}
%

\begin{claim}\label{claim:lowerbound_st}
Let $t_\eps := \lfloor(1-\eps)\log_{k-1}n\rfloor$. Then there exists a $d_k>0$ such that whenever $\eps>0$ is sufficiently small and
$n$ is sufficiently large we have
\[
s_{t_\eps} \ge n^{1-\eps}\cdot e^{-d_k/\eps}.
\]
\end{claim}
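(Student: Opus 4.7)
The plan is to iterate the multiplicative lower bound in (\ref{eq:growth}) from index $i = i_0$ up to $i = t_\eps - 1$, under the working hypothesis that all vertices within $h$-distance $t_\eps$ of $v$ are occupied (so the ``large $s_i$'' branch of (\ref{eq:growth}) is applicable throughout). By Claim~\ref{claim:i0} and its proof, $s_{i_0} \ge (k-1)^{i_0}$, and combined with $(k-1)^{t_\eps} \ge n^{1-\eps}/(k-1)$ this yields
\[
s_{t_\eps} \ge (k-1)^{t_\eps - i_0}\, s_{i_0} \prod_{i=i_0}^{t_\eps - 1}\left(1 - \frac{x_{s_i}}{k-1}\right) \ge \frac{n^{1-\eps}}{k-1} \prod_{i=i_0}^{t_\eps - 1}\left(1 - \frac{x_{s_i}}{k-1}\right),
\]
so it suffices to bound the product from below by $e^{-d_k/\eps}$ times a constant.

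For this, I would use the trivial upper bound $s_i \le (k-1)^{i+1}$ to estimate $x_{s_i} \le a/D_i$, where $a := \log_k((k-1)e^k)$, $T := \log_k n$, and $D_i := T - (i+1)\log_k(k-1) - 1$. Reindexing with $\ell = t_\eps - i$ and using $t_\eps \log_k(k-1) \le (1-\eps) T$, one checks $D_i \ge \eps T + (\ell - 1)\log_k(k-1) - O(1)$. An integral comparison then yields
\[
\sum_{i=i_0}^{t_\eps - 1}\frac{x_{s_i}}{k-1} \le \frac{a/(k-1)}{\log_k(k-1)} \cdot \log\frac{D_{i_0}}{D_{t_\eps - 1}} + O(1) \le c \log(1/\eps) + O(1),
\]
since $D_{i_0}/D_{t_\eps - 1} = \Theta(1/\eps)$ and $c = \frac{a/(k-1)}{\log_k(k-1)}$ by the definition in Theorem~\ref{thm:insertion}. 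An analogous estimate gives $\sum(x_{s_i}/(k-1))^2 = O(1/\eps)$. Using $\log(1-x) \ge -x-x^2$ for small $x$, we conclude
\[
\log\prod_{i=i_0}^{t_\eps - 1}\left(1 - \frac{x_{s_i}}{k-1}\right) \ge -c\log(1/\eps) - O(1/\eps),
\]
so the product is at least $\eps^c \cdot e^{-O(1/\eps)} \ge e^{-d_k/\eps}$ for a sufficiently large constant $d_k = d_k(k)$.

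The main obstacle is that a direct invocation of Proposition~\ref{prop:as} is not strong enough: it would produce a spurious $(\log n)^{-2c}$ factor, because in its proof $\log(bj + D)$ is bounded by $\log(bj) + \log D$, which is wasteful when $bj$ and $D$ are of comparable size. In our setting $bj = \Theta(\log n) = \Theta(D/\eps)$, so the sharper estimate $\log(bj+D) - \log D = \log(1 + bj/D) = O(\log(1/\eps))$ is essential to obtain a loss that is independent of $n$. Once this refinement is in place, the rest of the argument is a routine bookkeeping of the lower-order terms.
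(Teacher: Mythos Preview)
Your argument is correct, but it is more elaborate than necessary. The paper's proof uses a single uniform bound rather than your index-dependent estimate: since $s_i \le (k-1)^{i+1} \le (k-1)^{t_\eps+1} \le (k-1)n^{1-\eps}$ for all $i \le t_\eps$, one has
\[
x_{s_i} \le \frac{\log_k((k-1)e^k)}{\eps \log_k n - \log_k(k-1) - 1} \le \frac{2\log_k((k-1)e^k)}{\eps \log_k n}
\]
uniformly in $i \in [i_0, t_\eps]$. Iterating (\ref{eq:growth}) with this constant factor $\phi := 1 - \frac{2\log_k((k-1)e^k)}{(k-1)\eps\log_k n}$ over at most $\log_{k-1} n$ steps immediately yields $\phi^{\log_{k-1} n} \ge e^{-O(1/\eps)}$, and hence $s_{t_\eps} \ge \frac{n^{1-\eps}}{k-1}\,e^{-O(1/\eps)}$. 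No integral comparison or telescoping is needed, and Proposition~\ref{prop:as} plays no role in this claim at all.

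Your route --- bounding $x_{s_i}$ by $a/D_i$ with $D_i$ varying in $i$, then summing via an integral --- recovers the same $e^{-O(1/\eps)}$ loss (your $\eps^c$ factor is absorbed since $\eps^c \ge e^{-c/\eps}$ for small $\eps$), and your remark that a blind invocation of Proposition~\ref{prop:as} would introduce a spurious $(\log n)^{-O(1)}$ factor is accurate. But the paper sidesteps this entirely by not tracking the $i$-dependence: the point is that stopping at $t_\eps$ rather than at $\log_{k-1} n$ keeps all $s_i$ a factor $n^\eps$ below $n$, which is exactly what makes the uniform bound on $x_{s_i}$ of order $1/(\eps\log n)$ and the product controllable in one line.
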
 
\begin{proof}
Observe that by (\ref{eq:tlevel}) we have $s_{t_\eps}\leq (k-1)^{t_\eps+1}\le (k-1)n^{1-\eps}$. Let $i_0$ be defined as in the previous claim. Hence, we have for all $i_0 \leq i \leq t_\eps$ for any sufficiently large $n$ that
\[
	x_{s_i} \leq {\log_k ((k-1)e^k) \over \eps\log_k n -\log_k(k-1) -1}
	\leq {2\log ((k-1)e^k) \over \eps\log n }.
\]
Thus, for all such $i$ the first part of (\ref{eq:growth}) yields
$$ s_{i+1} \geq (k-1 - x_{s_i})s_i \geq (k-1) 
\Big( 1 - {2\log ((k-1)e^k) \over (k-1)\eps \log n} \Big) s_i.$$
Set $\phi = \phi(\eps, k ,n) = 1 - {2\log ((k-1)e^k) \over (k-1)\eps \log n}$. By applying the above estimate repeatedly and using Claim~\ref{claim:i0} we obtain
\begin{equation*}
\begin{split}
s_{t_\eps}& \geq (k-1)^{t_\eps-i_0}  \phi^{t_\eps-i_0} s_{i_0}
\geq (k-1)^{t_\eps-\log_{k-1}\log\log n} \phi^{\log_{k-1} n} \cdot \log\log n 
\ge \frac{n^{1-\eps}}{k-1} \cdot \phi^{\log_{k-1} n}.
\end{split}
\end{equation*}
Note that
\[
	\phi^{\log_{k-1}n}
	\ge \exp\left\{-\frac{2\log((k-1)e^k)}{\eps (k-1) \log n} \log_{k-1}n + o(1)\right\}
	= \exp\left\{-\frac{2\log((k-1)e^k)}{\eps (k-1) \log(k-1)} + o(1)\right\}.
\]
Since $\frac{\log((k-1)e^k)}{(k-1) \log(k-1)}>0$ for any $k \ge 3$, the claim follows whenever $\eps$ is sufficiently small.
\end{proof}
 
\begin{claim}\label{claim:upperbound:st}
Let $t\geq i_0$, where $i_0$ is as defined in Claim~\ref{claim:i0}. For every $\eps > 0$ sufficiently small, if $s_{t} \leq \eps n$, then for all $0\leq i \leq t-i_0$, we have 
$$ x_{s_{t-i}} \leq  {\log_k ((k-1)e^k) \over i \log_k(k-1-\gamma) + \log_k(1/\eps)-1} \le \gamma,$$
where $\gamma =  {\log_k ((k-1)e^k) \over \log_k(1/\eps) -1}$. 
\end{claim}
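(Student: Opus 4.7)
The plan is to prove the claim by induction on $i$, exploiting the monotonicity of $s \mapsto x_s$. Since
$x_s = \frac{\log_k((k-1)e^k)}{\log_k(n/s) - 1}$
is an increasing function of $s$ (smaller $n/s$ makes the denominator smaller), it suffices to derive an upper bound on $s_{t-i}$ of the form $\eps n/(k-1-\gamma)^i$ and then plug it into the definition of $x_{s_{t-i}}$. The base case $i=0$ is immediate: the hypothesis $s_t \le \eps n$ yields $\log_k(n/s_t) \ge \log_k(1/\eps)$, hence $x_{s_t} \le \log_k((k-1)e^k)/(\log_k(1/\eps)-1) = \gamma$, which is exactly the claimed bound at $i=0$.

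For the inductive step, assume the claim holds for all $0 \le j \le i-1$, so in particular $x_{s_{t-j}} \le \gamma$ for those $j$. Since $i \le t - i_0$ we have $t-j \ge t-(i-1) > i_0$, so $s_{t-j} > \log\log n$, and the first branch of~\eqref{eq:growth} applies at each index $t-j$, $j=1,\ldots,i$. Iterating $s_{r+1} \ge (k-1-x_{s_r})s_r \ge (k-1-\gamma)s_r$ from $r = t-i$ up to $r = t-1$ yields
\[
s_t \ge (k-1-\gamma)^i \, s_{t-i}, \qquad\text{hence}\qquad s_{t-i} \le \frac{\eps n}{(k-1-\gamma)^i}.
\]
Consequently $\log_k(n/s_{t-i}) \ge i\log_k(k-1-\gamma) + \log_k(1/\eps)$, and substituting into the definition of $x_{s_{t-i}}$ gives
\[
x_{s_{t-i}} \le \frac{\log_k((k-1)e^k)}{i\log_k(k-1-\gamma) + \log_k(1/\eps) - 1},
\]
which is the first inequality claimed. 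The second, $\le \gamma$, follows at once because the extra summand $i\log_k(k-1-\gamma)$ in the denominator is nonnegative as long as $\gamma < k-2$, which holds for small enough $\eps$.

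The calculation itself is routine; the only bookkeeping to be careful about is verifying that $\eps$ is small enough for everything to make sense. Concretely: we need $\gamma < k-1$ (so that $k-1-\gamma > 0$, giving a genuine contraction when walking backwards), $\log_k(1/\eps) > 1$ (so the denominator in $\gamma$ is positive), and the monotonicity-in-$i$ of the right-hand side requires $\log_k(k-1-\gamma) \ge 0$, i.e., $\gamma \le k-2$. All three conditions are satisfied by taking $\eps$ sufficiently small so that $\gamma = \log_k((k-1)e^k)/(\log_k(1/\eps)-1)$ is small (say $\gamma < 1$); this is the only place smallness of $\eps$ enters, and it is benign. The main conceptual point of the claim is the backward-induction idea: even though Property~$\mathcal{E}$ only gives a forward growth estimate, turning it around shows that if the final neighborhood $s_t$ is still at most $\eps n$, then every earlier neighborhood must have been even smaller by a factor $\ge (k-1-\gamma)$, which in turn keeps $x_{s_{t-i}}$ small throughout and justifies the use of the bound $k-1-\gamma$ at every earlier step, closing the induction.
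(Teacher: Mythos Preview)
Your proof is correct and follows essentially the same inductive approach as the paper: bound $s_{t-i}$ by iterating the growth inequality backward and then substitute into the definition of $x_{s_{t-i}}$. One tiny bookkeeping slip: your iteration starting at $r=t-i$ invokes $x_{s_{t-i}}\le\gamma$, which is the case $j=i$ not yet covered by the inductive hypothesis, but this follows immediately from $s_{t-i}\le s_t\le\eps n$ and the monotonicity of $s\mapsto x_s$ you already noted (the paper's own write-up glosses over the same point).
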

\begin{proof}
We will show the statement by induction on $i$. For $i=0$ this is obtained directly from the definition 
of $x_{s_t}$: 
\begin{equation*} 
\begin{split}
x_{s_t} = {\log_k ((k-1)e^k) \over \log_k(n/s_t)-1}  \leq {\log_k ((k-1)e^k) \over \log_k(1/\eps) -1} =\gamma. 
\end{split}
\end{equation*}
Using~\eqref{eq:growth}, for the induction step we have 
\begin{equation*}
\begin{split}
s_{t-(i+1)} \leq {s_{t-i} \over k-1 - x_{s_{t-(i+1)}}} \leq {s_{t-i} \over k-1-\gamma} \leq {s_t \over (k-1-\gamma)^{i+1}} 
\leq {\eps n \over (k-1-\gamma)^{i+1}}. 
\end{split}
\end{equation*}
Thus the definition of $x_{s_{t-(i+1)}}$ yields
\begin{equation*} 
x_{s_{t-(i+1)}} = {\log_k ((k-1)e^k) \over \log_k (n/s_{t-(i+1)})-1} \leq {\log_k ((k-1)e^k)  \over (i+1) \log_k (k-1-\gamma) 
+ \log_k (1/\eps)-1} < \gamma.
\end{equation*}
\end{proof}
The next claim finishes the proof of Lemma~\ref{lem:growth}.
\begin{claim}\label{claim:lemma}
For every $k\geq 3$ and $\zeta>0$ there exists $\eps_0 = \eps_0(\zeta, k)>0$ such that for all $0 < \eps < \eps_0$ and $n$ sufficiently large the following is true. If all vertices within $o$-distance $T:=\log_{k-1}n + \left( {k+\log(k-1) \over (k-1) \log (k-1)} + \zeta \right) \log_{k-1} \log_{k-1} n$ 
from $v$ are occupied, then $s_T > \eps n$. 
\end{claim}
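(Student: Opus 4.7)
The plan is to prove the contrapositive: assuming $s_T \leq \eps n$, I would derive that $s_T$ must in fact be much larger than $\eps n$, yielding a contradiction. The starting point is that, under this assumption, $s_i \le s_T \le \eps n$ for every $t_\eps \le i \le T$, where $t_\eps := \lfloor(1-\eps)\log_{k-1}n\rfloor$ as in Claim~\ref{claim:lowerbound_st}. Hence Claim~\ref{claim:upperbound:st} applies at time $T$ and yields, with $a = \log_k((k-1)e^k)$, $b = \log_k(k-1-\gamma)$, $D = \log_k(1/\eps)-1$ and $\gamma = a/D$ (small when $\eps$ is small), the uniform bound
\[
	x_{s_{T-j}} \;\le\; \frac{a}{jb + D}, \qquad 1 \le j \le T - t_\eps.
\]

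Iterating the growth inequality $s_{i+1} \ge (k-1-x_{s_i})s_i$ from $i = t_\eps$ up to $i = T-1$ and factoring $(k-1)$ out of each factor then gives
\[
	s_T \;\ge\; s_{t_\eps}\cdot (k-1)^{T-t_\eps}\cdot \prod_{j=1}^{T-t_\eps}\left(1 - \frac{a/(k-1)}{jb + D}\right).
\]
The product is ready-made for Proposition~\ref{prop:as} (with $a/(k-1)$ playing the role of $a$), which produces a lower bound of the form $(T - t_\eps)^{-c'} \cdot \kappa(\eps,k)$, where
\[
	c' \;=\; \frac{a/(k-1)}{b} \;=\; \frac{\log_k((k-1)e^k)}{(k-1)\log_k(k-1-\gamma)}
\]
and $\kappa(\eps,k)$ depends only on $\eps$ and $k$. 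The crucial observation is that $c'\to c$ from above as $\gamma\to 0$, where $c$ is the constant in Theorem~\ref{thm:insertion}.

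To finish, I would substitute Claim~\ref{claim:lowerbound_st} ($s_{t_\eps} \ge n^{1-\eps}e^{-d_k/\eps}$), the bound $(k-1)^{T - t_\eps} \ge n^{\eps}(\log_{k-1}n)^{c+\zeta}$ (which follows directly from the definitions of $T$ and $t_\eps$), and the estimate $T - t_\eps \le 2\eps \log_{k-1} n$ valid for $n$ large. This yields
\[
	s_T \;\ge\; n\cdot e^{-d_k/\eps}\cdot \kappa(\eps,k)\cdot(2\eps)^{-c'}\cdot (\log_{k-1}n)^{(c+\zeta)- c'}.
\]
Since $c'\to c$ as $\eps\to 0$, for $\eps$ sufficiently small (depending on $\zeta$ and $k$) we have $(c+\zeta)-c'\ge\zeta/2>0$, so the right-hand side exceeds $n$ for all large $n$, contradicting $s_T\le\eps n$. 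The main obstacle is precisely the bookkeeping of exponents of $\log n$: the extra factor $(\log_{k-1}n)^{c+\zeta}$ gained from the definition of $T$ must absorb both the logarithmic loss $(T-t_\eps)^{-c'}$ from the product and the small discrepancy $c'-c$ induced by the $\gamma$-correction in~$b$. This is exactly why the constant~$c$ in Theorem~\ref{thm:insertion} takes the specific form $\log_k((k-1)e^k)/((k-1)\log_k(k-1))$ and why the slack~$\zeta$ is built into its statement.
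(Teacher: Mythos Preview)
Your argument is correct and uses exactly the same ingredients as the paper's proof: Claim~\ref{claim:lowerbound_st}, Claim~\ref{claim:upperbound:st}, and Proposition~\ref{prop:as}, combined via iteration of the growth inequality~\eqref{eq:growth}. The only difference is the direction of iteration. The paper runs the recursion \emph{backwards} from $T$: assuming $s_T\le\eps n$, it derives the upper bound
\[
s_{T-j}\le \frac{\eps n}{(k-1)^j}\cdot (\log_{k-1}n)^{c'}\cdot C_{\eps,k}
\]
and then chooses $j$ so that $T-j\ge t_\eps$ while the right-hand side drops below the lower bound $n^{1-\eps}e^{-d_k/\eps}$ from Claim~\ref{claim:lowerbound_st}. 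You instead run the recursion \emph{forwards} from $t_\eps$ to $T$, using the same bound on $x_{s_{T-j}}$ to lower-bound $s_T$ directly, and reach the contradiction at the top rather than the bottom. The two arrangements are algebraically equivalent rearrangements of the same chain of inequalities; neither buys anything the other does not.
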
   
\begin{proof} 
We prove the claim by contradiction. Assume that $s_T \le \eps n$. Claim~\ref{claim:upperbound:st} implies that 
\[
x_{s_{T-i}} \le  {\log_k ((k-1)e^k) \over i \log_k(k-1-\gamma) + \log_k(1/\eps)-1} 
\]
for all $0\le i \le T-i_0$ and the first part of (\ref{eq:growth}) thus implies that
\begin{equation} \label{eq:recursion}
\begin{split}
s_{T-j} \leq {\eps n \over \prod_{i=1}^j \left(k-1 - x_{s_{T-i}} \right)} \le {\eps n \over (k-1)^j}~ {1\over \prod_{i=1}^j 
\left(1 - {\log_k ((k-1)e^k) / (k-1) \over i \log_k(k-1-\gamma) + \log_k(1/\eps)-1} \right)}. 
\end{split}
\end{equation}
for all $0\le j \le T-i_0$. We now apply Proposition~\ref{prop:as} for $a = \log_k ((k-1)e^k) / (k-1)$ and $b=\log_k(k-1-\gamma)$. Then Proposition~\ref{prop:as} implies such that whenever $\eps$ is sufficiently small (such that 
$\log_k(1/\eps)-1 \ge D$, where $D=D(a,b)$ is as defined in Proposition~\ref{prop:as}), then 
\begin{equation} \label{eq:exponent1}
 \prod_{i=1}^j \left(1 - {\log_k ((k-1)e^k) / (k-1) \over i \log_k(k-1-\gamma) + \log_k(1/\eps)-1} \right) \geq 
 j^{-{\log_k ((k-1)e^k)  \over (k-1)\log_k(k-1-\gamma)}}\cdot \frac1{C_{\eps,k}},
\end{equation}
where $C_{\eps,k}$ is an appropriately defined constant depending only on $\eps$ and $k$. Then substituting the lower bound from (\ref{eq:exponent1}) into (\ref{eq:recursion}) we obtain that for all $j \leq \log_{k-1} n$ we have
\begin{equation}  \label{eq:recbound} 
s_{T-j} \leq {\eps n \over (k-1)^j}~ (\log_{k-1} n)^{{\log_k ((k-1)e^k) \over (k-1)\log_k(k-1-\gamma)}} C_{\eps,k}.
\end{equation} 

If we now set $R_{\eps,k} := C_{\eps,k}\cdot e^{d_k/\eps}$, where $d_k$ is the constant from Claim~\ref{claim:lowerbound_st}, we deduce that for $j:= \eps \log_{k-1} n + {\log_k ((k-1)e^k)  \over (k-1)\log_k(k-1-\gamma)}\log_{k-1} \log_{k-1} n + \log_{k-1} R_{\eps,k}$ we have
\begin{equation} \label{eq:FinalUpperBound}
s_{T-j} \leq \eps  e^{-{d_k/\eps}} n^{1-\eps}.
\end{equation}

If $\eps$ is small enough, then in turn $\gamma$ is small enough so that for $n$ sufficiently large 
$T-j \ge \lfloor (1-\eps)\log_{k-1}n\rfloor$; this, however, contradicts the lower bound from Claim~\ref{claim:lowerbound_st}. 
\end{proof}

\subsection{Proof of Corollary~\ref{lem:expansion}}
The hypergraph $H'$ still has Property~$\mathcal{E}$ with $n$ instead of 
$N=|V(H')| $. Using this, the proof of Corollary~\ref{lem:expansion} follows exactly along the 
lines of the proof of Lemma~\ref{lem:growth}.

\bibliographystyle{plain}

\end{document}